\documentclass[11pt,letterpaper]{article}
\usepackage[T1]{fontenc}
\usepackage[utf8]{inputenc}
\usepackage{lmodern,microtype}
\usepackage[margin=1in]{geometry}
\usepackage{amsmath,amssymb,amsthm,mathtools}
\usepackage{booktabs,array,enumitem,algorithm,aliascnt,xcolor,listings}
\newcommand{\revise}[2]{#2}
\usepackage[backend=biber,style=alphabetic,sorting=nyt,sortcites=true,
  maxbibnames=99,maxalphanames=4,giveninits=true,doi=true,url=false,
  isbn=false,eprint=true]{biblatex}
\usepackage[hidelinks]{hyperref}
\usepackage[nameinlink,noabbrev,capitalise]{cleveref}
\allowdisplaybreaks[2]
\setlist{nosep}
\numberwithin{equation}{section}
\newtheorem{theorem}{Theorem}[section]
\newaliascnt{lemma}{theorem}
\newtheorem{lemma}[lemma]{Lemma}
\aliascntresetthe{lemma}
\newaliascnt{proposition}{theorem}
\newtheorem{proposition}[proposition]{Proposition}
\aliascntresetthe{proposition}
\newaliascnt{corollary}{theorem}

\aliascntresetthe{corollary}
\theoremstyle{definition}
\newaliascnt{definition}{theorem}

\aliascntresetthe{definition}
\theoremstyle{remark}
\newaliascnt{remark}{theorem}

\aliascntresetthe{remark}
\crefname{theorem}{Theorem}{Theorems}
\crefname{lemma}{Lemma}{Lemmas}
\crefname{proposition}{Proposition}{Propositions}
\crefname{corollary}{Corollary}{Corollaries}
\crefname{definition}{Definition}{Definitions}
\crefname{remark}{Remark}{Remarks}
\newcommand{\E}{\mathbb E}
\newcommand{\Prb}{\mathbb P}
\newcommand{\R}{\mathbb R}
\newcommand{\ind}{\mathbf 1}
\newcommand{\dd}{\,\mathrm d}
\newcommand{\SB}{\mathrm{SB}}
\newcommand{\FB}{\mathrm{FB}}
\newcommand{\fl}{\operatorname{fl}}

\newcommand{\rhostar}{\rho_{\mathrm U}}
\DeclareMathOperator{\diag}{diag}

\hypersetup{pdftitle={Sharp Second-Best Welfare in Bilateral and Matching Markets},
  pdfauthor={Zhengyang Liu, Ying Qin, Zihe Wang},pdfsubject={Welfare, Bayesian mechanism design, bilateral trade}}
\title{Sharp Second-Best Welfare in\\ Bilateral and Matching Markets}
\author{%
Zhengyang Liu\thanks{Beijing Institute of Technology. Email: \texttt{zhengyang@bit.edu.cn}}
\and
Ying Qin\thanks{Renmin University of China. Email: \texttt{qinying0420@ruc.edu.cn}}
\and
Zihe Wang\thanks{Renmin University of China. Email: \texttt{wang.zihe@ruc.edu.cn}}
}
\date{}

\begin{document}
\maketitle
\begin{abstract}
How much social welfare must a market lose because values are private?
We determine the sharp ratio of second-best to first-best welfare under
independent nonnegative types, Bayesian incentive compatibility, interim
individual rationality, and no expected budget deficit.
The ratio is approximately $0.8882516903$ for arbitrary priors and
$0.9113893681$ when buyers have monotone hazard rates and sellers are
unrestricted. The arbitrary-prior guarantee holds for every downward-closed
family of feasible matchings; the MHR guarantee holds when every matching
of a compatibility graph is feasible. Both constants are \revise{attained}{sharp} already
in bilateral trade.
The proof keeps the sellers' initial endowment inside the budget
Lagrangian. For arbitrary priors, a common threshold decomposition reduces
the problem to three-parameter power-law distributions. For monotone
hazards, common transformations of buyer and seller scores reduce it to
shifted capped exponential buyers; a typewise allocation certificate and
an explicit worst-case seller satisfy the same boundary equation.
The bilateral-to-matching framework, with a welfare endowment charge
and an MHR-preserving packing argument, transfers these affine inequalities
without loss. We give exact variational characterizations of both
constants and reproducible interval certificates for their numerical
evaluation. Unrestricted signed transfers also permit pointwise strong
budget balance.
\end{abstract}

\clearpage
\setcounter{tocdepth}{2}
\begingroup
\small
\tableofcontents
\endgroup
\clearpage

\section{Introduction}\label{sec:introduction}
A seller initially owns an item, and a buyer may value it more.
If the seller's value is $S$ and the buyer's value is $B$, efficient trade
creates gains $(B-S)_+$, where $(z)_+:=\max\{z,0\}$, and total welfare $\max\{S,B\}$.
With private information, the mechanism must elicit these values and
finance the transfer. The Myerson--Satterthwaite theorem shows that
incentive compatibility, individual rationality, and budget balance can
prevent efficient trade even when the two values are independent~\cite{MS83}.
The \emph{second-best} benchmark asks for the highest expected welfare
compatible with these constraints.

Recent work determines the sharp second-best ratio for gains from trade:
it is $1/2$ in bilateral trade~\cite{LQRW26} and in matching
markets~\cite{BLWZ26}. Distributional restrictions can improve this ratio,
and these improvements can also survive competition between
trades~\cite{LQW26}. Total welfare presents a different extremal problem.
Write $A$ for the expected value of the sellers' initial endowment and $G$
for the first-best gains. The first-best and second-best welfare values are
\[
 W_{\FB}=A+G,\qquad W_{\SB}=A+G_{\SB}.
\]
The same mechanism optimizes welfare and gains on a fixed instance.
Nevertheless, their worst-case approximation ratios can have different
extremizers: the endowment is part of the welfare benchmark, and it is
correlated with how much trade must be sacrificed to balance the budget.
A sharp gains bound alone does not identify that relation.

We resolve the welfare problem for arbitrary independent priors and for
buyers with monotone hazard rates (MHR). We also show that neither sharp
ratio decreases when bilateral trade is replaced by a matching market.
The common principle is to retain the endowment throughout the dual
argument. This leads to affine inequalities whose lower certificates and
worst-case distributions can be derived together.

\subsection{Main Results}
A matching market has finitely many buyers and sellers and a fixed
bipartite compatibility graph. Each seller owns one item, each buyer wants
at most one compatible item, and a downward-closed family $\mathcal F$
specifies the feasible matchings. In an ordinary matching market,
$\mathcal F$ contains every matching of the graph. A buyer has one scalar
value for every compatible item. Types are independent and nonnegative.
The designer knows their distributions.

Throughout, a feasible mechanism is Bayesian incentive compatible (BIC),
interim individually rational (IR), and ex ante weakly budget balanced
(WBB). Transfers may be signed and may occur without trade. These
conventions matter: we do not require dominant-strategy incentive
compatibility or ex post individual rationality.

\begin{theorem}[Sharp welfare ratios]\label{thm:main}
For every such matching market with positive finite expected first-best
welfare,
\[
 W_{\SB}\ge \rho_{\mathrm U}W_{\FB}.
\]
If every buyer is MHR and the market is ordinary, then
\[
 W_{\SB}\ge \rho_{\mathrm M}W_{\FB},
\]
with no restriction on seller distributions. The constants are sharp
already for a single buyer and seller, and satisfy
\begin{align}
 .8882516903&<\rho_{\mathrm U}<.8882516904,
 \label{eq:general-bracket}\\
 .911389368124&<\rho_{\mathrm M}<.911389368127.
 \label{eq:sharp-bracket}
\end{align}
Both ratios are attained by bounded bilateral priors when an MHR buyer
may have an atom at its upper endpoint. Requiring MHR buyers to be
atomless leaves the sharp infimum unchanged. The same guarantees and
sharp constants hold under pointwise strong budget balance.
\end{theorem}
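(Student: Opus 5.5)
The plan is to prove the bilateral case first and then lift it to matching markets. Throughout I keep the seller's endowment $A=\E[S]$ inside the objective.

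\textbf{Step 1: dual formulation.} For a single buyer--seller pair I use the standard Myerson reduction. Under BIC and interim IR with signed transfers, ex ante WBB is equivalent to nonnegativity of the expected virtual surplus
\[
 \E\bigl[x(B,S)\,(\phi_B(B)-\tau_S(S))\bigr]\ge 0 ,
\]
where $\phi_B$ is the buyer's (ironed) virtual value and $\tau_S$ is the seller's (ironed) virtual cost. By Lagrangian duality the second-best allocation trades exactly when
\[
 B-S+\lambda(\phi_B(B)-\tau_S(S))\ge 0
\]
for some $\lambda\ge0$. Proving $W_{\SB}\ge\rho W_{\FB}$ then amounts to an affine inequality. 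For every instance I must exhibit some $\lambda$ (or, more simply, an explicit feasible threshold rule) with
\[
 \E[S]+\E\bigl[x(B-S)\bigr]\ge \rho\,\bigl(\E[S]+\E[(B-S)_+]\bigr).
\]
Keeping $\E[S]$ on both sides is essential. It is what lets instances with small gains but a large endowment be treated as easy.

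\textbf{Step 2: arbitrary priors.} I would restrict attention to the one-parameter family of allocation rules obtained by replacing the trade condition with $B\ge S+\text{const}$ in score space. I then decompose both distributions by a common threshold, writing each expectation as a layer-cake integral over thresholds. This shows that both sides of the inequality are linear in the quantile functions once the threshold is fixed. By linearity, the extremal instances are extreme points of the constrained set. These should be three-parameter power-law (equal-revenue-type) distributions: an atom, a power tail, and a cap. The final step is a finite-dimensional optimization, which yields the variational characterization of $\rho_{\mathrm U}$, certified numerically with interval arithmetic. Sharpness follows by plugging the optimizer back in as a bilateral instance.

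\textbf{Step 3: MHR buyers.} Monotone hazard rate is preserved under the transformation to hazard scores, and the cumulative hazard is convex. Applying the same transformation to buyer and seller scores, a Jensen/majorization argument pushes any MHR buyer toward a shifted capped exponential, with an atom at the cap. For that buyer family I write an explicit typewise allocation certificate. I then find a worst-case seller as the solution of the boundary (first-order) equation that the certificate must satisfy with equality. This pins down $\rho_{\mathrm M}$ and its sharpness. Atomless MHR buyers approximate the capped one by steep hazards, so the infimum is unchanged.

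\textbf{Step 4: extension to matching markets.} I charge each seller's endowment to the welfare benchmark and decompose the first-best matching into bilateral pieces. Each pair receives a per-pair Lagrangian bound. For downward-closed $\mathcal F$, the affine inequalities sum because the objective is linear and feasibility of sub-matchings is preserved. For MHR buyers I need a packing argument: a buyer's conditional value restricted to winning a given item must stay MHR, which requires ordinary matchings. Finally, pointwise strong budget balance follows from ex ante WBB by an AGV-type redistribution with signed transfers, which preserves BIC and interim IR.

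I expect Step 3 to be the main obstacle: producing a certificate and a worst-case seller that meet the same boundary equation, so that the constant is tight rather than merely a valid bound.
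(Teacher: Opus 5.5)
Your outline (budget dual with the endowment retained, power-law layers, capped-exponential buyers with a typewise certificate, endowment charges) matches the paper's architecture. However, the mechanism you propose for the bilateral reduction would fail.

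\textbf{Quantifier in Steps 1--2.} Exhibiting one convenient $\lambda$ gives no lower bound. The Lagrangian optimizer at that $\lambda$ need not be budget feasible, and $V_\lambda$ only bounds $G_{\SB}$ from above. Likewise, a fixed one-parameter family of threshold rules is not shown to contain the second best, so it cannot produce a sharp constant. What is needed is $V_\lambda+(1-\rho)A\ge\rho G$ for \emph{every} $\lambda\ge0$, with $V_\lambda$ the optimized Lagrangian, followed by strong duality.

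\textbf{Linearity and the wrong decomposition space.} Neither side is linear in the quantile functions. The gains $\E(B-S)_+$ are convex in the quantile pair, and $V_\lambda$ is (say, for atomless priors) a supremum of functionals linear in that pair. If you split the \emph{value} distributions at common thresholds, $A$ and $G$ split exactly, but $V_\lambda\le\sum_tV_{\lambda,t}$. So the deficit is at most the sum of the layer deficits, which is the wrong direction. Moreover, the extreme points of bounded monotone quantile functions are two-point laws, not power laws.

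The missing idea is to decompose in \emph{score} space. Ironing at fixed $\lambda$ gives monotone scores with $V_\lambda=k^{-1}\E(u-v)_+$, where $k=1/(1+\lambda)$. This is exactly additive over common-threshold layers, since $(u-v)_+=\int\ind_{\{v\le t<u\}}\dd t$. Values are then rebuilt by the positive operator $T_a$ solving $f+aqf'=h$, with $a=\lambda/(1+\lambda)$. The ironing inequalities $\int_0^qu\ge kQ_b(q)+aqb(q)$ and $\int_0^qv\le kQ_s(q)+aqs(q)$ show that the rebuilt instance has the same $V_\lambda$, smaller $A$, and larger $G$. Across layers $G$ is only subadditive, which is now the right direction. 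The power laws with exponent $1/a$ are exactly $T_a$ applied to indicators.

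\textbf{Step 3.} The same caveat applies. Decomposing the buyer alone, or a Jensen step on $(\cdot)_+$, again gives $V\le\sum_HV_H$. Exactness requires common nondecreasing maps $\psi_H$ with $\psi_H(u)=u_H$, applied to the seller score as well, so that all components of $u-v$ share a sign. This must come after first reconstructing the seller and replacing $g$ by its supporting line below the zero of $g-ag'$.

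\textbf{Step 4.} This step omits what makes the lift work. A per-pair bound must be applied to priors conditioned on the other reports and on the edge lying in the first-best matching. The pieces add up to the market Lagrangian only if three things hold. The conditioning events must be product sets, so that independence, the distributional class, and the normalized virtual coefficients survive conditioning. The assembled allocation must have monotone interims, which comes from the fixed-partner property of the first best. And each endowment must be charged at most once.

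For ordinary matching, the events are upper-buyer/lower-seller rectangles intersected with $\{b_i>s_j\}$, and the rectangle edges at any profile form a matching. This yields both MHR preservation and $\sum_e\E[S_{j(e)}\ind_{Q_e}]\le A$. You sense this, but it has to be proved.

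Under a general downward-closed $\mathcal F$ the rectangle property fails, because adding an edge to a matching that avoids its endpoints may be infeasible. The paper instead uses the cap-monotone local rules and edge-by-edge composition of \cite{LQW26}. These require the bilateral bound for every buyer distribution. They are combined with targets $[\beta g_r-\kappa\mu]_+$ that charge the \emph{full} seller mean at each nonempty cap, and with the fact that, for fixed $\theta_{-j}$, seller $j$ has at most one nonempty cap. ``Feasibility of sub-matchings is preserved'' addresses none of this.

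\textbf{Attainment.} Plugging the minimizer back in gives sharpness of the infimum via weak duality. The claimed attainment, however, needs the boundary lottery (feasible when $p^{k/a}+r^{k/a}\le1$) and, in the MHR case, a budget-feasible optimum against the hard seller.
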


The constants are defined exactly by the variational problems in
\eqref{u:eq:rho-star-definition} and \eqref{eq:w-main}; their decimal
values are certified evaluations. Sharpness is over all instances in
each class, rather than for each fixed graph or distribution.
The unrestricted guarantee permits additional downward-closed feasibility
constraints. The MHR guarantee is stated for ordinary matching markets.
Both require scalar buyer values; item-dependent value vectors are
outside the model.

\begin{table}[t]
\centering
\begin{tabular}{lcc}
\toprule
Priors & Bilateral trade & Matching markets\\
\midrule
Arbitrary independent priors & $\rho_{\mathrm U}\simeq .8882516903$
 & $\rho_{\mathrm U}$\\
MHR buyers; arbitrary sellers & $\rho_{\mathrm M}\simeq .9113893681$
 & $\rho_{\mathrm M}$\\
\bottomrule
\end{tabular}
\caption{Sharp second-best/first-best welfare ratios. All entries use
BIC, interim IR, and WBB; unrestricted signed transfers permit pointwise
SBB as well. The unrestricted matching entry allows any downward-closed
feasibility family; the MHR entry concerns ordinary matching markets.}
\label{tab:results}
\end{table}

Write $\mathcal U$ for unrestricted priors and $\mathcal M$ for MHR
buyers with arbitrary sellers. The proof supplies a stronger intermediate
statement. Let $\Lambda(x)$ be
\revise{the largest expected payment surplus compatible with the allocation $x$,
BIC, and interim IR, and let}{the maximum expected payment surplus compatible with the allocation $x$,
BIC, and interim IR, where payment surplus is expected buyer payments minus expected seller receipts. For a budget-surplus Lagrange multiplier $\alpha\ge0$, let}
\[
 V_{\revise{\lambda}{\alpha}}=\sup_x\{G(x)+\revise{\lambda}{\alpha}\Lambda(x)\}.
\]
Budget duality gives $G_{\SB}=\inf_{\revise{\lambda}{\alpha}\ge0}V_{\revise{\lambda}{\alpha}}$ in the bounded
classes used by the proof. Thus a welfare guarantee of $\rho$ is exactly
the family of inequalities
\begin{equation}\label{eq:intro-affine}
 V_{\revise{\lambda}{\alpha}}+(1-\rho)A\ge\rho G
 \quad\text{for every }\revise{\lambda}{\alpha}\ge0.
\end{equation}
We reduce these inequalities without losing either the endowment term or
the sharp constant.

\begin{theorem}[Welfare-preserving affine reductions]\label{thm:affine}
Fix $\revise{\lambda}{\alpha}>0$, let $a=\revise{\lambda}{\alpha}/(1+\revise{\lambda}{\alpha})$ and $k=1-a$, and fix
$\beta,\kappa\ge0$. The inequality
\begin{equation}\label{eq:affine}
 V_{\revise{\lambda}{\alpha}}+\kappa A\ge\beta G
\end{equation}
holds for all bounded bilateral instances in $\mathcal U$ if and only if
it holds for all bounded downward-closed matching markets in $\mathcal U$.
For $\mathcal M$, the same equivalence holds with ordinary matching
markets. Moreover, bilateral verification reduces as follows.
\begin{enumerate}[label=(\roman*)]
\item For arbitrary priors, it is equivalent to
$pr/k+\kappa A_a(p)\ge\beta I_a(p,r)$ for every $p,r\in(0,1)$,
where $A_a,I_a$ are defined in \eqref{u:eq:A}--\eqref{u:eq:I}. In the normalized power-law layer introduced below, $p$ is the seller's atom at $S=0$ and $r$ is the buyer's atom at $B=1$.
\item For MHR buyers, it suffices, and is necessary, to consider arbitrary
bounded sellers and positive scalings of
\begin{equation}\label{eq:canonical}
 B=\ell+\min\{E,U-\ell\},\qquad E\sim\operatorname{Exp}(1),
 \qquad 0\le\ell\le a\le U,
\end{equation}
Here $\ell$ and $U$ are the buyer's lower and upper support endpoints, respectively. including boundary limits.
\end{enumerate}
\end{theorem}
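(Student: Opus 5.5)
The proof has three parts: the bilateral-to-matching transfer, the power-law reduction for arbitrary priors, and the exponential reduction for MHR buyers.

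\emph{Transfer to matching markets.} One direction is trivial, since a bilateral instance is a matching market with one edge. For the converse I would fix a matching market with multiplier $\alpha>0$ and write $V_\alpha$ through Myerson virtual values. By the standard revelation and envelope argument, $\Lambda(x)$ equals the expected allocation weighted by buyer virtual values $\phi_i(B_i)$ minus seller virtual costs $\gamma_j(S_j)$. The constant terms, such as the worst-type rents, vanish once the interim IR types are chosen optimally. Then
\[
G+\alpha\Lambda=(1+\alpha)\,\E\sum_{(i,j)}x_{ij}\bigl(kB_i+a\phi_i(B_i)-kS_j-a\gamma_j(S_j)\bigr)
\]
with the ironed versions of $\phi_i$ and $\gamma_j$. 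Hence $V_\alpha$ is the expected value of a pointwise maximum-weight feasible matching, with edge scores given by the difference of a buyer score and a seller score.

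The quantity $G$ is also a maximum-weight matching value, with weights $(B_i-S_j)_+$. I would take the first-best optimal matching $\mu$, which is in $\mathcal F$ pointwise, and use it as a feasible allocation inside $V_\alpha$ after restricting to trades in which the score is positive. This does not give a bilateral decomposition directly, because $\mu$ depends on the whole type profile.

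The device I would use is a welfare endowment charge. Each edge $(i,j)$ receives the conditional trade probability $q_{ij}(b,s)=\Prb[\mu\ni(i,j)\mid B_i=b,S_j=s]$. This defines a bilateral-type instance in which trade is capped by $q_{ij}$. Seller $j$'s endowment $A_j$ is split across its edges in proportion to the trade probabilities, and the untraded mass is charged to a dummy edge. Summing the bilateral inequalities $V^{ij}_\alpha+\kappa A^{ij}\ge\beta G^{ij}$ then yields the market inequality. Downward-closedness is what guarantees that the summed allocation remains feasible.

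For $\mathcal M$, I would need the capped buyer distributions, obtained by mixing with no-trade, to remain MHR. I would handle this with a packing argument that in an ordinary matching market replaces each fractional assignment by a scaled MHR buyer, using that MHR is preserved under positive scaling and under truncation at the top. I expect this step to be the main obstacle, because a capped mixture of an MHR law is generally not MHR. So I would look for a representation in which the buyer of each edge is the original buyer conditioned on a threshold event in its own type, which preserves MHR.

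\emph{Part (i), arbitrary priors.} For bilateral trade I would write the allocation through a common threshold decomposition. The kernel $k(b-s)+a(\phi(b)-\gamma(s))$ is linear in the two quantile functions, and $A$, $G$ and $V_\alpha$ are all integrals of layer-wise quantities over a common threshold $t$. So I would decompose both distributions into layers. Each layer is a two-point or equal-revenue-type distribution; normalized, it becomes a power law with a seller atom $p$ at $0$ and a buyer atom $r$ at $1$. A layer of this kind has zero virtual value except at the atoms, which gives the term $pr/k$. The affine inequality is then superadditive over layers, which reduces it to $pr/k+\kappa A_a(p)\ge\beta I_a(p,r)$. Necessity follows because these layers are themselves instances.

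\emph{Part (ii), MHR buyers.} For fixed seller and $\alpha$, I would show that the worst MHR buyer has a constant hazard on an interval, that is, a shifted capped exponential as in \eqref{eq:canonical}. The argument would be a transformation of the buyer score $kb+a\phi(b)$ in which flattening the hazard rate lowers $V_\alpha$ more than it lowers $\beta G-\kappa A$. This is an exchange argument via concavity of $\log(1-F)$. The constraints $\ell\le a\le U$ come from the sign pattern of the score, and the boundary limits account for the degenerate cases. Necessity again holds because these buyers belong to $\mathcal M$.
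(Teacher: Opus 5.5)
\textbf{Matching transfer.} This step has a genuine gap. The cap $q_{ij}(b,s)=\Prb[\mu\ni(i,j)\mid B_i=b,S_j=s]$ averages over the other agents' types. This causes two failures:
\begin{enumerate}[label=(\alph*)]
\item The capped edge problem is not a bilateral instance in $\mathcal U$, so the hypothesis says nothing about it.
\item The local allocations are not tied to the realized first-best matching. At a fixed profile an agent can be served on several edges, which downward closure cannot repair. Nothing ensures that each agent's summed interim allocation is monotone with the original virtual coefficients.
\end{enumerate}
The paper instead fixes residual types. With $\theta_{-j}$ fixed, the first-best event on an edge is a seller prefix cap $S_j\le s_r$, and the seller has only one possible partner. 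For MHR buyers in ordinary markets, with all other types $\omega$ fixed, the event is $Q_e(\omega)\cap\{b_i>s_j\}$, where the rectangle $Q_e(\omega)$ is upper in $b_i$ and lower in $s_j$; this uses $A_e=A_{00}$, i.e.\ ordinary feasibility.

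Conditioning on an upper buyer set and a lower seller prefix gives a genuine bilateral instance with the same retained virtual coefficients. Your instinct that upper-tail conditioning keeps the buyer MHR is right. The local rules are then assembled by the cap-monotone composition of \cite{LQW26} (Theorems 3.2--3.3), or, for MHR, by the fixed-partner property.

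The endowment is not split in proportion to trade. The bilateral inequality charges the conditional seller mean times the conditioning probability. What you need is that each seller has at most one nonempty cap (resp.\ that rectangle edges form a matching) at every residual profile. Because the composition needs targets increasing in the cap, the charge must be the full mean $\mu$ inside $C_r(b)=[\beta g_r(b)-\kappa\mu]_+$. That target is verified by conditioning on the upper set where the bracket is positive.

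\textbf{Bilateral reductions.} Both miss the step that makes $V_\alpha$ additive. If you layer the value distributions, the normalized scores are linear in the quantiles, but $V_\alpha$ is a supremum over allocations. Hence $V_\alpha\le\int V_t\dd t$, the wrong direction for the superadditivity you assert. Also, $G$ is not an integral of layer quantities; it is only bounded by one.

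The paper's route for (i) is as follows.
\begin{enumerate}[label=(\arabic*)]
\item Iron the scores.
\item Reconstruct values from the ironed scores $u,v$ through the positive operator $T_a$ solving $f+aqf'=h$.
\item Prove $\widetilde A\le A$ and $\widetilde G\ge G$ from the primitive orders and the stop-loss representations.
\item Only then split $u$ and $v$ by layer cake at a common threshold, giving $(u-v)_+=\int\ind_{\{v\le t<u\}}\dd t$ and hence $V_\alpha=\int p_tr_t/k\dd t$ exactly.
\end{enumerate}
The power-law pairs are $T_a$ of score indicators, not value layers.

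The same obstruction defeats your fixed-seller exchange argument in (ii), which as written is only a heuristic. Splitting the buyer into capped-exponential pieces with the seller held fixed again gives $V_\alpha\le\int V_{\alpha,H}\,\mu(\dd H)$. The paper's route is:
\begin{enumerate}[label=(\arabic*)]
\item Reconstruct the seller to have a monotone score.
\item Replace $g$ below the zero crossing $r\le a$ of $u=g-ag'$ by its supporting line. This is where $\ell=a-r\le a\le U$ comes from.
\item Decompose $\widetilde g$ by its curvature measure.
\item Apply the same nondecreasing maps $\psi_H$, with $\psi_H(u)=u_H$, to the seller score as well. Then $(u-v)_+=\int(\psi_H(u)-\psi_H(v))_+\mu(\dd H)$ and $A=\int A_H\dd\mu$ hold exactly.
\end{enumerate}
The seller therefore changes from component to component, which a fixed-seller argument cannot capture.
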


The unrestricted extremizer consists of two power-law distributions
with endpoint atoms. Its second-best mechanism has a simple boundary
lottery. The MHR extremizer is a shifted capped exponential buyer paired
with an explicitly constructed seller. In each case, the same dual
expression proves a universal lower bound and supplies the upper
instance.

\subsection{Technical Overview}
We use $\E$ and $\Prb$ for expectation and probability, respectively, and $\ind_A$ for the indicator of an event $A$.
\paragraph{Score reconstruction.}
At a fixed budget multiplier, ironing expresses the optimized Lagrangian
as $k^{-1}\E(u-v)_+$ for monotone buyer and seller scores $u,v$, where $u$ and $v$ denote the ironed normalized buyer score and seller score, respectively.
Reconstructing values from these scores through a positive linear
operator preserves the dual value, decreases the seller mean, and
increases efficient gains. A common threshold decomposition then splits
the two scores simultaneously. The seller mean and dual value add
exactly across layers, whereas efficient gains are subadditive. Every
layer is a power-law pair described by $(a,p,r)$. This proves an exact
three-dimensional characterization of the unrestricted constant.

\paragraph{MHR decomposition.}
The power-law decomposition does not preserve MHR. Instead write an MHR
buyer as $B=g(E)$ with $g$ increasing and concave. Its curvature measure
decomposes it into shifted capped exponential components. Decomposing
only the buyer gives the wrong direction for the positive-part dual
functional. We therefore apply the same monotone score transformation
to the seller in every component. This restores exact additivity of
the dual and the seller mean, while preserving the needed inequality
for efficient gains.

\paragraph{Typewise certificates.}
For a \revise{canonical}{normalized capped-exponential} MHR buyer, serve the highest buyer types with probability
$p(s)$ at seller cost $s$. Requiring the affine welfare inequality to be
tight at every served seller type gives a differential equation.
The endpoint at which service vanishes satisfies
$(1-\rho)L=\rho\E(B-L)_+$. The allocation starts by serving a continuous
buyer tail and ends by randomizing within the buyer's top atom.
One residual equation joins these two parts.
An explicit seller distribution makes this rule maximize the
Lagrangian, so a zero of the residual proves both directions of the
seller optimization. Optimizing the remaining three buyer parameters
gives $\rho_{\mathrm M}$.

\paragraph{Endowment charges.}
For unrestricted priors, we use the cap-monotone local rules and
edge-by-edge composition of~\cite{LQW26}. Charging a seller's full prior
mean in each nonempty first-best cap costs at most $A$: with all other
reports fixed, that seller has at most one candidate partner. This
extends the affine inequality to downward-closed matching markets.
The automatic local-rule theorem requires every buyer probability
vector, so an MHR-only bound does not meet its premise. For ordinary
matching, we instead condition on upper-buyer/lower-seller rectangles.
Upper tails preserve MHR, and rectangle packing charges each endowment
at most once.

\paragraph{Global certification.}
The analytic reductions are exact and account for all priors and budget
multipliers. Evaluating their minima is computer assisted.
For each three-parameter problem we first exclude boundary sequences,
then rule out nonlocal minima by a finite interval cover, and finally
verify strict convexity and opposite signs near the unique \revise{canonical}{normalized}
minimizer. Thus the numerical step proves global sharpness, not just
the accuracy of a candidate. The appendices give all acceptance
inequalities; the accompanying certificate repository supplies the finite
data and replay programs.

\subsection{Related Work}
Fixed-price welfare approximation builds on the connection
between dominant-strategy trading and posted prices~\cite{HR87} and
approximate-efficiency mechanisms~\cite{BD21}. Kang, Pernice, and
Vondr\'ak~\cite{KPV22}, Cai and Wu~\cite{CW23}, and Liu, Ren, and
Wang~\cite{LRW23} sharpen the fixed-price guarantees and impossibility
bounds. Liu, Ren, and Wang show that knowing only the buyer distribution
suffices to \revise{attain}{achieve} the same worst-case ratio as knowing both priors,
and place that ratio in $[.71,.7381]$. Giambartolomei and de
Keijzer~\cite{GK26} narrow the interval to $[.7292,.73805]$; Jiang,
Gao, and Cai~\cite{JGC26} subsequently characterize the ratio exactly,
with value approximately $.7380243357$. Beyond fixed prices, Dobzinski
and Shaulker~\cite{DS26} obtain a $.746$ welfare guarantee using a
buyer-offering mechanism with a reserve. Our benchmark instead
optimizes over all BIC, interim-IR mechanisms with no expected deficit
and unrestricted signed transfers. Its larger feasible class permits
larger welfare ratios. Technically, the typewise certificates and
endpoint conditions of~\cite{LRW23} anticipate part of our analysis;
here the optimized budget Lagrangian also supplies a seller witnessing
sharpness against every mechanism in our class.

The amount and structure of prior information lead to different welfare
problems. D\"utting et al.~\cite{DFL21} obtain welfare approximations
in two-sided markets using one sample from each seller distribution.
Building on~\cite{DFL21,KPV22}, Liu, Ren, and Wang~\cite{LRW23} show
that randomization does not improve the optimal single-seller-sample
welfare ratios: $1/2$ for arbitrary independent priors and $3/4$ for
identical priors. Deng et al.~\cite{DMSWW25} study delegated pricing
when agents learn from samples and obtain constant GFT guarantees for
classes of sample-based pricing rules. Dobzinski and Shaulker~\cite{DS24}
study correlated private values, while Dobzinski et al.~\cite{DEGST25}
study interdependent values. Our theorems assume known, independent
private-value distributions; these information restrictions and
extensions are outside their scope.

Gains from trade provide the budget-dual starting point and the route
to matching markets. Early distribution-dependent guarantees appear
in~\cite{McA08,BM16}. Brustle et al.~\cite{BCWZ17} obtain half of
\emph{second-best} GFT in two-sided markets using duality. For arbitrary
bilateral priors, Deng et al.~\cite{DMSW22} obtain a constant fraction
of \emph{first-best} GFT; Fei~\cite{Fei22} improves this guarantee and
determines the sharp performance of seller pricing for MHR buyers.
Liu et al.~\cite{LQRW26} establish the sharp bilateral second-best/first-best
GFT ratio of $1/2$. In larger markets, welfare approximation extends
from double auctions~\cite{CBKLT16} to combinatorial
auctions~\cite{CBGKRT20}, and Cai et al.~\cite{CGMZ21} study
multidimensional GFT. For scalar-value matching markets with
downward-closed feasibility, Babaioff et al.~\cite{BRTW26} obtain a
constant fraction of first-best GFT, and Bei et al.~\cite{BLWZ26}
prove the sharp second-best/first-best half guarantee. Liu, Qin, and
Wang~\cite{LQW26} give a general bilateral-to-matching reduction and
sharp gains guarantees under MHR and finite-support restrictions.
We use their composition, cap-monotonicity, and seller regularization
directly. Welfare additionally requires controlling the initial
endowment: our full-prior charge gives the unrestricted extension under
downward-closed feasibility, while MHR-preserving rectangle conditioning
gives the ordinary-matching extension. The same distinction drives the
bilateral analysis, where a simultaneous decomposition must preserve
both the endowment and the optimized dual value.

\paragraph{Organization.}
\Cref{sec:model} formulates the model and budget dual.
\Cref{sec:unrestricted,sec:mhr-start} prove the two bilateral reductions
and construct their upper instances.
\Cref{sec:matching} transfers the inequalities to matching markets.
\Cref{sec:extension} handles general priors and payment conventions.
\Cref{app:unrestricted,app:mhr} prove the two global evaluations;
\cref{sec:reproduction} gives the certificate data and replay commands.

\section{Preliminaries}\label{sec:model}
We first fix the feasible allocations and payment conventions. The
budget dual then expresses both welfare problems through the same
affine inequality, which the next two sections analyze bilaterally.

Let $\mathcal E$ be a fixed bipartite compatibility graph.
Seller $j$ initially owns one item and values it at $S_j$; buyer $i$
values any compatible item at $B_i$ and wants at most one.
All types are independent nonnegative Borel random variables. For each buyer $i$ and seller $j$, let $F_{B_i}$ and $F_{S_j}$ denote the CDFs of $B_i$ and $S_j$; whenever a density or probability mass function exists, denote it by $f_{B_i}$ or $f_{S_j}$. In bilateral trade we suppress the agent index and write $F_B,F_S,f_B,f_S$. Let
$\mathcal F$ be a downward-closed family of matchings of $\mathcal E$;
discard edges that occur in no feasible matching. An allocation is a
measurable lottery over $\mathcal F$. The ordinary case takes
$\mathcal F$ to be all matchings of $\mathcal E$.
For an allocation $x$, let $M_x$ denote the random feasible matching induced by $x$ at the realized type profile, and write
\begin{equation}\label{eq:baseline}
 A=\sum_j\E S_j,\qquad
 G(x)=\E\sum_{(i,j)\in M_x}(B_i-S_j),\qquad
 G=\E\max_{M\in\mathcal F}\sum_{(i,j)\in M}(B_i-S_j).
\end{equation}
Thus $A$ is the expected value of the sellers' initial endowment, $G(x)$ is the expected gains from trade generated by allocation $x$, and $G$ is the first-best expected gains from trade. The first-best welfare is $A+G$. We consider instances with
$0<A+G<\infty$.

Let $\revise{X_i(b_i)}{x_{B_i}(b_i)}$ and $\revise{Y_j(s_j)}{x_{S_j}(s_j)}$ be interim service probabilities, and let
$\revise{P_i(b_i)}{P_{B_i}(b_i)}$ and $\revise{P_j(s_j)}{P_{S_j}(s_j)}$ be the buyer's interim payment and seller's
interim receipt. Utilities relative to retaining the initial endowment
are $b_i\revise{X_i}{x_{B_i}}-\revise{P_i}{P_{B_i}}$ and $\revise{P_j}{P_{S_j}}-s_j\revise{Y_j}{x_{S_j}}$. BIC requires truthfulness to maximize
these interim utilities for every type. Interim IR requires their
nonnegativity. WBB requires total expected buyer payments to cover
total expected seller receipts. All transfers are integrable; their signs
and payments at no-trade profiles are unrestricted.
Define $G_{\SB}$ as the supremum of $G(x)$ over such mechanisms.
Then $W_{\SB}=A+G_{\SB}$.

The unrestricted prior class is denoted $\mathcal U$. In the class
$\mathcal M$, each buyer is MHR and sellers are arbitrary.
Precisely, we use the endpoint-inclusive representation
\begin{equation}\label{eq:mhr}
 B=g(E),\qquad E\sim\operatorname{Exp}(1),\qquad
 g\ge0\text{ nondecreasing and concave},\quad g(0)=g(0+).
\end{equation}
It allows deterministic buyers and an upper endpoint atom.
For a continuous nondegenerate distribution it is the usual MHR
condition: at $x=g(t)$, the hazard is $1/g'(t)$ almost everywhere.
Equivalently its survival function is log-concave, with the endpoint
convention above. \Cref{sec:extension} shows that excluding the upper
atom does not change the sharp infimum.

For allocations with monotone interims, let $\Lambda(x)$ be \revise{the largest
attainable expected payment surplus under BIC and interim IR. On the
bounded classes considered first, set}{the maximum expected payment surplus under BIC and interim IR, where payment surplus means expected buyer payments minus expected seller receipts. On the
bounded classes considered first, let $\alpha\ge0$ denote the Lagrange multiplier on the budget-surplus constraint and set}
\begin{equation}\label{eq:dual-def}
 V_{\revise{\lambda}{\alpha}}=\sup_x\{G(x)+\revise{\lambda}{\alpha}\Lambda(x)\},\qquad
 a=\frac{\revise{\lambda}{\alpha}}{1+\revise{\lambda}{\alpha}},\qquad k=1-a.
\end{equation}
For $\revise{\lambda}{\alpha}>0$, $0<a<1$; at $\revise{\lambda}{\alpha}=0$, $V_0=G$.
We recall the payment identities of~\cite{LQW26} and give the duality
extension needed for mixed continuous and finite priors. In particular,
for $0\le\rho\le1$,
\begin{equation}\label{eq:cone-ray}
 W_{\SB}\ge\rho W_{\FB}
 \quad\Longleftrightarrow\quad
 V_{\revise{\lambda}{\alpha}}+(1-\rho)A\ge\rho G
 \quad\text{for all }\revise{\lambda}{\alpha}\ge0.
\end{equation}
The forward direction follows from weak duality; the reverse direction
takes the infimum over multipliers. A single multiplier certifies an
upper bound on second best. A universal lower guarantee needs every
multiplier.

\subsection{Envelope Payments}\label{u:sec:payments}
We use the finite payment characterization and the signed-transfer
budget conversion of \cite[Lemma~2.1 and Appendix~A.1]{LQW26}.
For reference, we record their consequences in the bilateral notation
used by our reductions. Neither changes the feasible allocations or
their welfare.

\revise{Take buyer values $b_1>\cdots>b_n\ge0$ with positive probabilities $w_i$, and seller values $0\le s_1<\cdots<s_m$ with positive probabilities $z_j$. Define cumulative probabilities
$q_i=\sum_{h\le i}w_h$ and $Q_j=\sum_{h\le j}z_h$, with $q_0=Q_0=0$.}{Take buyer values $0\le b_1<\cdots<b_n$ and seller values $0\le s_1<\cdots<s_m$. Define the probability mass functions $f_B(i):=\Prb(B=b_i)>0$ and $f_S(j):=\Prb(S=s_j)>0$, and their cumulative distribution functions $F_B(i):=\sum_{h\le i}f_B(h)$ and $F_S(j):=\sum_{h\le j}f_S(h)$, with $F_B(0)=F_S(0)=0$.}
Let $\revise{X_i}{x_B(b_i)}=\E_S x(S,b_i)$ and $\revise{Y_j}{x_S(s_j)}=\E_B x(s_j,B)$. BIC requires \revise{both sequences to be nonincreasing in the displayed index}{the sequence $x_B(b_i)$ to be nondecreasing in $i$ and $x_S(s_j)$ to be nonincreasing in $j$}.

\begin{lemma}[Finite payment identities]\label{u:lem:payments}
An allocation admits a BIC, interim-IR, ex ante WBB implementation if and only if \revise{$X$ and $Y$ are nonincreasing}{$\{x_B(b_i)\}_i$ and $\{x_S(s_j)\}_j$ have the preceding monotonicity} and
\begin{equation}\label{u:eq:Lambda}
 \Lambda(x):=\sum_{i,j}\revise{w_i z_j}{f_B(i)f_S(j)}\revise{(\phi_i-\psi_j)}{(\psi_B(b_i)-\psi_S(s_j))}x(s_j,b_i)\ge0,
\end{equation}
where $\revise{b_0=b_1}{b_{n+1}=b_n}$, $s_0=s_1$, and $\psi_B$ and $\psi_S$ denote the buyer virtual value and seller virtual cost, respectively:
\begin{equation}\label{u:eq:virtual}
 \revise{\phi_i=b_i-(b_{i-1}-b_i)\frac{q_{i-1}}{w_i}}{\psi_B(b_i)=b_i-(b_{i+1}-b_i)\frac{1-F_B(i)}{f_B(i)}},\qquad
 \revise{\psi_j=s_j+(s_j-s_{j-1})\frac{Q_{j-1}}{z_j}}{\psi_S(s_j)=s_j+(s_j-s_{j-1})\frac{F_S(j-1)}{f_S(j)}}.
\end{equation}
Whenever these conditions hold, pointwise SBB is possible.
\end{lemma}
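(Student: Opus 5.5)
We prove the finite payment identities by the standard Myerson envelope argument in each agent's interim problem, then aggregating the two sides. Throughout, the bilateral allocation $x(s_j,b_i)$ is fixed, and $x_B(b_i)$ and $x_S(s_j)$ denote its interim service probabilities.

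\textbf{Buyer side.} With finitely many types, BIC is equivalent to the monotonicity of $x_B$ together with sandwich bounds on adjacent payment differences:
\[
 b_i\bigl(x_B(b_{i+1})-x_B(b_i)\bigr)\le P_B(b_{i+1})-P_B(b_i)\le b_{i+1}\bigl(x_B(b_{i+1})-x_B(b_i)\bigr).
\]
Adjacent upward and downward constraints imply all nonlocal ones once $x_B$ is monotone, by telescoping. The expected buyer payment is maximized by taking every upper bound tight and setting the lowest type's utility to zero, i.e.\ $P_B(b_1)=b_1x_B(b_1)$. Given monotonicity, interim IR then holds for every type. Abel summation of $\sum_i f_B(i)P_B(b_i)$ produces the coefficient
\[
 b_i f_B(i)-(b_{i+1}-b_i)\bigl(1-F_B(i)\bigr)=f_B(i)\psi_B(b_i).
\]
The convention $b_{n+1}=b_n$ makes the top term vanish.

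\textbf{Seller side.} Here $x_S$ must be nonincreasing. Maximizing the seller's IR-tight surplus means minimizing receipts. The highest seller type gets zero utility, $P_S(s_m)=s_mx_S(s_m)$. Each adjacent downward constraint is tight, with the receipt gap for types $s_{j-1},s_j$ equal to $s_{j-1}$ times the allocation gap. Summing by parts gives the coefficient $f_S(j)\psi_S(s_j)$ with $s_0=s_1$.

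\textbf{Aggregation and the budget condition.} Write $x_B(b_i)=\sum_j f_S(j)x(s_j,b_i)$ and $x_S(s_j)=\sum_i f_B(i)x(s_j,b_i)$. The maximal payment surplus is then $\Lambda(x)$ in \eqref{u:eq:Lambda}. Since every other BIC, interim-IR payment rule yields a weakly smaller buyer revenue and weakly larger seller receipt, this gives the equivalence.
\begin{itemize}
\item \emph{Necessity.} Ex ante WBB is feasible only if $\Lambda(x)\ge0$.
\item \emph{Sufficiency.} If $\Lambda(x)\ge0$, use these extremal payments and return the surplus $\Lambda(x)$ as a constant lump sum to, say, the lowest seller type's receipt, or spread it as a type-independent transfer. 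This preserves BIC and IR and makes the expected budget exactly balanced.
\end{itemize}

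\textbf{Pointwise SBB.} We convert interim payments to ex post ones, in the spirit of the AGV/Cr\'emer--McLean construction cited from \cite[Appendix~A.1]{LQW26}. With independent types and unrestricted signed transfers, set the buyer's ex post payment to $P_B(b)-c$ and the seller's ex post receipt to equal the buyer's ex post payment. Define
\[
 t(s,b)=P_B(b)+P_S(s)-\E P_S(S).
\]
The buyer pays $t$ and the seller receives $t$. Then the seller's interim receipt is $P_S(s)+\E P_B(B)-\E P_S(S)\ge P_S(s)$, with the excess constant across types. Likewise the buyer's interim payment is $P_B(b)$, since $\E\bigl[P_S(S)-\E P_S(S)\bigr]=0$. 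So incentives are unchanged, IR is kept, and the budget balances at every profile.

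The main obstacle is bookkeeping rather than ideas. The key points are the correct index convention for the discrete virtual values, checking that only local constraints bind under monotonicity, and confirming that the summation by parts yields exactly \eqref{u:eq:virtual}.
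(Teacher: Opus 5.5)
Your proposal is correct and takes essentially the same route the paper relies on through its citation of the LQW26 payment characterization and signed-transfer conversion. That route is: minimal-rent envelope payments with the lowest buyer and highest seller at zero utility, summation by parts to the displayed virtual coefficients, and a common ex post transfer $t(s,b)$ that keeps the buyer's interim payments and shifts every seller type's receipt by the same constant $\Lambda(x)\ge0$. One small slip: crediting the surplus to a single seller type's receipt (e.g., the lowest type) can break a neighboring type's IC constraint, so only your type-independent rebate is valid — and ex ante WBB needs no rebate at all.
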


The cited budget conversion also applies to independent Borel types:
its formula uses only integrable interim payments and their means.
We use the following consequence when passing to general priors.

\begin{lemma}[Common-transfer reconstruction]\label{u:lem:common-transfer}
In bilateral trade with independent Borel types, every BIC, interim-IR mechanism with finite interim payments, integrable transfers, and ex ante WBB can be replaced, without changing its allocation, by a BIC, interim-IR, pointwise-SBB mechanism. The replacement preserves every type's interim allocation and weakly increases the buyer's interim utility. Consequently the feasible allocation sets, and hence the optimal welfare values, are identical under ex ante WBB, pointwise WBB, and pointwise SBB when signed transfers are unrestricted.
\end{lemma}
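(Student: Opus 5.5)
The natural approach is to build the pointwise-SBB mechanism from the interim payments of the given mechanism, using the standard common-transfer (AGV-style) construction. Let $x$ be the allocation and let $P_B(b)$ and $P_S(s)$ be the buyer's interim payment and the seller's interim receipt. Both are finite and integrable by hypothesis. Ex ante WBB says that the surplus $\Delta:=\E P_B(B)-\E P_S(S)\ge0$.

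Define ex post transfers that depend on the two reports separately:
\[
 \tau(s,b):=P_B(b)-\E P_B(B)+P_S(s)-\bigl(\E P_S(S)+\Delta\bigr)+\E P_B(B).
\]
The buyer pays $\pi_B(s,b)=P_B(b)-\Delta$ plus a centering term in $s$. The seller receives the same amount $\pi_S(s,b)=\pi_B(s,b)$. Concretely, set
\[
 \pi(s,b):=P_B(b)+P_S(s)-\E P_S(S)-\Delta,
\]
and let the buyer pay $\pi$ and the seller receive $\pi$. This is pointwise SBB by construction. I will then verify the interim quantities. By independence, the buyer's interim payment is
\[
 \E_S\pi(S,b)=P_B(b)-\Delta\le P_B(b).
\]
The seller's interim receipt is
\[
 \E_B\pi(s,B)=P_S(s)+\E P_B(B)-\E P_S(S)-\Delta=P_S(s).
\]
So the seller's interim payoff is unchanged, and the buyer's interim payment shifts by the type-independent constant $-\Delta$.

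Next I check incentives. The interim allocations $x_B(b)=\E_S x(S,b)$ and $x_S(s)$ are untouched. Each agent's interim utility from any report changes by a constant independent of the report: $+\Delta$ for the buyer and $0$ for the seller. BIC is therefore preserved. Interim IR is preserved because utilities only weakly increase, and the buyer's interim utility rises by $\Delta\ge0$. Integrability of $\pi$ follows from integrability of $P_B$ and $P_S$, and measurability holds because $\pi$ is a sum of measurable functions of separate coordinates. Independence is used only to compute interim expectations via Fubini.

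For the consequence, the three feasible allocation sets are nested: pointwise SBB is contained in pointwise WBB, which is contained in ex ante WBB. The construction shows that every ex ante WBB allocation is SBB-implementable, so all three sets coincide, and welfare $A+G(x)$ depends only on $x$. The one step that needs care is whether the given mechanism's interim payments are well defined and finite for every type, not only almost every type. This is assumed in the statement, and any modification on null sets would have to be done before centering. Beyond that, the argument is routine, and I expect no real obstacle.
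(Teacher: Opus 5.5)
Your construction is correct and is essentially the paper's argument. The paper invokes the signed-transfer conversion of \cite[Appendix~A.1]{LQW26}, which rebates the nonnegative expected surplus $\Delta$ to the buyer and then uses the common transfer $P_B(b)-\Delta+P_S(s)-\E P_S(S)$, which is exactly your $\pi$; like you, it notes that only integrable interim payments and independence (via Fubini) are used, so the finite-type argument carries over to Borel types.
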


\subsection{Continuous Buyers and Budget Duality}\label{sec:foundations}
We also work with bounded MHR buyers and finitely supported sellers. This mixed continuous/finite class suffices for the lower bounds; \cref{sec:extension} passes to arbitrary sellers and finite expected welfare.

Let $\revise{X(b)}{x_B(b)}$ and $\revise{Y(s)}{x_S(s)}$ be the buyer's and seller's interim trade probabilities. BIC is equivalent to $\revise{X}{x_B}$ nondecreasing and $\revise{Y}{x_S}$ nonincreasing, together with envelope payments. For a buyer with lower endpoint $b_0$ the maximal interim payment is
\[
 P_B(b)=b\revise{X(b)}{x_B(b)}-\int_{b_0}^b\revise{X(t)}{x_B(t)}\dd t.
\]
For seller types $s_1<\cdots<s_m$ \revise{of masses $z_j>0$}{with probability masses $f_S(j)>0$}, put $\revise{Q_j=\sum_{h\le j}z_h}{F_S(j)=\sum_{h\le j}f_S(h)}$. The minimal seller payment is
\[
 P_S(s_j)=s_j\revise{Y_j}{x_S(s_j)}+\sum_{h=j+1}^m(s_h-s_{h-1})\revise{Y_h}{x_S(s_h)}.
\]
These identities follow by adding adjacent IC inequalities; monotonicity makes the resulting envelope sufficient for every deviation. The lowest buyer and highest seller have zero utility. Adding positive type-independent utilities can only reduce expected surplus. Thus $\Lambda\ge0$ is exactly budget feasibility.

In the exponential coordinate \eqref{eq:mhr}, buyer integration by parts gives the normalized coefficient
\begin{equation}\label{eq:buyer-score}
 u(t)=g(t)-ag'(t).
\end{equation}
At a capped endpoint $g'=0$, so the atom's coefficient is its value. The coefficient is nondecreasing, since $g$ is increasing and $g'$ decreasing. Seller summation gives
\begin{equation}\label{eq:seller-raw}
 \sigma_j=s_j+a(s_j-s_{j-1})\frac{\revise{Q_{j-1}}{F_S(j-1)}}{\revise{z_j}{f_S(j)}},\qquad s_0=s_1.
\end{equation}
Consequently $G(x)+\revise{\lambda}{\alpha}\Lambda(x)=k^{-1}\E[(u-\sigma)x]$.
The coefficient $u$ is integrable: $\E g'(E)=\E g(E)-g(0)$ for bounded $g$, and the same identity holds by monotone approximation when $g'$ is unbounded at zero.

In a matching market the same calculation applies separately to every
agent's interim service. In unnormalized notation, let $\psi_{B_i}$ and $\psi_{S_j}$ denote the buyer-$i$ virtual value and seller-$j$ virtual cost induced by their respective priors; then
$\Lambda(x)=\sum_i\E[\revise{\phi_i(B_i)X_i(B_i)}{\psi_{B_i}(B_i)x_{B_i}(B_i)}]
-\sum_j\E[\revise{\psi_j(S_j)Y_j(S_j)}{\psi_{S_j}(S_j)x_{S_j}(S_j)}]$.
All minimal rents can be imposed simultaneously; matching feasibility
changes the allocation set, not these payment identities.

\begin{lemma}[Common budget duality]\label{lem:duality}
For finite matching markets with downward-closed feasibility, with
either finite priors or bounded MHR buyers and finite sellers,
\begin{equation}\label{eq:strong-dual}
 G_{\SB}=\inf_{\revise{\lambda}{\alpha}\ge0}V_{\revise{\lambda}{\alpha}}.
\end{equation}
\end{lemma}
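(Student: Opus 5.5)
Weak duality is the easy half and I would dispose of it first. For any feasible mechanism with allocation $x$, \cref{u:lem:payments} and the envelope identities of \cref{sec:foundations} give $\Lambda(x)\ge0$. Hence $G(x)\le G(x)+\alpha\Lambda(x)\le V_\alpha$ for every $\alpha\ge0$, so $G_{\SB}\le\inf_\alpha V_\alpha$.

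For the reverse inequality, the plan is to recast the problem as a concave program over a convex set of allocations and then apply a Lagrangian (Slater-type) duality theorem with the single scalar constraint $\Lambda(x)\ge0$. Let $\mathcal X$ be the set of allocations whose interim service probabilities have the required monotonicity (buyers nondecreasing, sellers nonincreasing). In both regimes, $\mathcal X$ can be identified with a set of measurable maps from type profiles into the convex hull of indicator vectors of $\mathcal F$. The monotonicity constraints on interims are linear, so $\mathcal X$ is convex. Both $G(x)$ and $\Lambda(x)$ are linear in $x$, since $\Lambda$ is a sum of expectations of virtual values times interim services. Slater's condition holds trivially: the zero allocation lies in $\mathcal X$ and has $\Lambda=0$.

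Because the constraint is only weakly satisfied there, I would instead use the one-constraint minimax theorem for linear functionals. Define the sets
\[
\{(G(x),\Lambda(x)):x\in\mathcal X\}\quad\text{and}\quad \{(g,\ell):g>G_{\SB},\ \ell\ge0\}.
\]
The first is convex, being the image of a convex set under a linear map, and the second is convex and disjoint from it. A separating hyperplane therefore gives $(\mu,\alpha')\ge0$, not both zero, with $\mu G(x)+\alpha'\Lambda(x)\le\mu G_{\SB}$ for all $x\in\mathcal X$.

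The point to check is that $\mu>0$. If $\mu=0$, then $\Lambda(x)\le0$ for all $x\in\mathcal X$, so every monotone allocation is feasible and the problem is unconstrained. In that case $G_{\SB}=V_0\ge\inf_\alpha V_\alpha$, and we are done. Otherwise, normalizing $\mu=1$ gives $V_{\alpha'}\le G_{\SB}$.

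Finiteness of $V_\alpha$ is immediate for finite priors. For bounded MHR buyers it follows from the integrability of the normalized coefficient $u$ shown after \eqref{eq:buyer-score}: $V_\alpha\le k^{-1}\E(u-\sigma)_+<\infty$.

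The main obstacle is making the separation rigorous in the mixed continuous/finite case, where $\mathcal X$ is infinite-dimensional. My approach avoids topology altogether. The argument above uses only convexity of the two-dimensional image set and finite-dimensional separation in $\R^2$, so no compactness or attainment of the supremum is required. Whether the image is closed does not matter either, because we separate from an open-in-$g$ region. The one care point is to confirm that $\mathcal X$ is convex when interims are defined almost everywhere. This holds because mixtures of monotone interims remain monotone, and because lotteries over $\mathcal F$ are closed under convex combination.

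Finally, note that the identification "$\Lambda\ge0$ iff budget-feasible" (including pointwise SBB via \cref{u:lem:common-transfer} and \cref{u:lem:payments}) is what makes the constrained supremum equal to $G_{\SB}$. In the matching case, this rests on imposing all minimal rents simultaneously, as remarked before the lemma.
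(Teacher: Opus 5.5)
Your weak-duality half, the convexity of the monotone allocation class, the linearity of $G$ and $\Lambda$, and the $\mu>0$ branch are all fine. The gap is the degenerate branch $\mu=0$. There $\alpha'>0$, so $\Lambda(x)\le0$ for every $x\in\mathcal X$. That does \emph{not} make every monotone allocation budget feasible; it makes the feasible set exactly $\{\Lambda=0\}$. So $G_{\SB}=V_0$ does not follow.

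This is not a cosmetic slip. Non-strict separation in $\R^2$ cannot exclude $\mu=0$. Without either closedness of the image or a strictly feasible point, strong duality can fail. For example, the convex set $\{(g,l):l<0,\ g\le1\}\cup\{(0,0)\}$ contains the origin and has $\sup\{g:l\ge0\}=0$, yet $\sup(g+\alpha l)=1$ for every $\alpha\ge0$. So your claim that closedness ``does not matter'' is false. Also, the zero allocation is not a Slater point, since Slater's condition requires $\Lambda>0$.

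The paper closes this gap with topology. The monotone feasible lotteries form a weak-* compact convex set on which $G$ and $\Lambda$ are weak-* continuous, so the image $K$ is compact. For each $p>G_{\SB}$ it strictly separates $K$ from the closed set $\{l\ge0,\ g\ge p\}$, and strictness together with $(0,0)\in K$ forces the $G$-coefficient to be positive.

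You can instead keep your topology-free route by supplying a genuine Slater point. Suppose some edge $(i,j)$ has $\Prb(B_i>S_j)>0$. Choose a rational $t$ with $\Prb(B_i>t)\Prb(S_j<t)>0$, and then $s<t$ with $\Prb(S_j\le s)>0$. Trade on that single edge (a feasible matching by downward closure) exactly when $B_i>t$ and $S_j\le s$, at posted prices $t$ and $s$. This allocation is monotone and has $\Lambda\ge(t-s)\Prb(B_i>t)\Prb(S_j\le s)>0$, which rules out $\mu=0$. If instead no edge has $\Prb(B_i>S_j)>0$, then $G(x)\le0$, and interim IR gives $\Lambda(x)\le G(x)\le0$, so $V_\alpha=0=G_{\SB}$ for every $\alpha$. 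With this case split, your argument becomes a valid and more elementary alternative to the paper's compactness proof.
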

\begin{proof}[Proof of \cref{lem:duality}]
The finite-prior case is \cite[Lemma~2.1]{LQW26}. For the mixed
continuous/finite case, use compact separation as follows.
Represent a lottery by one probability function for each of the finitely many feasible matchings. These functions belong to a finite product of unit balls of $L^\infty$, are nonnegative, and sum to one pointwise. Pointwise feasibility and monotone interims define a convex weak-* compact set: the simplex constraints are finitely many closed linear inequalities, and monotonicity can be written as nonnegative pairwise interval-average inequalities. A monotone representative can be selected afterwards. Constancy on an atom is another closed linear condition. Expected gains and the envelope surplus are weak-* continuous linear functionals, because their coefficients are integrable. Their image is therefore a compact convex subset $K\subset\R^2$ in coordinates $(\Lambda,G)$, containing $(0,0)$.

Let $P$ maximize $G$ on $K\cap\{\Lambda\ge0\}$. Weak duality gives $P\le\inf_{\revise{\lambda}{\alpha}\ge0}\max_K(G+\revise{\lambda}{\alpha}\Lambda)$. For any $p>P$, strictly separate $K$ from the closed set $\{(l,g):l\ge0,g\ge p\}$. The separating normal has nonnegative coordinates $(c,d)$; moreover $d>0$, since otherwise $(0,0)\in K$ precludes strict separation. Dividing by $d$ gives some $\revise{\lambda}{\alpha}=c/d\ge0$ with $\max_K(G+\revise{\lambda}{\alpha}\Lambda)<p$. Letting $p\downarrow P$ proves the reverse inequality. This argument does not require attainment of a finite dual multiplier.

We may restrict to allocations supported on positive-gain trades when proving the guarantees here. Indeed the ironed fixed-multiplier bilateral optimizers below have this property, and the matching allocations constructed in \cref{sec:matching} do too. The same compact-separation argument applied to the convex closed class of allocations supported on $B_i>S_j$ (zero ties may be removed) turns their all-multiplier lower bounds into budget-feasible guarantees in that class. No assertion that every feasible allocation has this property is needed.
\end{proof}

The dual formulation leaves an optimization over independent value
distributions. We begin with unrestricted priors, where common score
layers reduce that optimization to three parameters.

\section{Sharp Welfare for Unrestricted Priors}\label{sec:unrestricted}
This section identifies the bilateral constant and its \revise{attaining}{extremal}
mechanism. The proof first reduces every affine welfare inequality
to power-law layers, then uses the same layers for the upper bound.
We state the scalar expressions generated by this reduction before the theorem only so that the exact characterization can be stated compactly; their derivation is given in the subsections below. Fixing the budget-dual multiplier $\alpha>0$ produces the normalized parameter $a=\alpha/(1+\alpha)$. After ironing, a common layer-cake decomposition produces a buyer cutoff $r$ and a seller cutoff $p$, and inverting each indicator layer through the reconstruction equation $f+aqf'=h$ yields the power-law pair in \eqref{u:eq:layer-quantiles}. Evaluating one such reconstructed layer gives the endowment $A_a(p)$, the efficient gains $I_a(p,r)$, and the exact dual value $pr/(1-a)$. Thus the formulas below are not postulated independently; they are the scalar coordinates forced by the duality--layer--reconstruction reduction.
For $a,p,r\in(0,1)$ define
\begin{align}
 A_a(p)&=1-\frac{p-ap^{1/a}}{1-a},\label{u:eq:A}\\
 I_a(p,r)&=\int_0^1\min\{1,p(1-t)^{-a}\}
                         \min\{1,rt^{-a}\}\dd t,\label{u:eq:I}\\
 R_{\mathrm U}(a,p,r)&=\frac{A_a(p)+pr/(1-a)}{A_a(p)+I_a(p,r)},\label{u:eq:R}\\
 D(\rho,a,p,r)&=(1-\rho)A_a(p)+pr/(1-a)-\rho I_a(p,r).
 \label{u:eq:D}
\end{align}
Define
\begin{equation}\label{u:eq:rho-star-definition}
 \rho_{\mathrm U}=\inf_{0<a,p,r<1}R_{\mathrm U}(a,p,r).
\end{equation}

The following theorem states that optimization over these reduced-layer parameters is exactly equivalent to the original unrestricted bilateral problem.

\begin{theorem}[Exact unrestricted characterization]\label{u:thm:reduction}
The sharp bilateral welfare ratio equals $\rho_{\mathrm U}$.
The infimum is attained at a unique normalized parameter triple.
The corresponding priors are given in \eqref{u:eq:layer-distributions},
and their optimal mechanism is the boundary lottery of
\cref{u:prop:attaining}.
\end{theorem}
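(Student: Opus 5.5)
The proof has two halves, both routed through the affine reformulation \eqref{eq:cone-ray} and \cref{lem:duality}. For the lower bound, fix $\alpha>0$ (the case $\alpha=0$ is trivial since $V_0=G$) and a bounded bilateral instance, first with finite priors. Then extend to bounded Borel priors by discretization, using \cref{u:lem:common-transfer} to keep the payment convention irrelevant. By \cref{u:lem:payments} and \eqref{eq:seller-raw}, $V_\alpha=k^{-1}\sup_x\E[(u-\sigma)x]$ over monotone $x$. Standard ironing replaces $(u,\sigma)$ by monotone ironed scores, and then $V_\alpha=k^{-1}\E(u-v)_+$, attained by the threshold rule $x=\ind\{u>v\}$.

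Next I reconstruct values from scores. In quantile coordinates the buyer value $b$ and the seller value $s$ are recovered from the ironed scores by inverting the linear relation $f+aqf'=h$. This inversion is a positive linear operator. The key inequalities are:
\begin{itemize}
\item the reconstructed seller has mean at most $A$, because ironing only raises scores;
\item the reconstructed instance has first-best gains at least $G$;
\item the dual value is unchanged.
\end{itemize}
So it suffices to prove $k^{-1}\E(u-v)_+ +\kappa A\ge\beta G$ for instances whose scores are exactly monotone. Now apply a common layer-cake decomposition. Write $u=\int_0^\infty\ind\{u>\theta\}\dd\theta$ and $v=\int_0^\infty\ind\{v>\theta\}\dd\theta$, and use that $(u-v)_+=\int\ind\{v\le\theta<u\}\dd\theta$. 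Each layer $\theta$ contributes a buyer indicator with top mass $r_\theta$ and a seller indicator with bottom mass $p_\theta$. The dual value and the seller mean are linear in the scores, so they add exactly across layers. Gains $\E(B-S)_+$ are subadditive under this superposition, because $(\cdot)_+$ is convex and all layers are comonotone. Inverting one indicator layer produces the power-law pair \eqref{u:eq:layer-quantiles}, with endowment $A_a(p)$, gains $I_a(p,r)$ and dual value $pr/k$. Hence the affine inequality reduces to $pr/k+\kappa A_a(p)\ge\beta I_a(p,r)$ for all $(p,r)$, which is \cref{thm:affine}(i). Taking $\kappa=1-\rho$ and $\beta=\rho$ with $\rho=\inf R_{\mathrm U}$ gives $D(\rho,a,p,r)\ge0$ by definition, so every instance satisfies $W_{\SB}\ge\rho_{\mathrm U}W_{\FB}$. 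The degenerate layers $p\in\{0,1\}$ and $r\in\{0,1\}$ are limits, and I would check them by continuity.

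For sharpness, take a single power-law layer $(a,p,r)$ from \eqref{u:eq:layer-distributions}. For this instance the reduction is tight: no ironing is needed, and the layer is a single term, so gains are additive. Thus $V_\alpha=pr/k$ exactly, and $W_{\SB}\le A_a(p)+pr/k$ at this one multiplier by weak duality. This yields the ratio $R_{\mathrm U}(a,p,r)$, and taking the infimum shows the bilateral ratio equals $\rho_{\mathrm U}$. I would identify the optimal mechanism as the boundary lottery of \cref{u:prop:attaining} by checking complementary slackness: the threshold rule $\ind\{u>v\}$, randomized on the atom boundary, should have $\Lambda=0$.

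\textbf{Main obstacle.} The analytic reduction is routine. The hard part is attainment and uniqueness of the minimizer of $R_{\mathrm U}$ over the open cube. I would proceed in three steps:
\begin{enumerate}
\item Exclude boundary sequences ($a\to0,1$, $p\to0,1$, $r\to0,1$) by explicit asymptotic bounds showing $R_{\mathrm U}$ stays above $0.889$ there.
\item Cover the remaining compact box with interval-arithmetic cells, on each of which a lower bound of $R_{\mathrm U}$ exceeds the candidate value except near one point.
\item Near that point, certify strict convexity (a positive definite Hessian enclosure) and sign changes of the gradient, giving a unique critical point and the decimal bracket \eqref{eq:general-bracket}.
\end{enumerate}
The integral $I_a$ has kinks at $t=1-p^{1/a}$ and $t=r^{1/a}$. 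I would evaluate it piecewise in closed form via incomplete beta functions to obtain rigorous enclosures.
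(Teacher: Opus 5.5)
Your route is the paper's: ironing at a fixed multiplier, reconstruction by the positive operator $T_a$, and a common threshold decomposition into power-law layers. The upper bound uses the single-layer instance with $V_\alpha=pr/k$ and weak duality, and attainment and uniqueness come from a boundary-exclusion, global-cover and local-convexity certificate.

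\textbf{Missing step: feasibility of the boundary lottery.} The concrete gap is in the last claim of the theorem. The boundary lottery of \cref{u:prop:attaining} exists only under \eqref{u:eq:theta-condition}.
\begin{itemize}
\item The threshold rule $\ind\{u>v\}$ alone leaves surplus $\Lambda=pr>0$.
\item The tie-randomization that brings $\Lambda$ to zero has probability $\theta=1/(2-p^{k/a}-r^{k/a})$, and this exceeds one when $p^{k/a}+r^{k/a}>1$.
\item In that case even full trade on the zero-coefficient boundary profiles cannot absorb the corner surplus, while interior trade lowers the Lagrangian at rate $1/k$. One then gets $G_{\SB}<pr/k$ strictly on that layer, so your ``complementary slackness'' check would fail.
\end{itemize}
You therefore have to verify the condition at the minimizing triple. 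The paper does this from the certified enclosure \eqref{u:eq:smallbox}: $a_*<1/3$ gives $k/a>2$, hence $p_*^{k/a}+r_*^{k/a}<.655^2+.326^2<1$. Alternatively, you could prove the strict inequality just described. Failure of the condition at the minimizer would then put that layer's true ratio below $R_{\mathrm U}(a_*,p_*,r_*)=\rho_{\mathrm U}$, contradicting your lower bound. Either way, some argument is required; it does not come for free.

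\textbf{Justifications to correct in the reduction.} The claims themselves are true.
\begin{itemize}
\item $\widetilde A\le A$ does not follow from ``ironing only raising scores''. The convex minorant \emph{lowers} the cumulative seller score. What is used is the prefix order $\int_0^q v=\underline K_S\le K_S\le H_s$; the last inequality is the within-interval comparison needed for finite priors. The positive kernel \eqref{u:eq:positive-resolvent} then turns this into $Q_{\widetilde s}\le Q_s$.
\item The analogous buyer order $Q_{\widetilde b}\ge Q_b$, combined with the stop-loss identities \eqref{u:eq:stoploss-B}--\eqref{u:eq:stoploss-S}, is what proves $\widetilde G\ge G$. You assert this without argument.
\item The dual value is not linear in the scores. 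It is additive across layers only because $u$ and $v$ are cut at the same threshold.
\item Layer subadditivity of gains is just $(\int f)_+\le\int f_+$ for each quantile pair; comonotonicity plays no role.
\item The sharp ratio ranges over all priors with $0<W_{\FB}<\infty$. You therefore also need the clipping step of \cref{sec:extension} for unbounded supports, not only discretization of bounded priors.
\end{itemize}
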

We first prove the variational identity. The global evaluation and
attainment are stated at the end of this section and proved in
\cref{app:unrestricted}.

\subsection{Budget-Dual Ironing}\label{u:sec:ironing}
Fix finite priors and $\revise{\lambda}{\alpha}>0$. We seek monotone virtual coefficients that express the exact dual value and whose cumulative curves can be compared to the original value quantiles. The first property solves the allocation problem at this multiplier; the second will control welfare after reconstruction. Write
\begin{equation}\label{u:eq:a-lambda}
 a=\frac{\revise{\lambda}{\alpha}}{1+\revise{\lambda}{\alpha}}\in(0,1),\qquad k=1-a,
 \qquad 1+\revise{\lambda}{\alpha}=\frac1k.
\end{equation}
The normalized buyer and seller coefficients in the Lagrangian are
\begin{equation}\label{u:eq:normalized}
 \xi_i=k b_i+a\revise{\phi_i}{\psi_B(b_i)}
 =\revise{b_i-a(b_{i-1}-b_i)\frac{q_{i-1}}{w_i}}{b_i-a(b_{i+1}-b_i)\frac{1-F_B(i)}{f_B(i)}},\quad
 \sigma_j=k s_j+a\revise{\psi_j}{\psi_S(s_j)}
 =s_j+a(s_j-s_{j-1})\frac{\revise{Q_{j-1}}{F_S(j-1)}}{\revise{z_j}{f_S(j)}}.
\end{equation}
Equivalently, $\xi_i=k\phi_B^{\alpha}(b_i)$ and $\sigma_j=k\phi_S^{\alpha}(s_j)$, where $\phi_B^{\alpha}(b):=(1+\alpha)b-\alpha[1-F_B(b)]/f_B(b)$ and $\phi_S^{\alpha}(s):=(1+\alpha)s+\alpha F_S(s^-)/f_S(s)$ are the $\alpha$-dependent buyer virtual value and seller virtual cost, and $F_S(s^-)$ denotes the CDF immediately below $s$.
Thus $G(x)+\revise{\lambda}{\alpha}\Lambda(x)=k^{-1}\E[(\xi-\sigma)x]$.

Use descending buyer quantiles $b(q)$ and ascending seller quantiles $s(q)$, for $q\in(0,1]$. For any bounded quantile function $f$ define
\begin{equation}\label{u:eq:H}
 Q_f(q)=\int_0^q f(t)\dd t,\qquad
 H_f(q)=kQ_f(q)+aqf(q).
\end{equation}
Let $K_B$ be the piecewise-linear cumulative curve whose slope on the buyer's $i$th quantile interval is $\xi_i$, and let $K_S$ be the analogous curve with slopes $\sigma_j$. Both start at zero.

Let $\overline K_B$ be the least concave majorant of $K_B$, and $\underline K_S$ the greatest convex minorant of $K_S$. They keep the respective endpoints. Write $u_0$ and $v$ for their slopes, and set $u=(u_0)_+$. The functions $u$ and $v$ are respectively nonincreasing and nondecreasing. All raw seller slopes are nonnegative, and convex-minorant slopes are averages of consecutive raw slopes, so $v\ge0$.

\begin{lemma}[Exact ironed dual value]\label{u:lem:ironing}
For independent \revise{uniform quantiles $q,t$}{quantiles $q,t\sim\mathrm{Unif}(0,1)$},
\begin{equation}\label{u:eq:ironed-V}
 V_{\revise{\lambda}{\alpha}}=\frac1k\int_0^1\int_0^1
       (u(q)-v(t))_+\dd q\dd t.
\end{equation}
Moreover,
\begin{equation}\label{u:eq:primitive-order}
 \int_0^q u(t)\dd t\ge H_b(q),\qquad
 \int_0^q v(t)\dd t\le H_s(q)
\end{equation}
for almost every $q$.
\end{lemma}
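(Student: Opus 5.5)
The plan has two parts: a standard Myerson-style ironing argument for \eqref{u:eq:ironed-V}, and an exact identification of the raw cumulative curves with $H_b$ and $H_s$ for \eqref{u:eq:primitive-order}. Throughout, an allocation is viewed in quantile coordinates as $x(t,q)$, with buyer descending quantile $q$ and seller ascending quantile $t$. BIC then says the buyer interim $X(q)=\int_0^1 x(t,q)\dd t$ and the seller interim $Y(t)=\int_0^1 x(t,q)\dd q$ are both nonincreasing. Constancy on atoms says $x$ is constant on each product of quantile intervals. By \cref{u:lem:payments} and \eqref{u:eq:normalized}, $V_\alpha=k^{-1}\sup_x\int\!\!\int (K_B'(q)-K_S'(t))\,x(t,q)\dd q\dd t$ over such $x$.

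\emph{Upper bound.} For the buyer term, integrate by parts against the nonincreasing $X$:
\[
\int_0^1 K_B'X\dd q=K_B(1)X(1)-\int_0^1 K_B\dd X .
\]
Since $-\dd X\ge0$, $\overline K_B\ge K_B$, and the two curves agree at $q=1$, this gives $\int_0^1 K_B'X\le\int_0^1 u_0X$. Symmetrically, $\underline K_S\le K_S$ with equal endpoints and $Y$ nonincreasing give $-\int_0^1 K_S'Y\le-\int_0^1 vY$. Hence the Lagrangian is at most $k^{-1}\int\!\!\int(u_0(q)-v(t))x$. Because $0\le x\le1$, $u_0\le u$, and $v\ge0$, this is at most $k^{-1}\int\!\!\int(u(q)-v(t))_+$.

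\emph{Attainment.} Take $x^*(t,q)=\ind\{u(q)>v(t)\}$. Its buyer interim is nonincreasing in $q$ and its seller interim is nonincreasing in $t$, since $u$ is nonincreasing and $v$ is nondecreasing. The envelope curves are piecewise linear with breakpoints only at atom boundaries, so $u$ and $v$ are constant on each atom interval; therefore $x^*$ is constant on atoms and corresponds to a measurable value-space allocation. Every inequality in the upper bound is then tight:
\begin{itemize}
\item $X^*$ is constant on every ironing interval, where $\overline K_B>K_B$, so $\int(\overline K_B-K_B)\dd X^*=0$; the seller side is identical.
\item Wherever $u_0<0$, we have $x^*=0$ because $v\ge0$, so replacing $u_0$ by $u$ costs nothing.
\item $x^*$ trades exactly where the integrand is positive.
\end{itemize}
This proves \eqref{u:eq:ironed-V}.

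\emph{Primitive order.} The key claim is that $K_B=H_b$ and $K_S=H_s$ identically. Let $m_i=\Prb(B\ge b_i)=1-F_B(i-1)$ be the right end of the buyer's $i$th descending quantile interval. Summing $\xi_hf_B(h)$ over $h\ge i$ with \eqref{u:eq:normalized}, the terms $-a(b_{h+1}-b_h)(1-F_B(h))$ telescope against $a\,b_hf_B(h)$. This leaves $K_B(m_i)=kQ_b(m_i)+a\,b_im_i=H_b(m_i)$. Between consecutive breakpoints $b(q)$ is constant, so $H_b$ is linear there, as is $K_B$; hence $K_B=H_b$. The seller computation is the same, telescoping $a(s_j-s_{j-1})F_S(j-1)$. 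Then
\[
\int_0^qu\ge\int_0^qu_0=\overline K_B(q)\ge K_B(q)=H_b(q),
\qquad
\int_0^qv=\underline K_S(q)\le K_S(q)=H_s(q).
\]

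I expect the main obstacle to be the telescoping bookkeeping. The quantile orientation, the conventions $b_{n+1}=b_n$ and $s_0=s_1$, and the values of $q$ at interval endpoints must all line up, and this fixes the exceptional null set in "almost every $q$." If the identity fails by a boundary term, I would fall back to checking the inequality directly at the breakpoints and then invoking concavity and convexity of the envelopes between them.
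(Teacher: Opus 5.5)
Your ironing argument for \eqref{u:eq:ironed-V} is sound and more self-contained than the paper's. The paper invokes \cite[Lemma~2.2]{LQW26} for the positive-part formula with slopes $u_0,v$, then uses $v\ge0$ to replace $u_0$ by $(u_0)_+$. You instead integrate by parts against the nonincreasing interims and check tightness for $\ind\{u>v\}$ (constant on ironing intervals, zero where $u_0<0$), which reproves that lemma directly.

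\textbf{The gap.} The primitive-order step rests on a false identity: $K_B$ and $H_b$ agree only at the right endpoints $m_i=1-F_B(i-1)$ of the buyer's quantile intervals. The term $aq\,b(q)$ makes $H_b$ jump at every breakpoint, while $K_B$ is continuous. Two affine functions that agree at one endpoint of $(l_i,m_i]$, where $l_i=1-F_B(i)$, need not coincide. Indeed $K_B(l_i)=kQ_b(l_i)+a\,l_ib_{i+1}$, but $H_b(l_i^+)=kQ_b(l_i)+a\,l_ib_i$, so
\[
 K_B(q)-H_b(q)=a\,[1-F_B(i)]\,(b_{i+1}-b_i)\,\frac{m_i-q}{f_B(i)},\qquad l_i<q\le m_i .
\]
This is strictly positive inside every interval with $i<n$. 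For example, with $B\in\{0,1\}$ equally likely, on $(\tfrac12,1]$ one has $K_B(q)=\tfrac12-a(q-\tfrac12)$ while $H_b(q)=k/2$; these agree only at $q=1$. Symmetrically, on the seller interval $(F_S(j-1),F_S(j)]$,
\[
 K_S(q)-H_s(q)=-a\,F_S(j-1)\,(s_j-s_{j-1})\,\frac{F_S(j)-q}{f_S(j)}.
\]

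\textbf{The fix.} What saves the lemma is the \emph{sign} of these gaps, forced by $b_{i+1}>b_i$ and $s_j>s_{j-1}$. It gives $K_B\ge H_b$ and $K_S\le H_s$, so your final chain holds once the equalities $K_B=H_b$ and $K_S=H_s$ are replaced by these inequalities. This explicit within-interval subtraction is exactly the paper's argument (``the jumps are nonnegative''). Your fallback of checking breakpoints and then using concavity or convexity of the envelopes can also be made to work. It does so only after you compare with the one-sided limits $H_b(l_i^+)$ and $H_s(F_S(j-1)^+)$ rather than the breakpoint values, and verify this same sign. As written, neither your identity nor the fallback identifies that step.
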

\begin{proof}[Proof of \cref{u:lem:ironing}]
After scaling its scores by $k$, the fixed-multiplier formula of
\cite[Lemma~2.2]{LQW26} gives the positive-part expression with slopes $u_0,v$.
Its strict-score allocation is constant on ironing blocks and is
pointwise monotone. Since $v\ge0$, replacing $u_0$ by $(u_0)_+$
does not change that formula. 

To compare cumulative values, telescoping the coefficients gives
$K_B=H_b$ and $K_S=H_s$ at the right endpoints of their quantile intervals.
Within those intervals, direct subtraction gives
\[
 K_B(q)-H_b(q)
 =\revise{a q_{i-1}(b_{i-1}-b_i)
   \left(1-\frac{q-q_{i-1}}{w_i}\right)}{a[1-F_B(i)](b_{i+1}-b_i)
   \left(1-\frac{q-[1-F_B(i)]}{f_B(i)}\right)},
\]
\[
 K_S(q)-H_s(q)
 =\revise{-a Q_{j-1}(s_j-s_{j-1})
   \left(1-\frac{q-Q_{j-1}}{z_j}\right)}{-a F_S(j-1)(s_j-s_{j-1})
   \left(1-\frac{q-F_S(j-1)}{f_S(j)}\right)}.
\]
The jumps are nonnegative, so
\begin{equation}\label{u:eq:discrete-order}
 K_B\ge H_b,\qquad K_S\le H_s
\end{equation}
almost everywhere, irrespective of endpoint conventions. Combining
these inequalities with the majorant and minorant yields
\[
 \int_0^q u\ge\overline K_B\ge K_B\ge H_b,\qquad
 \int_0^q v=\underline K_S\le K_S\le H_s.
\]
\end{proof}

\subsection{Positive Reconstruction}\label{u:sec:reconstruction}
The ironed coefficients describe which trades maximize the dual objective. To compare welfare, we reconstruct value quantiles while controlling their cumulative order. This amounts to inverting the virtual-value equation. For a bounded nonnegative function $h$ on $(0,1]$, define
\begin{equation}\label{u:eq:T}
 (T_a h)(q)=\frac{q^{-1/a}}{a}
             \int_0^q t^{1/a-1}h(t)\dd t.
\end{equation}
This is a positive linear operator. Its weights sum to one, so it preserves bounds. It is the unique bounded, locally absolutely continuous solution of
\begin{equation}\label{u:eq:T-ode}
 f(q)+aq f'(q)=h(q)
\end{equation}
almost everywhere. For monotone $h$, the weighted average in \cref{u:eq:T} lies between its earlier values and its value at $q$. \cref{u:eq:T-ode} then shows that $T_a h$ has the same direction of monotonicity.

Set $\widetilde b=T_a u$ and $\widetilde s=T_a v$. They are valid bounded, nonnegative buyer and seller quantiles. Let $\widetilde A=\E\widetilde S$ and $\widetilde G=\E(\widetilde B-\widetilde S)_+$, with independent quantiles.

\begin{lemma}[Welfare comparison]\label{u:lem:order}
The reconstruction satisfies
\begin{equation}\label{u:eq:welfare-order}
 Q_{\widetilde b}\ge Q_b,\qquad
 Q_{\widetilde s}\le Q_s,\qquad
 \widetilde A\le A,\qquad \widetilde G\ge G.
\end{equation}
Consequently, for every $\beta,\kappa\ge0$,
\begin{equation}\label{u:eq:deficit-order}
 \kappa A+V_{\revise{\lambda}{\alpha}}-\beta G
 \ge \kappa\widetilde A+V_{\revise{\lambda}{\alpha}}-\beta\widetilde G.
\end{equation}
\end{lemma}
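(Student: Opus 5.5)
The plan is to turn the cumulative order of \cref{u:lem:ironing} into a cumulative order of reconstructed quantiles by undoing the operator $H$. The key observation is that for a bounded quantile function $f$, $H_f=kQ_f+aqQ_f'$. Hence $Q_f$ solves the linear equation $kQ+aqQ'=H$, and its unique bounded solution vanishing at $0$ is
\[
 Q(q)=\frac{q^{-k/a}}{a}\int_0^q t^{k/a-1}H(t)\dd t ,
\]
a positive linear map of $H$. Applying the same identity to $\widetilde b=T_au$, which by \eqref{u:eq:T-ode} satisfies $\widetilde b+aq\widetilde b'=u$, I will check that $H_{\widetilde b}(q)=\int_0^q u$. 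Indeed, both sides vanish at $0$ and have derivative $k\widetilde b+a(q\widetilde b)'=\widetilde b+aq\widetilde b'=u$. The same holds for the seller: $H_{\widetilde s}=\int_0^q v$.

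With this in hand, \eqref{u:eq:primitive-order} reads $H_{\widetilde b}\ge H_b$ and $H_{\widetilde s}\le H_s$ almost everywhere. Positivity of the inverse map then gives $Q_{\widetilde b}\ge Q_b$ and $Q_{\widetilde s}\le Q_s$ on $[0,1]$. Taking $q=1$ in the seller inequality gives $\widetilde A=Q_{\widetilde s}(1)\le Q_s(1)=A$.

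For the gains I will use the quantile representation of the first-best gains:
\[
 G=\int_0^1\!\!\int_0^1 (b(q)-s(t))_+\dd q\dd t .
\]
For fixed $t$, the map $b\mapsto\int_0^1(b(q)-c)_+\dd q$ is the supremum over $m$ of $Q_b(m)-cm$, because $b$ is nonincreasing and the optimal $m$ is the mass above $c$. Hence $Q_{\widetilde b}\ge Q_b$ gives $\E(\widetilde B-c)_+\ge\E(B-c)_+$ for every $c$, so $\widetilde G(\cdot)$ with the original seller already dominates $G$.

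Symmetrically, $\E(c-S)_-$ type functionals can be handled through $\E(c-S)_+=\sup_m(cm-Q_s(m))$, using that $s$ is nondecreasing. Since
\[
 (b-s)_+=\int_0^\infty \ind_{s\le c<b}\dd c ,
\]
I rewrite the gains as $\int_0^\infty \Prb(S\le c)\Prb(B>c)\dd c$. A cleaner route is to swap one agent at a time. The function $\E(B-s)_+$ is convex and nonincreasing in $s$. The condition $Q_{\widetilde s}\le Q_s$ means $\widetilde S$ is smaller in the increasing-concave order, so $\E\varphi(\widetilde S)\ge\E\varphi(S)$ for every convex nonincreasing $\varphi$. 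Applying this with $\varphi(s)=\E(\widetilde B-s)_+$ gives $\widetilde G\ge\E(\widetilde B-S)_+\ge G$.

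The final inequality \eqref{u:eq:deficit-order} is then immediate. The dual value $V_\alpha$ appears on both sides unchanged; the goal is not to recompute it for the reconstructed priors, only to carry the same number. Since $\kappa,\beta\ge0$, we get $\kappa A\ge\kappa\widetilde A$ and $-\beta G\ge-\beta\widetilde G$.

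The main obstacle is the inversion step. I must verify that $H\mapsto Q$ is really order preserving when the inequalities hold only almost everywhere and the functions are merely bounded. I also need the integral $\int_0^q t^{k/a-1}H$ to converge, with the boundary term at $0$ vanishing. This holds because $Q=O(q)$, so $q^{k/a}Q\to0$.

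A second point to check carefully is the identification $H_{\widetilde b}=\int_0^q u$. It requires local absolute continuity of $T_au$, which the text just after \eqref{u:eq:T-ode} asserts, and the convention that $u=(u_0)_+$ is bounded.
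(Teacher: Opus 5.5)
Your proposal is correct and follows essentially the paper's proof. You use the same positive-kernel inversion $Q_f(q)=\frac{q^{-k/a}}{a}\int_0^q t^{k/a-1}H_f(t)\dd t$, applied to the primitive order of \cref{u:lem:ironing} via $H_{T_a h}=\int_0^q h$, then evaluate at $q=1$ for the seller mean and compare gains buyer-first, then seller. Your appeal to the increasing-concave order in the seller step is only a repackaging of the paper's stop-loss identity $\E(z-S)_+=\max_q\{zq-Q_s(q)\}$, integrated over $z=\widetilde B$.
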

\begin{proof}[Proof of \cref{u:lem:order}]
For $f=T_a h$, differentiating $kQ_f+aqf$ and using \cref{u:eq:T-ode} yields
\[
 H_f(q)=\int_0^q h(t)\dd t.
\]
For any bounded $f$, the identity $aqQ_f'(q)+kQ_f(q)=H_f(q)$ has the solution
\begin{equation}\label{u:eq:positive-resolvent}
 Q_f(q)=\frac{q^{-k/a}}a\int_0^q t^{k/a-1}H_f(t)\dd t.
\end{equation}
The integration constant is zero since $Q_f(q)=O(q)$ at zero. The kernel is positive. We apply it to \cref{u:eq:primitive-order} to obtain the first two inequalities of \cref{u:eq:welfare-order}. Evaluating the seller inequality at $q=1$ gives $\widetilde A\le A$.

For a descending buyer quantile and an ascending seller quantile, respectively,
\begin{align}
 \E(B-z)_+&=\max_{0\le q\le1}\{Q_b(q)-zq\},\label{u:eq:stoploss-B}\\
 \E(z-S)_+&=\max_{0\le q\le1}\{zq-Q_s(q)\}.\label{u:eq:stoploss-S}
\end{align}
Indeed, the maximizing prefix consists exactly of the quantiles where the relevant integrand is positive; ties do not affect the integral. Thus the buyer primitive inequality increases the first stop-loss expectation for every $z$, while the seller primitive inequality increases the second. We apply the buyer comparison at each seller value and then the seller comparison at each reconstructed buyer value. Independence permits integrating both comparisons, yielding $\widetilde G\ge G$. \cref{u:eq:deficit-order} follows because $\beta$ and $\kappa$ are nonnegative.
\end{proof}

\subsection{Common Power-Law Layers}\label{u:sec:layers}

We decompose the monotone coefficients $u,v$ into indicator functions at a common threshold and reconstruct each pair with $T_a$. This gives a family of power-law priors. The seller mean and dual value decompose exactly; efficient gains satisfy the inequality needed for \eqref{eq:affine}.
For $t\ge0$, let
$r_t=|\{q:u(q)>t\}|$ and $p_t=|\{q:v(q)\le t\}|$.
Here $|\cdot|$ denotes Lebesgue measure. Up to null sets, monotonicity gives the layer-cake identities
\begin{equation}\label{u:eq:layers}
 u(q)=\int_0^\infty\ind_{\{q<r_t\}}\dd t,\qquad
 v(q)=\int_0^\infty\ind_{\{q>p_t\}}\dd t.
\end{equation}
All these integrals have a finite effective range. Define
$b_{a,r}=T_a\ind_{\{q<r\}}$ and $s_{a,p}=T_a\ind_{\{q>p\}}$.
Explicit integration of \cref{u:eq:T} gives
\begin{equation}\label{u:eq:layer-quantiles}
 b_{a,r}(q)=
 \begin{cases}1,&q\le r,\\(r/q)^{1/a},&q>r,\end{cases}
 \qquad
 s_{a,p}(q)=
 \begin{cases}0,&q\le p,\\1-(p/q)^{1/a},&q>p.\end{cases}
\end{equation}
The corresponding independent priors have, for $0<s,b<1$,
\begin{equation}\label{u:eq:layer-distributions}
 F_S(s)=\min\{1,p(1-s)^{-a}\},\qquad
 \revise{\Prb(B>b)}{1-F_B(b)}=\min\{1,r b^{-a}\}.
\end{equation}
They include an atom $p$ at $S=0$ and an atom $r$ at $B=1$.

\begin{lemma}[Layer quantities]\label{u:lem:layer-quantities}
A layer with parameters $(a,p,r)$ has seller mean $A_a(p)$ and first-best gains $I_a(p,r)$. Moreover,
\begin{align}
 \widetilde A&=\int_0^\infty A_a(p_t)\dd t,\label{u:eq:layer-A}\\
 V_{\revise{\lambda}{\alpha}}&=\int_0^\infty\frac{p_t r_t}{1-a}\dd t,\label{u:eq:layer-V}\\
 \widetilde G&\le\int_0^\infty I_a(p_t,r_t)\dd t.\label{u:eq:layer-G}
\end{align}
\end{lemma}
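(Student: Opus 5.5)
Three ingredients do all the work: linearity of $T_a$ for the endowment, an exact pairwise identity for the dual value, and a subadditivity step for the gains. First I compute the single-layer quantities from \eqref{u:eq:layer-quantiles}. The seller mean is
\[
\int_0^1 s_{a,p}(q)\dd q=(1-p)-p^{1/a}\int_p^1 q^{-1/a}\dd q .
\]
For $a<1$ the last integral equals $\frac{a}{1-a}(p^{1-1/a}-1)$. This gives $1-p-\frac{a}{1-a}(p-p^{1/a})=1-\frac{p-ap^{1/a}}{1-a}=A_a(p)$.

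For the gains I write $\E(B-S)_+=\int_0^1\Prb(B>z\ge S)\dd z$, which holds because both variables lie in $[0,1]$. By independence and \eqref{u:eq:layer-distributions} the integrand is $\min\{1,rz^{-a}\}\min\{1,p(1-z)^{-a}\}$, and integrating it gives $I_a(p,r)$.

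Next, \eqref{u:eq:layer-A}. By \eqref{u:eq:layers} and Fubini (the integrand is bounded with finite effective range), $v=\int_0^\infty\ind_{\{q>p_t\}}\dd t$. Since $T_a$ is positive linear with an integrable kernel, $\widetilde s=T_av=\int_0^\infty s_{a,p_t}\dd t$, and likewise $\widetilde b=\int_0^\infty b_{a,r_t}\dd t$. Integrating over $q$ then gives $\widetilde A=\int_0^\infty A_a(p_t)\dd t$.

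For \eqref{u:eq:layer-V} I start from \cref{u:lem:ironing}, which gives $kV_\alpha=\iint(u(q)-v(t))_+\dd q\dd t$. The key identity is
\[
(u-v)_+=\int_0^\infty \ind_{\{u>\tau\ge v\}}\dd\tau .
\]
This is exactly where the common threshold matters: a single level $\tau$ slices both scores simultaneously. Integrating over the independent quantiles, the measure of $\{q:u(q)>\tau\}$ is $r_\tau$ and that of $\{t:v(t)\le\tau\}$ is $p_\tau$. So $kV_\alpha=\int_0^\infty p_\tau r_\tau\dd\tau$, which is \eqref{u:eq:layer-V} with $k=1-a$. The boundary conventions (strict versus weak inequality) only change $\tau$ on a null set, by monotonicity.

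For \eqref{u:eq:layer-G} I use the representations $\widetilde B=\int b_{a,r_t}(Q)\dd t$ and $\widetilde S=\int s_{a,p_t}(Q')\dd t$, realized on the same independent quantile pair $(Q,Q')$. This coupling has the correct marginals and gives the independent joint law. Then
\[
\widetilde B-\widetilde S=\int_0^\infty\bigl(b_{a,r_t}(Q)-s_{a,p_t}(Q')\bigr)\dd t .
\]
Since $(\int f)_+\le\int f_+$, we get $(\widetilde B-\widetilde S)_+\le\int(b_{a,r_t}-s_{a,p_t})_+\dd t$ pointwise. Taking expectations and applying Tonelli yields $\widetilde G\le\int I_a(p_t,r_t)\dd t$, where I identify $\E(b_{a,r}(Q)-s_{a,p}(Q'))_+$ with the single-layer gains computed in the first step.

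The main obstacle is bookkeeping rather than ideas, and it sits in three places. One is degenerate layers: $p_t\in\{0,1\}$ or $r_t\in\{0,1\}$, outside the open parameter range. I would extend $A_a$ and $I_a$ continuously to these values (for example $r=0$ gives a zero layer) and check that the formulas still hold. Another is justifying that $T_a$ commutes with the $\dd t$ integral; this is a Fubini argument with a bounded integrand over a finite effective range of $t$. The last is the null-set ambiguity in the layer-cake identities \eqref{u:eq:layers}, which I would handle using the monotonicity of $u$ and $v$.
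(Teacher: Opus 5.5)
Your proposal is correct and follows essentially the same route as the paper: you integrate the seller quantile explicitly, use the tail-product formula for the gains, and pass $T_a$ through the layer-cake integrals by linearity and positivity. You also use the same threshold identity $(u-v)_+=\int_0^\infty \ind_{\{v\le \tau<u\}}\dd\tau$ with independent quantiles for $V_\alpha$, and subadditivity of the positive part on each quantile pair for $\widetilde G$.
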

\begin{proof}[Proof of \cref{u:lem:layer-quantities}]
Integrating the seller quantile in \cref{u:eq:layer-quantiles} yields
\[
 \int_p^1\bigl(1-(p/q)^{1/a}\bigr)\dd q
 =1-\frac{p-a p^{1/a}}{1-a}.
\]
For any independent nonnegative values in $[0,1]$, Fubini's theorem applied to the interval between $S$ and $B$ gives
\begin{equation}\label{u:eq:tail-product}
 \E(B-S)_+=\int_0^1 F_S(t)\revise{\Prb(B>t)}{[1-F_B(t)]}\dd t.
\end{equation}
Changing endpoint conventions affects only a null set of thresholds. This proves the formula for $I_a$.

Linearity and positivity of $T_a$ allow reconstruction inside the integrals in \cref{u:eq:layers}; integrating seller values gives \cref{u:eq:layer-A}. For real nonnegative $u,v$,
\[
 (u-v)_+=\int_0^\infty\ind_{\{v\le t<u\}}\dd t.
\]
Independence of the two quantiles and \cref{u:eq:ironed-V} give \cref{u:eq:layer-V}. Finally, for each quantile pair,
\[
 \left(\int_0^\infty (b_{a,r_t}-s_{a,p_t})\dd t\right)_+
 \le\int_0^\infty(b_{a,r_t}-s_{a,p_t})_+\dd t.
\]
Integrating this inequality gives \cref{u:eq:layer-G}.
\end{proof}

\begin{proof}[Proof of \cref{thm:affine}]
We prove the unrestricted bilateral implication. Suppose $pr/k+\kappa A_a(p)\ge\beta I_a(p,r)$ for all interior $p,r$ at
the fixed $a$. Continuity extends the inequality to $p,r\in[0,1]$.
Reconstruction and the common layers give
\[
 V_{\revise{\lambda}{\alpha}}+\kappa A-\beta G
 \ge\int_0^\infty
 \left[\frac{p_tr_t}{k}+\kappa A_a(p_t)-\beta I_a(p_t,r_t)\right]\dd t
 \ge0.
\]
Conservative rounding in \cref{sec:extension} extends this from finite to
bounded Borel priors. Necessity follows from the exact layer dual value proved
next.
\end{proof}

\subsection{Extremal Priors and the Boundary Lottery}\label{u:subsec:attaining}
The lower bound used the layer expression $pr/(1-a)$ as a contribution to the dual value. We now show that it is also an upper bound on the gains of every feasible mechanism for that layer. A boundary lottery \revise{attains}{achieves} it under an explicit parameter condition. Fix a layer, and abbreviate
\begin{equation}\label{u:eq:xy}
 k=1-a,\qquad x_0=p^{1/a},\qquad y_0=r^{1/a}.
\end{equation}
The continuous seller support is $(0,1-x_0)$; the continuous buyer support is $(y_0,1)$.

\begin{lemma}[Layer budget identity]\label{u:lem:layer-budget}
For any BIC, interim-IR allocation for the layer priors, expected budget feasibility requires
\[
 \Lambda(x)=\E[(\revise{\phi_B(B)}{\psi_B(B)}-\revise{\phi_S(S)}{\psi_S(S)})x(S,B)]\ge0,
\]
where
\begin{equation}\label{u:eq:layer-virtual}
 \revise{\phi_B(b)}{\psi_B(b)}=-\frac{k}{a}b\quad(y_0<b<1),\quad \revise{\phi_B(1)}{\psi_B(1)}=1,
\end{equation}
\begin{equation}\label{u:eq:layer-cost}
 \revise{\phi_S(s)}{\psi_S(s)}=\frac{1-ks}{a}\quad(0<s<1-x_0),\quad \revise{\phi_S(0)}{\psi_S(0)}=0.
\end{equation}
Conversely, monotone interims and $\Lambda(x)\ge0$ suffice for a pointwise-SBB implementation.
\end{lemma}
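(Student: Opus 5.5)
The plan is to specialize the envelope-payment calculation of \cref{sec:foundations} to the layer priors \eqref{u:eq:layer-distributions}, which are mixed: each has a continuous part plus one endpoint atom. I will compute the virtual coefficients directly from the CDFs.

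First I record the supports. On $(y_0,1)$ the buyer has survival $1-F_B(b)=rb^{-a}$ and density $f_B(b)=arb^{-a-1}$. Since $ry_0^{-a}=1$, the lowest buyer type is $y_0$, and there is an atom of mass $r$ at $B=1$. On $(0,1-x_0)$ the seller has $F_S(s)=p(1-s)^{-a}$ and density $f_S(s)=ap(1-s)^{-a-1}$. Since $p\,x_0^{-a}=1$, the seller has no atom at the top endpoint $1-x_0$, and there is an atom $p$ at $S=0$.

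By \cref{sec:foundations}, BIC is equivalent to $x_B$ nondecreasing, $x_S$ nonincreasing, and envelope payments. Interim IR, together with the fact that adding type-independent rents only lowers surplus, pins the maximal buyer payment to
\[
 P_B(b)=bx_B(b)-\int_{y_0}^b x_B(t)\dd t,
\]
and the minimal seller receipt to
\[
 P_S(s)=sx_S(s)+\int_s^{1-x_0}x_S(t)\dd t.
\]
Thus the lowest buyer and the highest seller get zero rent.

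Next I take expectations and apply Fubini. The buyer term becomes
\[
 \E P_B=\E[Bx_B(B)]-\int_{y_0}^1 x_B(t)\,[1-F_B(t)]\dd t.
\]
Here $1-F_B(t)$ for $t<1$ already includes the top atom. On the continuous part, the coefficient of $x_B(b)$ is therefore $b-(1-F_B(b))/f_B(b)=b-b/a=-(k/a)b$. At the atom, the coefficient is just $1$, since no higher type earns rent from it. Symmetrically, the seller term is
\[
 \E P_S=\E[Sx_S(S)]+\int_0^{1-x_0}x_S(t)F_S(t)\dd t,
\]
where $F_S(t)$ includes the atom at $0$. This gives coefficient $s+(1-s)/a=(1-ks)/a$ on $(0,1-x_0)$ and coefficient $0$ at the atom. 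Writing $x_B,x_S$ as conditional expectations of $x(S,B)$ and using independence yields $\Lambda(x)=\E[(\psi_B(B)-\psi_S(S))x(S,B)]$, and WBB is exactly $\Lambda(x)\ge0$.

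For the converse, monotone interims plus the envelope payments above give a BIC, interim-IR, ex ante WBB mechanism whenever $\Lambda(x)\ge0$. Its transfers are integrable because all types and virtual coefficients are bounded. \cref{u:lem:common-transfer} then upgrades it to pointwise SBB without changing the allocation.

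The main obstacle is the rigorous envelope step for mixed distributions. One must check that adjacent-type IC across the junction between the continuous part and the atom still forces these integral envelopes, and that no boundary term appears at $b=1$ or $s=0$. I would handle this exactly as in \cref{sec:foundations}: integrate local IC on the continuous part, treat each atom as a single type whose payment is determined by the envelope limit, and note that monotonicity makes the envelope sufficient for all deviations.
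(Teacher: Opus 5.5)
Your proposal is correct and follows essentially the same route as the paper. You set the boundary rents to zero in the envelope formulas and apply Fubini, which gives the coefficients $b-[1-F_B(b)]/f_B(b)=-kb/a$ and $s+F_S(s)/f_S(s)=(1-ks)/a$ on the continuous parts, with the atoms keeping their raw values $1$ and $0$; the converse then uses a rebate and the common-transfer conversion, as the paper does. The only detail the paper adds is an explicit extension of the allocation to off-support reports (buyer service constant above its top support and zero below its bottom, symmetrically for the seller). Your converse leaves this implicit, but it is routine.
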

\begin{proof}[Proof of \cref{u:lem:layer-budget}]
The continuous envelope argument applies on each support interval. Setting boundary rents to zero maximizes expected buyer revenue and minimizes expected seller cost. Fubini gives
\[
 \E P^0_B=\E[B\revise{X(B)}{x_B(B)}]-\int_{y_0}^1\revise{\Prb(B>t)}{[1-F_B(t)]}\revise{X(t)}{x_B(t)}\dd t,
\]
\[
 \E P^0_S=\E[S\revise{Y(S)}{x_S(S)}]+\int_0^{1-x_0}F_S(t)\revise{Y(t)}{x_S(t)}\dd t.
\]
The densities on the continuous parts are
$f_B(b)=ar b^{-a-1}$ and $f_S(s)=ap(1-s)^{-a-1}$.
Thus the continuous buyer coefficient is $b-\revise{\Prb(B>b)}{[1-F_B(b)]}/f_B(b)=-kb/a$, and the seller coefficient is $s+F_S(s)/f_S(s)=(1-ks)/a$. The integrals subtract or add no atomic term; hence the endpoint coefficients are the actual values one and zero. Nonzero boundary utilities can only worsen the budget. Sufficiency follows by the same constant rebate and common-transfer construction as in \cref{u:lem:payments}. For off-support reports extend buyer service constantly above its top support and by zero below its bottom support, and seller service constantly below its bottom support and by zero above its top support; envelope payments implement the extension.
\end{proof}

The layer coefficients now identify which profiles can use budget
surplus without reducing the dual objective. This gives the following
optimal allocation whenever the boundary lottery is feasible.

\begin{proposition}[Optimal boundary lottery]\label{u:prop:attaining}
Every layer satisfies $G_{\SB}\le pr/(1-a)$. If
\begin{equation}\label{u:eq:theta-condition}
 p^{(1-a)/a}+r^{(1-a)/a}\le1,
\end{equation}
then equality holds. In that case let
\begin{equation}\label{u:eq:theta}
 \theta=\frac1{2-p^{(1-a)/a}-r^{(1-a)/a}}.
\end{equation}
Trade surely at $(S,B)=(0,1)$, with probability $\theta$ at every other supported profile on $S=0$ or $B=1$, and never at a profile with $S>0$ and $B<1$. This allocation is BIC, interim-IR, pointwise-SBB implementable and is second-best optimal.
\end{proposition}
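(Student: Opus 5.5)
The plan is to get both claims from one computation at the fixed multiplier $\alpha=a/(1-a)$, so that $1+\alpha=1/k$. For the upper bound I will use weak duality. Any BIC, interim-IR, ex ante WBB allocation $x$ for the layer priors has $\Lambda(x)\ge0$ by \cref{u:lem:layer-budget}, and therefore
\[
 G(x)\le G(x)+\alpha\Lambda(x)=\frac1k\E\bigl[(\xi(B)-\sigma(S))x(S,B)\bigr].
\]
Here $\xi=kb+a\psi_B$ and $\sigma=ks+a\psi_S$, with $\psi_B,\psi_S$ taken from \eqref{u:eq:layer-virtual}--\eqref{u:eq:layer-cost}. Substituting gives the following scores:
\begin{itemize}
\item on the continuous buyer part, $\xi=kb-kb=0$, and at the atom $B=1$, $\xi=k+a=1$;
\item on the continuous seller part, $\sigma=ks+1-ks=1$, and at the atom $S=0$, $\sigma=0$.
\end{itemize}
Hence $\xi-\sigma$ equals $1$ at the profile $(0,1)$, equals $0$ on the rest of the boundary $\{S=0\}\cup\{B=1\}$, and equals $-1$ on the interior. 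With $x\le1$, this gives $G(x)\le k^{-1}\Prb(S=0,B=1)=pr/k$, which is the first claim.

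For equality, I will show that the boundary lottery makes every inequality in this chain tight. The lottery never trades where $\xi-\sigma=-1$, trades surely where $\xi-\sigma=1$, and trades only where $\xi-\sigma=0$ otherwise. So its Lagrangian value is exactly $pr/k$, and it suffices to check two things: monotone interims, and $\Lambda(x)=0$ for the stated $\theta\in[0,1]$. Then $G(x)=pr/k$ and the allocation is feasible. The interims are easy:
\begin{itemize}
\item buyer: $x_B(1)=p+(1-p)\theta$, and $x_B(b)=p\theta$ for $y_0<b<1$;
\item seller: $x_S(0)=r+(1-r)\theta$, and $x_S(s)=r\theta$ for $0<s<1-x_0$.
\end{itemize}
These are nondecreasing and nonincreasing respectively whenever $\theta\le1$. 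Implementability with pointwise SBB then follows from the converse part of \cref{u:lem:layer-budget}.

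The remaining step, and the main computational obstacle, is solving $\Lambda(x)=0$ and checking $0\le\theta\le1$ using \eqref{u:eq:theta-condition}. The budget is linear in $\theta$:
\[
 \Lambda=r\bigl(p+(1-p)\theta\bigr)-\tfrac{k}{a}\,p\theta\,\E[B;B<1]-r\theta\,\E[\psi_S(S);S>0],
\]
and the $(0,0)$ seller-atom term vanishes because $\psi_S(0)=0$. I will compute the partial moments from the densities $arb^{-a-1}$ and $ap(1-s)^{-a-1}$. For the buyer, $\E[B;B<1]=\frac{ar}{k}\bigl(1-r^{(1-a)/a}\bigr)$, using $y_0^{1-a}=r^{(1-a)/a}$. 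For the seller, $\E[\psi_S;S>0]=\frac1a\E[1-kS;S>0]$, which after integration also collapses to an expression involving $p$ and $p^{(1-a)/a}$. Collecting terms, I expect the equation to reduce to $pr=pr\,\theta\bigl(2-p^{(1-a)/a}-r^{(1-a)/a}\bigr)$ after cancellation. That gives \eqref{u:eq:theta}, with $\theta\le1$ exactly when \eqref{u:eq:theta-condition} holds. If the algebra does not collapse this cleanly, I will fall back on the slopes: $\Lambda$ is positive at $\theta=0$, and I will show it is nonpositive at $\theta=1$ precisely under \eqref{u:eq:theta-condition}. Intermediate values then give a root in $[0,1]$, and I will identify it with \eqref{u:eq:theta}.
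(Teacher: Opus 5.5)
Your proposal is correct and matches the paper's proof essentially step for step. You use the same multiplier $\alpha=a/k$, the same score table with values $1,0,-1$ giving the weak-duality bound $pr/k$, and the same linear budget equation. That equation does collapse as you expect: $\E[\psi_S(S);S>0]=1-p^{1/a}$ and $\E[B;B<1]=\frac ak(r-r^{1/a})$ give $\Lambda=pr\bigl[1-\theta\bigl(2-p^{(1-a)/a}-r^{(1-a)/a}\bigr)\bigr]$, so your intermediate-value fallback is not needed; the remaining differences (explicit interims, and deducing $G=pr/k$ from tightness of the Lagrangian instead of computing it directly) are only cosmetic.
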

\begin{table}[t]
\centering
\begin{tabular}{lcc}
\toprule
Value profile & Lagrangian coefficient & Trade probability\\
\midrule
$S=0$, $B=1$ & $1/(1-a)$ & $1$\\
$S=0$, $y_0\le B<1$ & $0$ & $\theta$\\
$0<S\le1-x_0$, $B=1$ & $0$ & $\theta$\\
$0<S\le1-x_0$, $y_0\le B<1$ & $-1/(1-a)$ & $0$\\
\bottomrule
\end{tabular}
\caption{The optimal boundary lottery and the dual certificate at $\revise{\lambda}{\alpha}=a/(1-a)$. The corner contributes positive budget surplus; randomization on the zero-coefficient boundary profiles uses that surplus without changing the Lagrangian value.}
\label{u:tab:lottery}
\end{table}

\begin{proof}[Proof of \cref{u:prop:attaining}]
\emph{Upper bound.} We choose $\revise{\lambda}{\alpha}=a/k$. The coefficient
$(b-s)+\revise{\lambda}{\alpha}(\revise{\phi_B(b)}{\psi_B(b)}-\revise{\phi_S(s)}{\psi_S(s)})$
equals $1/k$ at $(0,1)$, zero on the remaining parts of the two endpoint lines, and $-1/k$ when both values are in their continuous parts. Every feasible mechanism therefore satisfies
\begin{equation}\label{u:eq:corner-bound}
 G(x)\le G(x)+\revise{\lambda}{\alpha}\Lambda(x)
 =\frac1k\left(pr\,x(0,1)
       -\E[x(S,B)\ind_{\{S>0,\,B<1\}}]\right)
 \le\frac{pr}{k}.
\end{equation}
The four coefficient values and the \revise{attaining}{optimal} allocation are summarized in \cref{u:tab:lottery}.

\emph{Attainment.} Under \cref{u:eq:theta-condition}, $1/2\le\theta\le1$. The proposed allocation is nondecreasing in the buyer's value and nonincreasing in the seller's value, so its interims are monotone. Elementary tail integration gives
$\E[B\ind_{\{B<1\}}]=a(r-y_0)/k$ and $\E[(1-S)\ind_{\{S>0\}}]=a(p-x_0)/k$.
Its budget and gains are consequently
\begin{align*}
 \Lambda&=pr-\theta(2pr-py_0-rx_0)=0,\\
 G&=pr+\theta\frac ak(2pr-py_0-rx_0)=\frac{pr}{k}.
\end{align*}
Here $py_0/(pr)=r^{k/a}$ and $rx_0/(pr)=p^{k/a}$ give the identity for $\theta$. \cref{u:lem:layer-budget} supplies the payments. Equality with \cref{u:eq:corner-bound} proves optimality.
\end{proof}

The corner-only rule \revise{attains}{achieves} the layer Lagrangian $pr/k$ without imposing
budget feasibility. Thus $V_{\revise{\lambda}{\alpha}}=pr/k$ for every layer, even when
\eqref{u:eq:theta-condition} fails. This proves necessity in the unrestricted
part of \cref{thm:affine}.

\begin{proof}[Proof of \cref{u:thm:reduction}]
Let $\eta$ be the infimum in \eqref{u:eq:rho-star-definition}. It is nonnegative
and at most one: send $p\uparrow1$ and then $r\downarrow0$ at fixed $a$,
using $A_a(1)=0$ and $I_a(1,r)=(r-ar^{1/a})/k$.
By definition, every layer satisfies
\[
 pr/k+(1-\eta)A_a(p)\ge\eta I_a(p,r).
\]
The unrestricted part of \cref{thm:affine}, budget duality, and
conservative rounding give
$A+G_{\SB}\ge\eta(A+G)$ for every bounded bilateral instance.
\Cref{sec:extension} removes the boundedness assumption.
Conversely, the exact layer dual value and weak duality give
\[
 \frac{A+G_{\SB}}{A+G}
 \le \frac{A_a(p)+pr/k}{A_a(p)+I_a(p,r)}
\]
on every layer. Taking the infimum proves equality.
Attainment and uniqueness follow from \cref{u:thm:scalar} and the
lottery-condition check immediately after its statement.
\end{proof}

\subsection{Global Evaluation and Attainment}
It remains to locate the scalar minimum and verify the lottery
condition there. The global certificate yields both conclusions.

\begin{theorem}[Scalar minimization]\label{u:thm:scalar}
The function $R_{\mathrm U}$ has a unique global minimizer $(a_*,p_*,r_*)$ in $(0,1)^3$, and
$0.8882516903<\rhostar<0.8882516904$.
Moreover, $(\rhostar,a_*,p_*,r_*)$ is the unique solution of
\begin{equation}\label{u:eq:stationary}
 D=\partial_aD=\partial_pD=\partial_rD=0
\end{equation}
in the box
\begin{equation}\label{u:eq:smallbox}
 (0.8882516,0.8882517)\times(0.323,0.324)
 \times(0.654,0.655)\times(0.325,0.326).
\end{equation}
Numerically, $(a_*,p_*,r_*)\approx(0.3231442122,0.6544790312,0.3251658297)$.
\end{theorem}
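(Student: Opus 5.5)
The plan is a computer-assisted global minimization of $R_{\mathrm U}$ over $(0,1)^3$, in three stages: boundary exclusion, a finite interval cover, and a local Newton--Krawczyk uniqueness certificate.

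\emph{Reformulation.} Since $R_{\mathrm U}<\rho$ holds iff $D(\rho,a,p,r)<0$, it suffices to show two things at a fixed rational $\rho_-=0.8882516903$ and $\rho_+=0.8882516904$. First, $D(\rho_-,\cdot)>0$ on the whole open cube. Second, $D(\rho_+,\cdot)$ takes a negative value at an explicit rational point, which gives the upper bracket at once. The function $I_a(p,r)$ is piecewise explicit. Put $x_0=p^{1/a}$ and $y_0=r^{1/a}$; then the integrand in \eqref{u:eq:I} has at most three regimes, $t<y_0$, $y_0<t<1-x_0$, and $t>1-x_0$, with the overlap handled by the layer ordering. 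Each regime is an incomplete Beta integral $\int t^{-a}(1-t)^{-a}\dd t$ or an elementary power. So $D$ and its partials can be evaluated in interval arithmetic with rigorous enclosures of $B(\cdot;1-a,1-a)$.

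\emph{Boundary exclusion.} Next I would show that $D(\rho_-,\cdot)$ stays bounded away from $0$, or is positive with a uniform margin, near each face of the cube: $a\to0$, $a\to1$, $p\to0$, $p\to1$, $r\to0$, $r\to1$. These are one-variable asymptotic estimates. For example, as $p\to1$ we have $A_a\to0$ and $I_a(1,r)=(r-ar^{1/a})/k$. Then $D\ge r/k-\rho(r-ar^{1/a})/k>0$, and this should come with a margin proportional to $r$. As $a\to0$, $I_a\to pr$-type limits while $pr/k\approx pr$, which again gives a margin. As $a\to1$, $pr/k$ blows up. The delicate corners are $p\to1,r\to0$, where the ratio tends to $1$ but $D$ tends to $0$. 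There I would divide $D$ by $r$ (or by $(1-p)+r$) and bound the normalized quantity instead. This normalization near degenerate corners is the step I expect to be the main obstacle. The cover can only certify strictly positive functions on compact sets, so every face on which $D$ vanishes must be handled analytically after a suitable rescaling.

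\emph{Finite cover.} On the remaining compact box I would run an adaptive branch-and-bound. On each sub-box, certify $D(\rho_-,\cdot)>0$ by interval evaluation, or by a first-order Taylor bound using the gradient enclosure. The exception is a small neighbourhood of the candidate $(a_*,p_*,r_*)$. Sub-boxes where $D$ cannot be shown positive but some partial derivative is shown to be nonzero are discarded as containing no minimizer of $D(\rho,\cdot)$. That is valid for the stationary-system uniqueness claim; for positivity, it is valid provided the box face in that direction is already certified.

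\emph{Local certificate.} In the box \eqref{u:eq:smallbox} I would apply an interval Newton or Krawczyk test to the four-equation system \eqref{u:eq:stationary} in the unknowns $(\rho,a,p,r)$. This proves existence and uniqueness of a solution there. Equivalently, one can certify that the Hessian of $D(\rho,\cdot)$ in $(a,p,r)$ is positive definite on the box, so $D$ is strictly convex there, and that $\nabla D$ has opposite signs on opposite faces; this gives a unique minimizer. Then $\rho_{\mathrm U}$ equals the $\rho$-component of that solution. It is pinned between $\rho_-$ and $\rho_+$ by the strict signs of $D(\rho_\pm,\cdot)$ at the minimizer. Global uniqueness then follows from the cover excluding every other region. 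Finally, I would evaluate $p_*^{k/a}+r_*^{k/a}$ at the certified point and check that it is $\le1$, which verifies \eqref{u:eq:theta-condition} and hence the attainment claimed in \cref{u:thm:reduction}.
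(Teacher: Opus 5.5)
Your architecture matches the paper's: analytic exclusion near the faces, an interval cover with positive-value and signed-partial leaves, and a local convexity/uniqueness certificate near the candidate. The gap is the level at which you run the cover, and it breaks the uniqueness claims.

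\textbf{The gap.} A leaf certified by $D(\rho_-,\cdot)>0$ only shows $R_{\mathrm U}>\rho_-$ on that leaf. The true minimum $\rho_{\mathrm U}$ also exceeds $\rho_-$. So such a leaf can contain a second global minimizer. It can even contain a point with $R_{\mathrm U}$ below the $\rho$-component $\rho^\dagger$ of your Krawczyk solution, in which case that solution is only a local minimizer. Hence neither ``global uniqueness follows from the cover'' nor ``$\rho_{\mathrm U}$ equals the $\rho$-component of that solution'' is justified; only the bracket survives. Your signed-partial leaves have the same defect. A global minimizer is a critical point of $D(\rho_{\mathrm U},\cdot)$, not of $D(\rho_-,\cdot)$, so a partial signed at an unspecified or wrong level excludes nothing.

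\textbf{How the paper avoids it.} Every nonlocal test is run at one level above a certified upper bound for the minimum, $\overline\rho=0.8882517>\rho_{\mathrm{hi}}$. The cover then proves $D(\overline\rho,\cdot)>0$ on $K_0\setminus\operatorname{int}K$ by the global-minimum argument of \cref{u:lem:cover-logic}, rather than by chaining certified box faces: a nonpositive minimum there would be an interior critical point of $D(\overline\rho,\cdot)$, which no leaf admits. This gives several things at once.
\begin{itemize}
\item $R_{\mathrm U}>\overline\rho$ outside $\operatorname{int}K$ while $R_{\mathrm U}(c)<\rho_{\mathrm{hi}}$, so every global minimizer lies in $\operatorname{int}K$.
\item Positivity at the lower target outside $K$ follows for free, since $D$ is decreasing in $\rho$.
\item The global cover only needs margins of order $10^{-8}$; all $10^{-10}$-level precision is confined to $K$.
\end{itemize}
The price is that $K$ must contain the sublevel set $\{R_{\mathrm U}\le\overline\rho\}$. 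The paper therefore uses a cube of radius $0.012$ and certifies $\nabla^2D\succeq0.01I_3$ on it for all $\rho\in[0.8882516,\overline\rho]$. It checks $\partial K$ via \eqref{u:eq:boundary-point}, and gets the fine bracket and $\|z_*-c\|_2\le\|\nabla D(\rho_{\mathrm U},c)\|_2/m$ from strong convexity. Your Krawczyk test on the four-equation system is an acceptable substitute for this last step once localization is repaired.

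\textbf{Face estimates.} Your face estimates must hold uniformly on whole slabs, not just asymptotically. As $a\to0$, $D$ tends to $(1-\rho)(1-p+pr)$, which vanishes at the $p\to1$, $r\to0$ corner, so no uniform margin comes from the limit. The paper uses two comparison inequalities instead:
\begin{itemize}
\item $I_a(p,r)\le pr\,\mathrm B(1-a,1-a)$, giving $R_{\mathrm U}\ge K(a)^{-1}$ with $K(a)=(1-a)\mathrm B(1-a,1-a)$ increasing.
\item $I\le\E B\le r/(1-a)$, giving $D\ge CA_a(p)+(p-\overline\rho)r/(1-a)>0$ for $p\ge p_+$.
\end{itemize}
With these, the corner you flagged needs no rescaling beyond the $D/(pr)$ box bound.
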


The full finite certificate in \cref{app:unrestricted} proves the displayed enclosure and a stronger enclosure recorded in \cref{u:subsec:point}.
At the minimizing triple, \eqref{u:eq:smallbox} gives $a<.324<1/3$,
$p<.655$, and $r<.326$. Thus
$p^{k/a}+r^{k/a}<.655^2+.326^2<1$, so the boundary lottery \revise{attains}{achieves}
the exact ratio $\rho_{\mathrm U}$.

This completes the unrestricted bilateral characterization and its
\revise{attaining}{extremal} example. We next impose the MHR restriction on buyers.

\section{Sharp Welfare for MHR Buyers}\label{sec:mhr-start}
The power-law buyer layers above are not generally MHR. We must preserve
concavity in the exponential coordinate while obtaining an exact dual
decomposition, an exact seller mean, and an upper decomposition of efficient
gains. The seller reconstruction from the preceding section remains
available because it leaves the buyer unchanged.

\begin{lemma}[One-sided positive reconstruction]\label{lem:seller-reconstruction}
For a bounded MHR buyer and any finite seller, the seller can be replaced
by a nonnegative seller with a nondecreasing normalized score, the same $V_{\revise{\lambda}{\alpha}}$, no larger mean,
and no smaller efficient gains.
\end{lemma}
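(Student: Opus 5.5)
Keep the MHR buyer fixed and apply to the seller alone the ironing and reconstruction steps of \cref{u:sec:ironing,u:sec:reconstruction}.

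\textbf{Step 1: iron the seller score.} Fix $\alpha>0$, so $a=\alpha/(1+\alpha)$ and $k=1-a$. The buyer score $u(t)=g(t)-ag'(t)$ from \eqref{eq:buyer-score} is already nondecreasing in the exponential coordinate, so the buyer needs no ironing. For the finite seller, form the cumulative curve $K_S$ with slopes $\sigma_j$ from \eqref{eq:seller-raw}, take its greatest convex minorant $\underline K_S$, and let $v$ be its slope. As in \cref{u:lem:ironing}, $v$ is nonnegative and nondecreasing in the ascending seller quantile. Since $u$ is monotone, the fixed-multiplier formula (the seller half of \cite[Lemma~2.2]{LQW26}) gives
\[
V_\alpha=k^{-1}\E\bigl(u(E)-v(T)\bigr)_+,\qquad T\sim\mathrm{Unif}(0,1)\text{ independent of }E.
\]
The direct-subtraction computation in the proof of \cref{u:lem:ironing} gives $K_S\le H_s$, hence $\int_0^q v\le H_s(q)$.

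\textbf{Step 2: reconstruct the seller.} Set $\widetilde s=T_a v$. By the discussion after \eqref{u:eq:T-ode}, this is a bounded, nonnegative, nondecreasing quantile. I will then check three properties.
\begin{itemize}
\item \emph{Score.} The reconstructed seller's normalized score is exactly $v$. In continuous form the seller score is $\widetilde s(q)+aq\widetilde s'(q)$, and this equals $v$ by \eqref{u:eq:T-ode}. So its score is nondecreasing and no further ironing occurs, which means $V_\alpha$ is unchanged: the buyer and its score $u$ are untouched, so the formula of Step 1 gives the same value.
\item \emph{Mean.} The positive resolvent \eqref{u:eq:positive-resolvent} applied to $\int_0^q v\le H_s$ gives $Q_{\widetilde s}\le Q_s$. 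At $q=1$ this is $\widetilde A\le A$.
\item \emph{Gains.} By \eqref{u:eq:stoploss-S}, $\E(z-S)_+$ weakly increases for every $z$. Integrating over the unchanged buyer value $z=B$, using independence, gives $\widetilde G\ge G$.
\end{itemize}

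\textbf{Main obstacle.} The delicate point is the exact preservation of $V_\alpha$, which needs two facts.
\begin{itemize}
\item \emph{The score is recovered exactly.} The seller score must be recomputed for a continuous, possibly atom-bearing, reconstructed distribution. Concretely, $\widetilde s$ may be flat on intervals where $v$ is constant, producing atoms. Through the relation $H_{\widetilde s}(q)=\int_0^q v$, the ironed seller score of $\widetilde s$ must still have slope exactly $v$. I would argue this from $K_{\widetilde S}=H_{\widetilde s}$ at atom boundaries, together with convexity of $\int_0^q v$, so that the minorant is the curve itself.
\item \emph{The buyer's score stays monotone.} $u$ must remain monotone for the positive-part formula to hold against an arbitrary seller. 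This is exactly why the MHR coordinate is used.
\end{itemize}

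Finally, the positive-gain and finiteness conventions of \cref{lem:duality} are unaffected, because only the seller's distribution changed.
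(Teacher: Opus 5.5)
Your proposal is correct and is essentially the paper's own argument: keep the monotone MHR score $u=g-ag'$, iron only the seller to get $V_\alpha=k^{-1}\E(u-v)_+$, set $\widetilde s=T_av$ so that its normalized score is exactly $v$, and obtain $\E\widetilde S\le\E S$ and $\E(B-\widetilde S)_+\ge\E(B-S)_+$ from the seller primitive order and the stop-loss identity of \cref{u:lem:order}. One small correction: $T_av$ is generally not finitely supported (above a single bottom atom it consists of strictly increasing power-law pieces, so your worry about interior atoms is unnecessary), which makes your closing remark about the finiteness conventions of \cref{lem:duality} inaccurate, although the lemma does not depend on it.
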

\begin{proof}[Proof of \cref{lem:seller-reconstruction}]
Keep the monotone buyer score $u=g-ag'$ unchanged. The seller half of
\cref{u:lem:ironing} applies verbatim: if $v$ is the
nondecreasing slope of the convex minorant of the seller cumulative score,
the strict threshold rule is constant on its ironing intervals and \revise{attains}{achieves}
\begin{equation}\label{eq:score-V}
 V_{\revise{\lambda}{\alpha}}=\frac1k\E(u-v)_+.
\end{equation}
Set $\widetilde s=T_av$, using the operator in \eqref{u:eq:T}.
Its normalized seller score is $v$, so the optimized dual is unchanged.
The seller primitive order and stop-loss identity in \cref{u:lem:order}
give $\E\widetilde S\le\E S$ and
$\E(B-\widetilde S)_+\ge\E(B-S)_+$.
\end{proof}

\subsection{MHR-Preserving Decomposition}\label{sec:localization}
With the seller reconstructed, common transformations of the two
scores preserve the dual value and seller mean. This is the step
that reduces the buyer to shifted capped exponentials.

\begin{proof}[Proof of \cref{thm:affine}]
We prove the MHR bilateral reduction. Necessity is immediate. A buyer identically zero has $G=V_{\revise{\lambda}{\alpha}}=0$, so its inequality is automatic. For every other buyer, first reconstruct the seller as in \cref{lem:seller-reconstruction}. This preserves $V_{\revise{\lambda}{\alpha}}$, decreases $A$, and increases $G$, so it can only decrease the deficit $V_{\revise{\lambda}{\alpha}}+\kappa A-\beta G$.

Suppose the buyer score crosses zero at coordinate $r$. Concavity and nonnegativity give $r\le a$: at a crossing, $g(r)\le ag'_-(r)$, while $g(r)\ge rg'_-(r)$. Take the supporting slope $m=g(r)/a$, between the one-sided derivatives, and replace $g$ below $r$ by the supporting line
\[
 \widetilde g(t)=g(r)+m(t-r),\qquad t<r.
\]
This raises buyer values, preserves concavity and all positive scores, and makes the lower scores $m(t-r)\le0$. The seller score is nonnegative, so $V_{\revise{\lambda}{\alpha}}$ is unchanged, whereas $G$ increases. A zero crossing at an endpoint is understood by a limit.

The curvature measure of $\widetilde g$ gives
\begin{equation}\label{eq:g-decompose}
 \widetilde g(t)=\int_{H\ge r}[a-r+\min(t,H)]\,\mu(\dd H).
\end{equation}
To see this directly, set $\mu((t,\infty])=\widetilde g'_+(t)$; an atom at infinity represents a remaining linear slope. Integrating these slopes and matching the value $(a-r)m$ at zero proves the identity. Each component has score
\[
 u_H(t)=\begin{cases}t-r,&t<H,\\H+a-r,&t\ge H.\end{cases}
\]
The score $u$ is strictly increasing until its final constant part, and every component score is constant on that final part. There are consequently nondecreasing maps $\psi_H$ satisfying $\psi_H(u(t))=u_H(t)$. On a gap in the range of $u$, interpolate each map between its endpoint values. Their weighted sum is the identity at the endpoints and therefore throughout the gap. Above the final score extend all maps proportionally. In particular,
\begin{equation}\label{eq:common-maps}
 \int\psi_H(z)\mu(\dd H)=z,\qquad \psi_H(z)\ge0\quad(z\ge0).
\end{equation}
Below the reserve all component scores are nonpositive, and the common linear definition suffices. A jump at the reserve starts at score zero after the supporting-line replacement, so the interpolation is also nonnegative there.

Apply these same maps to the seller: $v_H=\psi_H(v)$ and $s_H=T_av_H$. Monotonicity implies, pointwise,
\[
 (u-v)_+=\int(\psi_H(u)-\psi_H(v))_+\mu(\dd H).
\]
Linearity of $T_a$ and subadditivity of the positive part now give
\begin{equation}\label{eq:exact-decompose}
 A=\int A_H\dd\mu,\qquad V_{\revise{\lambda}{\alpha}}=\int V_{\revise{\lambda}{\alpha},H}\dd\mu,
 \qquad G\le\int G_H\dd\mu.
\end{equation}
Every component buyer has $\ell=a-r$ and $U=H+a-r\ge a$, as required.

If the initial score is positive, put $d=g(0)-ag'_+(0)>0$ and write
\[
 g(t)=d+\int[a+\min(t,H)]\mu(\dd H).
\]
There is one extra deterministic buyer component. On $0\le z\le d$ assign its score map the value $z$ and assign all the other maps zero; on the actual buyer-score range assign it the constant $d$. Interpolate remaining gaps as before. The same identities hold. A deterministic component is a positive scaling of the limit $\ell=U=a$. Thus this argument remains valid for arbitrary nonnegative $\beta,\kappa$, not merely for $\beta\le1$.

All integrands used for gains, means and positive score differences are nonnegative, so Tonelli justifies the integrals. Equivalently one can first use a finite piecewise-linear concave $g$, then approximate its slope measure monotonically. Formula \eqref{eq:exact-decompose} proves the desired deficit inequality by integration. Seller approximation in \cref{sec:extension} removes the finite-support restriction. The case $\revise{\lambda}{\alpha}=0$ is handled separately by $V_0=G$.
\end{proof}

\subsection{Typewise Welfare Certificates}\label{sec:typewise}
The decomposition leaves one \revise{canonical}{normalized capped-exponential} buyer and an arbitrary seller.
We now seek an allocation whose welfare inequality holds at every
seller type; integrating will then cover every seller prior.

For an MHR buyer let $C_B(\eta)$ be the integral of its normalized score over the top $\eta$ buyer probability mass. For a nonincreasing service function $p(s)\in[0,1]$ of bounded support, serve that top buyer mass at seller report $s$. This rule is pointwise monotone. Normalize the buyer envelope at its lower endpoint and the seller envelope above the support of $p$. Direct integration of those payments gives the Lagrangian value
\begin{equation}\label{eq:typewise}
 \frac1k\E_S\left[C_B(p(S))-Sp(S)-a\int_S^\infty p(t)\dd t\right].
\end{equation}
This is \revise{an attainable value}{a feasible Lagrangian value} and therefore a lower bound on $V_{\revise{\lambda}{\alpha}}$. If the seller support ends before the support of $p$, these payments may leave a positive utility to its highest type. We do \emph{not} identify \eqref{eq:typewise} with the maximal payment surplus in that case. We use the \revise{attainable}{feasible} expression for lower bounds and check its optimality separately for the hard seller.

Thus the pointwise inequality
\begin{equation}\label{eq:pointwise}
 C_B(p(s))-sp(s)-a\int_s^\infty p(t)\dd t
 \ge k[\beta H_B(s)-\kappa s],\qquad H_B(s)=\E(B-s)_+,
\end{equation}
is a lower certificate against every seller prior, including atoms.
For \eqref{eq:canonical}, put $q=e^{\ell-U}$. Then
\begin{equation}\label{eq:CB}
 C_B(\eta)=\begin{cases}
 U\eta,&0\le\eta\le q,\\
 \eta(\ell-\log\eta+k)-kq,&q\le\eta\le1.
 \end{cases}
\end{equation}

\subsection{The Common Residual Equation}\label{sec:residual}
The \revise{canonical buyer}{capped-exponential buyer in \eqref{eq:canonical}} has two continuous parameters, $\ell$ and $U$,
in addition to the multiplier $a$. We will construct a function
$\revise{r(a,\ell,U)}{R_{\mathrm M}(a,\ell,U)}$ satisfying
\begin{equation}\label{eq:w-main}
 \rho_{\mathrm M}
 =\inf_{0<a<1,\ 0<\ell<a<U}\revise{r(a,\ell,U)}{R_{\mathrm M}(a,\ell,U)}
 =\min_{0<a<1,\ 0<\ell<a<U}\revise{r(a,\ell,U)}{R_{\mathrm M}(a,\ell,U)}.
\end{equation}
The equality with the infimum follows from the next proposition and
localization; attainment and all boundary cases are proved in
\cref{sec:global}. The function is defined by a scalar residual
$\mathcal R$ rather than by numerical optimization.

Fix $0<a<1$, $0\le\ell<a<U$, and a trial $0<\rho<1$. Set
\[
 \kappa=1-\rho,\quad q=e^{\ell-U},\quad h=U-a,\quad
 \revise{Q(s)=}{\overline F_B(s):=}\min\{1,e^{\ell-s}\}\ (s<U),\quad b(s)=\kappa+\rho\revise{Q(s)}{\overline F_B(s)},
\]
Here $\overline F_B(s)=\Prb(B>s)=1-F_B(s)$ is the buyer survival function on the continuous range. The buyer stop-loss is
\begin{equation}\label{eq:stoploss}
 H(s)=\begin{cases}\ell-s+1-q,&0\le s\le\ell,\\
 e^{\ell-s}-q,&\ell<s<U,\\0,&s\ge U.
 \end{cases}
\end{equation}
Define the head cost in seller-score coordinates by writing $s_y:=\mathrm ds/\mathrm dy$ and setting
\begin{equation}\label{eq:head}
 s_y=\frac{y-s}{D(y,s)},\qquad s(0)=0,\qquad
 D(y,s)=k(e^{y+a-\ell}b(s)-1),\quad0\le y\le h.
\end{equation}
On $0\le s\le y$, the denominator is positive: below $\ell$ it is at least $k(e^{a-\ell}-1)$, and above $\ell$ at least $k(e^a-1)$. The vector field is locally Lipschitz and continuous at the buyer floor. The invariant region gives $0<s(y)<y$ and $s_y>0$ for $y>0$. Existence, uniqueness and continuous parameter dependence follow on every finite interval. Set $z=s(h)$.

Let $L$ be the unique solution of $\kappa L=\rho H(L)$. Its left side increases and its right side decreases, so $0<L<U$. Define
\begin{equation}\label{eq:R}
 \mathcal R(a,\ell,U;\rho)=q(U-z)^k-k\int_z^Lb(s)(U-s)^{-a}\dd s,
\end{equation}
with the usual oriented-integral convention when $z>L$.

The residual measures the mismatch between the continuous buyer-tail
rule and its top-atom continuation. At a zero they join, and a seller
distribution makes the resulting lower certificate tight.

\begin{proposition}[Exact seller optimization]\label{prop:root}
For each $(a,\ell,U)$ above, $\mathcal R$ has a unique zero $\revise{r(a,\ell,U)}{R_{\mathrm M}(a,\ell,U)}$ in $(0,1)$. It is positive below that zero and negative above it. Moreover
\begin{equation}\label{eq:seller-inf}
 \revise{r(a,\ell,U)}{R_{\mathrm M}(a,\ell,U)}=\inf_{S\ {\rm bounded}}\frac{A+V_{\revise{\lambda}{\alpha}}}{A+G}.
\end{equation}
The infimum is attained by the explicit seller in \eqref{eq:hard-seller} below.
\end{proposition}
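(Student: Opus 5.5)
The proof has two parts. The lower half is a typewise certificate. The upper half is a worst-case seller that makes that certificate tight.

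\emph{Sign structure and uniqueness of the zero.} The plan is to show directly that $\rho\mapsto\mathcal R(a,\ell,U;\rho)$ is continuous, positive near $\rho=0$, negative near $\rho=1$, and crosses zero only once. At $\rho\to0$ we have $\kappa\to1$, $b\equiv1$ and $L\to0$. Since $z=s(h)>0$, the integral in \eqref{eq:R} is oriented from $z$ down to $L<z$, so both terms are nonnegative and the first is strictly positive. At $\rho\to1$ the equation $\kappa L=\rho H(L)$ forces $L\uparrow U$, and $\int_z^L b(s)(U-s)^{-a}\dd s$ dominates $q(U-z)^k/k$; the right-endpoint singularity is integrable but large enough. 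For uniqueness I would differentiate in $\rho$. Both $b$ and $L$ increase with $\rho$ (because $\overline F_B\le1$ and $H$ decreases), while $z$ changes through the head ODE \eqref{eq:head}. The aim is a sign argument: at any zero, $\partial_\rho\mathcal R<0$. I would handle the term $\partial_\rho z$ with the variational equation of \eqref{eq:head}.

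\emph{Lower bound: $R_{\mathrm M}\le$ every seller ratio.} Fix $\rho=R_{\mathrm M}$. Using $C_B$ from \eqref{eq:CB}, I would construct a nonincreasing service rule $p(s)$ satisfying \eqref{eq:pointwise} with $\beta=\rho$ and $\kappa=1-\rho$, with equality on the served region. Differentiating the equality in $s$ gives the ODE
\[
 C_B'(p)\,p' - p - s p' + a p = -k(\rho\overline F_B+\kappa)=-kb(s),
\]
which produces two regimes. In the continuous-tail regime $\eta=p\ge q$, we have $C_B'(\eta)=\ell-\log\eta+k-1$. Introducing the score coordinate $y$ with $p=e^{\ell-a-y}\cdot(\text{const})$ turns the ODE into \eqref{eq:head}, and $z=s(h)$ marks the seller cost at which service shrinks to the atom mass $q$. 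In the top-atom regime $p\le q$, $C_B'=U$, and the linear ODE $(U-s)p'=-(k)p+\dots$ integrates with the kernel $(U-s)^{-a}$. Service must vanish exactly where the right side of \eqref{eq:pointwise} is zero, i.e. at $L$. Matching the two regimes at $z$ with $p(z)=q$ is precisely the condition $\mathcal R=0$. For $s\ge L$, taking $p=0$ satisfies \eqref{eq:pointwise} because $\kappa s\ge\rho H(s)$. Integrating against any seller prior then gives $V_\alpha+\kappa A\ge\rho G$ by \eqref{eq:typewise}.

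\emph{Upper bound: the hard seller.} Next I would define the seller \eqref{eq:hard-seller} whose ironed normalized score $v$ makes the threshold rule ``serve the top $p(s)$ buyer mass'' optimal pointwise. This means $u$ evaluated at buyer quantile $p(s)$ should equal the seller score $\sigma(s)=s+aF_S(s)/f_S(s)$. Solving this first-order equation for $F_S$ gives the seller distribution, with $v$ nondecreasing. I would then check two things. First, that \eqref{eq:typewise} equals $V_\alpha$ exactly, with no residual rent at the top seller type; this is the subtlety flagged after \eqref{eq:typewise}. Second, that equality in \eqref{eq:pointwise} on the support yields $(A+V_\alpha)/(A+G)=\rho$.

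The main obstacle is uniqueness of the zero together with positivity of the constructed seller density and of $p$ along the head ODE. I would first attempt a monotonicity argument through the variational equation. If that fails, I would fall back on the interval certificates of \cref{sec:global}.
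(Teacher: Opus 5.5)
Your second and third parts follow the paper's construction: the Euler equation $[C_B'(p)-s]p'=k(p-b)$, the head in score coordinates with $p=e^{\ell-a-y}$, the tail integrated against $(U-s)^{-a}$ with $p(L)=0$, the join $p(z)=q$ being exactly $\mathcal R=0$, and the seller obtained from $s+aF/F'=y(s)$ on the head and $=U$ on the tail. The gap is in the first part. You never prove that the zero is unique, so neither the sign characterization nor the definition of $R_{\mathrm M}$ is established.

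Your monotonicity plan starts from a false premise. Since $b=\kappa+\rho\overline F_B=1-\rho(1-\overline F_B)$, $b$ is nonincreasing in $\rho$, not increasing. Differentiating gives $\partial_\rho\mathcal R=k(b(z)-q)(U-z)^{-a}z_\rho-kb(L)(U-L)^{-a}L_\rho+k\int_z^L(1-\overline F_B)(U-s)^{-a}\dd s$. The last term is nonnegative. So is the first: a smaller $b$ lowers $D$ and speeds the head flow, so $z_\rho\ge0$, and $b(z)>q$. Only the $L_\rho$ term is negative, so the sign is not decided termwise. It is in fact negative, but only through the exact identity $\frac{d^a}{k}\mathcal R_\rho=-(A+G)$ for the seller built at that $\rho$. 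That identity is proved with the integrating factor $F(b-p)$ along the head flow; it is not a routine variational estimate.

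Your fallback to the interval certificates is circular. Those certificates use this proposition's sign characterization, for example to get nonnegativity on the boundary of $\mathcal B$. They also cover only specific targets and boxes, not every $(a,\ell,U)$ and $\rho$.

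The missing idea is that uniqueness comes free from your own two halves. Run them at an arbitrary zero $\rho_0$, not at a presupposed unique $R_{\mathrm M}$. Any zero has $z<L$, since otherwise both terms of $\mathcal R$ are nonnegative and the first is positive. The typewise certificate gives $(A+V_\alpha)/(A+G)\ge\rho_0$ for every bounded seller, because \eqref{eq:typewise} is a feasible Lagrangian value. The hard seller built from the same $p$ has ratio exactly $\rho_0$. So every zero equals the same seller infimum, hence there is only one. Uniqueness, continuity, and the endpoint signs then give positivity below the zero and negativity above it.

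Two smaller corrections:
\begin{itemize}
\item At $\rho=1$, negativity holds because $b=\overline F_B>q$ on $[z,U)$, so $k\int_z^U b(U-s)^{-a}\dd s>q(U-z)^k$. The endpoint singularity plays no role.
\item The obstacles you flag are not the hard part. $p=e^{\ell-a-y}>0$ trivially, and $F'=aF/w>0$ by the invariant region $s<y$. The positive zero-cost atom follows from $\int a/w\,\dd s=\int a/D\,\dd y$ with $D$ bounded below.
\end{itemize}
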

\begin{proof}[Proof of \cref{prop:root}]
For existence one can divide $\mathcal R$ by $(U-z)^k$ and substitute $t=(U-s)^k/(U-z)^k$, obtaining
\[
 q-\int_{((U-L)/(U-z))^k}^1
 b\bigl(U-(U-z)t^{1/k}\bigr)\dd t.
\]
At $\rho=0$, $L=0<z$ and this is positive. At $\rho=1$, $L=U$ and $b=\revise{Q}{\overline F_B}$; on the integration interval $\revise{Q}{\overline F_B}>q$ almost everywhere, so the expression is negative. Continuity, including these integrable endpoint limits, gives a zero. Every zero has $z<L$.

At a zero, invert the strictly increasing head and put $w(s)=y(s)-s$. Define
\begin{equation}\label{eq:service}
 p(s)=\begin{cases}
 e^{\ell-a-y(s)},&0\le s<z,\\
 \displaystyle\frac{k}{(U-s)^k}\int_s^Lb(t)(U-t)^{-a}\dd t,&z\le s\le L,\\
 0,&s>L.
 \end{cases}
\end{equation}
The zero condition joins the pieces at $p(z)=q$. On the head $p$ decreases. On the tail,
\[
 p(s)\le b(s)\left[1-\left(\frac{U-L}{U-s}\right)^k\right]<b(s),
 \qquad p'(s)=\frac{k(p(s)-b(s))}{U-s}<0.
\]
The head has $q\le p(s)\le e^{\ell-a}<1$, and the tail serves only the top atom. Differentiating the deficit in \eqref{eq:pointwise} gives
\begin{equation}\label{eq:euler}
 [C'_B(p(s))-s]p'(s)=k(p(s)-b(s)).
\end{equation}
On the head this is equivalent to \eqref{eq:head}; on the tail it follows from \eqref{eq:service}. At $L$ equality in \eqref{eq:pointwise} follows from $\kappa L=\rho H(L)$. Thus equality holds on $[0,L]$, and above $L$ zero service suffices because $\rho H(s)-\kappa s\le0$. This proves a lower bound of $\rho$ against every seller.

Write $d=U-L$ and $W=U-z$. The matching seller is
\begin{equation}\label{eq:hard-seller}
 \revise{F_\rho(s)}{F_{S,\rho}(s)}=\begin{cases}
 \displaystyle(d/W)^a\exp\left(-\int_s^z\frac a{w(t)}\dd t\right),&0\le s<z,\\
 (d/(U-s))^a,&z\le s\le L,\\
 1,&s>L.
 \end{cases}
\end{equation}
It is a \revise{valid CDF}{valid seller CDF} with positive zero-cost atom. For brevity in the calculations below, write $F(s):=F_{S,\rho}(s)$. To verify finiteness at zero without a singular integral, put $J(y)=\int_0^y a/D(v,s(v))\dd v$; then
\[
 \revise{F_\rho(s(y))}{F_{S,\rho}(s(y))}=(d/W)^a e^{J(y)-J(h)}.
\]
Its continuous normalized score is $s+aF/F'=y(s)$ on the head, and $U$ on the tail. The atom at zero has score zero. These scores increase, with an upward jump from $U-a$ to $U$ at the join. The rule \eqref{eq:service} maximizes the Lagrangian: it accepts positive score differences on the head and randomizes only at the zero-score top-atom/tail ties. The seller's highest type has zero service and utility. Hence \eqref{eq:typewise} is now the exact optimized dual value, and the typewise equality gives
\[
 V_{\revise{\lambda}{\alpha}}+(1-\rho)A=\rho G.
\]
This identifies every zero with the same seller infimum, proving uniqueness. The endpoint signs and continuity give the sign characterization. The lower bound at every multiplier and the weak-duality upper witness, combined with localization, budget duality, and approximation, prove \eqref{eq:w-main} with an infimum in place of a minimum. Attainment and the interior location follow in \cref{sec:global}.
\end{proof}

\subsection{The Positive Buyer Floor}
The buyer's lower endpoint is essential for welfare. For a fixed seller
with nondecreasing nonnegative normalized scores and a fixed cap $U$,
when $0\le\ell\le a$,
\[
 V_{\revise{\lambda}{\alpha}}(\ell)=e^\ell V_{\revise{\lambda}{\alpha}}(0),\qquad
 G(\ell)=\int_0^\ell F(s)\dd s+
         e^\ell\int_\ell^UF(s)e^{-s}\dd s.
\]
The first identity follows from the score survival above zero; the
second is the tail-product formula. At an interior minimizing zero,
the derivative of $A+V_{\revise{\lambda}{\alpha}}-\rho(A+G)$ in $\ell$ vanishes, since
the same seller is an admissible trial instance after a small change
in $\ell$. Substituting the zero-deficit identity gives
\begin{equation}\label{eq:moment-balance}
 (1-\rho)A=\rho\E(\ell-S)_+.
\end{equation}
At $\ell=0$, that derivative is instead $-(1-\rho)A<0$ when
$A>0$ and $\rho<1$, so an unshifted capped exponential is strictly
improvable by raising its floor. This rules out the natural attempt to
search only over unshifted exponentials.

\paragraph{Global evaluation.}
The next theorem locates the minimum in \eqref{eq:w-main} and completes
the sharp MHR characterization.

\begin{theorem}[MHR minimization]\label{thm:certification}
The function $\revise{r(a,\ell,U)}{R_{\mathrm M}(a,\ell,U)}$ has a unique global minimizer in
$0<a<1$, $0<\ell<a<U$, and its minimum $\rho_{\mathrm M}$ satisfies
\eqref{eq:sharp-bracket}. More precisely, the minimizing parameters
and $\rho_{\mathrm M}$ form the unique solution in
$\mathcal N\times(.911389368124,.911389368127)$ of
\begin{equation}\label{eq:stationary-system}
 \mathcal R(\theta;\rho)=0,\qquad
 \nabla_\theta\mathcal R(\theta;\rho)=0,
 \qquad \theta=(a,\ell,U),
\end{equation}
where $\mathcal N$ is the box in \eqref{eq:N}. The corresponding buyer
and seller in \eqref{eq:canonical} and \eqref{eq:hard-seller} attain
the second-best welfare ratio.
\end{theorem}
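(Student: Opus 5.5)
The plan follows the three-stage certificate used for \cref{u:thm:scalar}, applied to the function $R_{\mathrm M}$ from \cref{prop:root}. By that proposition, $R_{\mathrm M}(a,\ell,U)<\rho$ holds exactly when $\mathcal R(a,\ell,U;\rho)<0$. So every statement about $R_{\mathrm M}$ becomes a sign statement about the residual $\mathcal R$. The residual can be evaluated in interval arithmetic: integrate the head ODE \eqref{eq:head} with a validated (Taylor-model) solver to enclose $z=s(h)$, solve $\kappa L=\rho H(L)$ by interval bisection, and enclose the tail integral in \eqref{eq:R} by interval quadrature. I will therefore certify inequalities of the form $\mathcal R(\theta;\rho_0)>0$, meaning $R_{\mathrm M}(\theta)>\rho_0$, on explicit parameter boxes, with $\rho_0=.911389368124$ or a slightly larger reference value.

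\textbf{Boundary exclusion.} I first exclude every boundary approach of the domain $0<\ell<a<U$, $0<a<1$. The cases are: $a\to0$; $a\to1$; $\ell\to0$; $\ell\to a$; $U\to a$; and $U\to\infty$.
\begin{itemize}
\item $\ell\to0$: the floor analysis around \eqref{eq:moment-balance} shows the $\ell$-derivative of the deficit is $-(1-\rho)A<0$. So a neighborhood of $\ell=0$ is strictly improvable and cannot contain a minimizer. I still need a quantitative version giving a uniform margin on a strip $\ell<\ell_0$.
\item $U\to\infty$: the atom mass $q=e^{\ell-U}$ vanishes. The problem tends to an uncapped shifted exponential, which I bound from below explicitly.
\item $a\to0$ and $a\to1$: here $V_\alpha\to G$ or the multiplier degenerates, so the ratio approaches $1$.
\item $\ell\to a$ and $U\to a$: these reduce to the deterministic or nearly deterministic buyers handled in the decomposition proof. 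For those buyers, the typewise certificate \eqref{eq:pointwise} with a simple explicit service function already gives a ratio well above $.92$.
\end{itemize}
Each of these boundary cases becomes a monotonicity or limit estimate that holds on a slab, and these slabs cut the domain down to a compact box.

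\textbf{Compact cover.} On the compact region I run a finite adaptive box cover. On each box I verify $\mathcal R(\cdot;\rho_0+\epsilon)>0$ by interval evaluation, refining until this succeeds, everywhere outside a small neighborhood $\mathcal N$ of the numerical minimizer. To make interval evaluation workable on boxes, I need monotone enclosures of $z$ in $(a,\ell,U)$, obtained from comparison principles for the head ODE. I also use monotonicity of $\mathcal R$ in $\rho$, which is the sign property in \cref{prop:root}, so that each box needs only one reference value of $\rho$.

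\textbf{Local analysis inside $\mathcal N$.} I show that $R_{\mathrm M}$ is smooth there, by the implicit function theorem with $\partial_\rho\mathcal R\ne0$. I then enclose the Hessian of $\mathcal R$ in $\theta$ at fixed $\rho$, and prove it is positive definite on $\mathcal N$. This makes $R_{\mathrm M}$ strictly convex on $\mathcal N$ and gives uniqueness of the stationary point. Existence and the bracket come from an interval Newton or Krawczyk test on the system \eqref{eq:stationary-system}. At $\rho=.911389368124$ the minimum value of $\mathcal R$ over $\mathcal N$ is positive, and at $.911389368127$ a certified point has $\mathcal R<0$. These two facts give the enclosure \eqref{eq:sharp-bracket} and attainment.

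\textbf{Sharpness.} The attaining pair is the buyer \eqref{eq:canonical} at the minimizer together with the seller \eqref{eq:hard-seller}. \cref{prop:root} shows $V_\alpha+(1-\rho)A=\rho G$ for this pair. The final step is to show the pair's second-best ratio is exactly $\rho_{\mathrm M}$, meaning no other multiplier does better. This follows because the minimizer is stationary in $a$ as well as in $\ell$ and $U$: the $a$-stationarity plays the role of complementary slackness.

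\textbf{Main obstacle.} I expect the hardest part to be the validated enclosure of the head ODE solution $z$ uniformly over parameter boxes, together with the boundary slabs near $\ell\to0$ and $U\to\infty$, where the margins are thin. For the ODE I would first try exploiting that $s(y)$ is monotone in $b$ and $\ell$, bracketing $z$ by two sharp-endpoint ODE solves rather than using full Taylor models.
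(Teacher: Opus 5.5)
Your overall architecture matches the paper's: reduce to a compact parameter region, cover it with interval tests on the residual, and finish with local convexity, sign checks and a weak-duality witness. The genuine gap is in the boundary reduction.

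\textbf{The faces $\ell\to a$ and $U\to a$.} These are not ``deterministic or nearly deterministic'' buyers. The buyer $\ell+\min\{E,U-\ell\}$ degenerates only at the corner $\ell=a=U$. On $U=a$ with $\ell<a$ it is the genuine capped exponential $\ell+\min\{E,a-\ell\}$, and on $\ell=a$ it is $a+\min\{E,U-a\}$. Moreover, at $\ell=a$ one has $b(0)=\kappa+\rho=1$, so $D(0,0)=0$ and the head equation \eqref{eq:head} is singular at its initial point. You give no argument for a margin like $.92$ on these faces, and the paper's certificate establishes only a ratio $\ge\bar\rho$ there.

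The missing ingredient is the continuity argument of \cref{lem:finite-domain}. The optimized Lagrangian is Lipschitz in the $L^1$ norm of its score coefficients. The capped-exponential values and scores are $L^1$-continuous in $(a,\ell,U)$. The hard seller at one triple serves as a trial seller at a nearby triple. Hence $R_{\mathrm M}$ extends continuously to the closed region, and the faces $\ell=0$, $\ell=a$, $U=a$ are simply part of the exterior interval cover of $\mathcal C\setminus\mathcal B$ rather than slabs.

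\textbf{The other slabs are not uniform.} The statements ``$V_\alpha\to G$'' as $a\to0$ and ``the multiplier degenerates'' as $a\to1$ hold instance by instance. You need bounds uniform over all $(\ell,U)$ and all sellers. The paper uses the fixed-multiplier estimate $V_\alpha\ge c(a)G$ of \cref{prop:gft}, with $c(a)\to1$ at both ends, to cut at $a_\pm$.

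For $U\to\infty$, a bound on the uncapped limit does not control large finite caps without a uniform convergence estimate. The paper instead shows $L<2.5$ at any root below $\bar\rho$, so $F(z)\ge1-L/U>\bar\rho$ once $U\ge29$. It then shows that lowering the cap strictly decreases the hard seller's ratio, so minima have $U\le29$.

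Near $\ell=0$ no slab is needed: the strip $\ell<.002$ lies in the exterior cover. Endpoint variations of the kind you compute enter the paper as the directional tests \eqref{eq:floor-balance}--\eqref{eq:cap-test}, used to exclude boxes inside $\mathcal B$.

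\textbf{Local analysis.} Positive definiteness of $\nabla_\theta^2\mathcal R(\cdot;\rho)$ at fixed $\rho$ does not make $R_{\mathrm M}$ convex. Differentiating $\mathcal R(\theta;R_{\mathrm M}(\theta))=0$ twice leaves terms in $\nabla\mathcal R_\rho\otimes\nabla R_{\mathrm M}$ and $\mathcal R_{\rho\rho}\nabla R_{\mathrm M}\otimes\nabla R_{\mathrm M}$, which vanish only at stationary points. Argue uniqueness as the paper does. At $\rho_{\mathrm M}$ the residual is nonnegative and vanishes with zero gradient at every minimizer, so strict convexity at that fixed $\rho$ permits only one (the paper certifies convexity for $\Psi=\mathcal R/(qW^k)$). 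Solutions of \eqref{eq:stationary-system} at two different ratios are excluded because $\mathcal R_\rho<0$ by \cref{prop:offroot}.

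A pure value cover at a target $\ge\rho_{\mathrm M}$ right up to a box as small as \eqref{eq:N} is logically acceptable, but its margins shrink quadratically toward the minimizer. The paper avoids this. It tests values only at the coarse target $\bar\rho$ outside $\mathcal B$. Inside $\mathcal B$ it excludes stationary points by signed partials, the directional tests, or confinement to $\mathcal N$, since a negative minimum of $\mathcal F$ there must be an interior stationary point.

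\textbf{Sharpness.} This needs no $a$-stationarity. The universal lower bound already gives the extremal pair a ratio at least $\rho_{\mathrm M}$. Weak duality at the single multiplier $a_*$ with the hard seller of \cref{prop:root} gives at most $\rho_{\mathrm M}$. Your complementary-slackness route would additionally require two steps: an envelope argument transferring stationarity from the seller infimum to the fixed pair's $V_\alpha$, and a compactness argument extracting a budget-balanced Lagrangian maximizer.
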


For orientation, the minimizing coordinates are
\[
 (a_*,\ell_*,U_*)\simeq
 (.2852306465,.0494569729,1.1746687514).
\]
The exact parameters, rather than these rounded coordinates, define the
\revise{attaining}{extremal} buyer in \eqref{eq:canonical} and seller in
\eqref{eq:hard-seller}. The proof combines a finite-domain reduction,
global interval exclusions, and local strict convexity. The derivative
identities and complete certification argument are in
\cref{app:mhr}.

We have now obtained sharp bilateral affine bounds in both prior
classes. Their extension to markets requires preserving the seller
endowment term when local trade opportunities are combined.

\section{Welfare Transfer via Bilateral Lifting}\label{sec:matching}
The bilateral analysis has established affine inequalities of the form
$V_{\revise{\lambda}{\alpha}}+\kappa A\ge\beta G$. We now apply the lifting framework of
Liu, Qin, and Wang~\cite{LQW26}. Their edge-by-edge composition theorem
allows general typewise targets, but its GFT corollary does not account
for $A$. The additional task here is to charge each seller's endowment
at most once. For unrestricted priors this gives the full downward-closed
extension. For MHR buyers we use a separate argument on ordinary
compatibility graphs to preserve the distributional restriction.

Throughout, the \revise{canonical first-best matching}{selected first-best matching $M^*(\theta)$ at type profile $\theta$} maximizes gains, then
minimizes cardinality, then uses a fixed total order of matching
identities. It therefore uses only positive-gain edges.

\subsection{Unrestricted Priors and Endowment Charges}
We use three results from~\cite{LQW26}: Theorem~3.3 converts bilateral
value bounds, \revise{uniform over}{valid for} all buyer probability vectors, into monotone
local allocation rules that increase with the seller cap; Theorem~3.2
composes these rules at the first-best caps; and Proposition~3.4
regularizes sellers before the caps are chosen. Their incentive and
composition arguments need not be repeated. We verify the welfare
targets and the resulting endowment charge.

\begin{proof}[Proof of \cref{thm:affine}]
We prove the unrestricted matching implication. Fix $\revise{\lambda}{\alpha}>0$ and
suppose \eqref{eq:affine} holds for every finite bilateral instance. Start with finite priors and a downward-closed
family $\mathcal F$ of feasible matchings.

\emph{Seller regularization.}
Proposition~3.4 of~\cite{LQW26} replaces each full seller prior by a
$\revise{\lambda}{\alpha}$-weakly regular prior: its \revise{combined cost $s+\lambda\psi(s)$}{$\alpha$-virtual cost $\phi_S^{\alpha}(s)=s+\alpha\psi_S(s)$}
is nondecreasing. Denote the original and transformed instances by $\mathcal I$ and $\widehat{\mathcal I}$, respectively. The transformed market satisfies
\begin{equation}\label{eq:market-regularization}
 V_{\revise{\lambda}{\alpha}}(\mathcal I)=V_{\revise{\lambda}{\alpha}}(\widehat{\mathcal I}),\qquad
 \widehat G\ge G,\qquad \widehat A\le A.
\end{equation}
The first two comparisons are the cited proposition. The third follows
by evaluating the cumulative seller-quantile inequality in its
Appendix~A.3 at quantile one. The reconstruction keeps the lowest cost
and has increasing support, so costs remain nonnegative. It suffices
to prove the affine bound for these weakly regular sellers.

\emph{Local targets.}
Fix such a seller with costs $s_1<\cdots<s_m$, \revise{masses $f_k$}{probability masses $f_S(k)$}, and full
mean $\mu=\sum_k \revise{f_k}{f_S(k)}s_k$. For the cap consisting of its first $r$ types,
put $\revise{F_r=\sum_{k\le r}f_k}{F_S(r)=\sum_{k\le r}f_S(k)}$ and define the unnormalized targets
\begin{equation}\label{eq:full-prior-target}
 g_r(b)=\sum_{k\le r}\revise{f_k}{f_S(k)}(b-s_k)_+,\qquad
 C_r(b)=[\beta g_r(b)-\kappa\mu]_+,\qquad C_0=0.
\end{equation}
They are nonnegative and nondecreasing with the cap. Using the full
mean $\mu$ keeps the charge fixed as the cap expands.

To check the premise of \cite[Theorem~3.3]{LQW26}, take any buyer
probability vector on any finite support and a nonempty cap. Let
$U_r=\{b:\beta g_r(b)>\kappa\mu\}$, an upper set. If it has zero
probability, the required bound is immediate from no trade. Otherwise
apply the assumed bilateral inequality conditional on $B\in U_r$ and
$S\le s_r$. The pointwise-monotone optimizer of
\cite[Lemma~2.2]{LQW26}, extended by zero outside this rectangle, is
feasible for the buyer's full support and the capped seller.
Upper-tail buyer conditioning and lower-prefix seller conditioning
leave the retained raw virtual coefficients unchanged: both the
relevant tail or prefix and its point mass have the same normalizing
factor. Consequently
\begin{align}
 \revise{F_r}{F_S(r)}V_{\revise{\lambda}{\alpha}}(B,S\mid S\le s_r)
 &\ge \beta\E_B[\ind_{U_r}g_r(B)]
       -\kappa\Prb(B\in U_r)\sum_{k\le r}\revise{f_k}{f_S(k)}s_k \notag\\
 &\ge \E_B C_r(B).\label{eq:target-verification}
\end{align}
The last inequality uses $\sum_{k\le r}\revise{f_k}{f_S(k)}s_k\le\mu$.
This upper-tail argument justifies the positive part in
\eqref{eq:full-prior-target}; taking the positive part of an expected
inequality alone would not suffice. Theorem~3.3 now supplies the local
rules for all caps, with the typewise targets $C_r$.

\emph{Composition and charging.}
For seller $j$, write $\mu_j:=\E S_j$ and let $C_r^{(j)}$ denote the target in \eqref{eq:full-prior-target} constructed from seller $j$'s prior and cap $r$.
Let $r_{ij}(\theta_{-j})$ be the first-best seller cap for edge $(i,j)$,
with every report except $s_j$ fixed, where $\theta_{-j}$ denotes the type profile of all agents other than seller $j$. Theorem~3.2 of~\cite{LQW26} gives
\begin{equation}\label{eq:welfare-cap-composition}
 V_{\revise{\lambda}{\alpha}}\ge\sum_{(i,j)\in\mathcal E}
 \E_{\theta_{-j}}C^{(j)}_{r_{ij}(\theta_{-j})}(B_i).
\end{equation}
By the first-best threshold property in \cite[Lemma~3.1]{LQW26},
a seller's partner is fixed throughout her winning reports. Thus, for
each fixed $\theta_{-j}$, at most one incident edge has a nonempty cap:
\begin{equation}\label{eq:one-cap-charge}
 \sum_i\ind_{\{r_{ij}(\theta_{-j})\ne0\}}\le1.
\end{equation}
For every cap,
$C_r(b)\ge\beta g_r(b)-\kappa\mu_j\ind_{\{r\ne0\}}$.
The $g_r$ terms in \eqref{eq:welfare-cap-composition} sum in expectation
to $G$, since the caps describe exactly the positive-gain edges of the
\revise{canonical}{tie-broken} first best. By \eqref{eq:one-cap-charge}, the mean charges
sum to at most $\kappa\sum_j\mu_j=\kappa A$. This proves
$V_{\revise{\lambda}{\alpha}}\ge\beta G-\kappa A$ in the regularized market.
Returning through \eqref{eq:market-regularization} proves it in the
original market as well.

Conservative rounding in \cref{sec:extension} extends this implication
to bounded Borel priors. That argument uses only downward closure and
a finite maximum matching size. The reverse implication follows by
taking a single-edge market.
\end{proof}

\subsection{MHR Buyers and Rectangle Packing}
The automatic local-rule theorem just used quantifies over every buyer
probability vector. An MHR-only bilateral bound does not meet that
premise, because arbitrary reweighting need not preserve MHR.
We therefore retain a direct construction for ordinary matching
markets. Its conditioning uses only buyer upper tails, which preserve
MHR, and its geometric property controls the endowment charges.

\begin{lemma}[Rectangle packing]\label{lem:rectangles}
Suppose every matching of $\mathcal E$ is feasible. Fix an edge
$e=(i,j)$ and all other types $\omega$. There is a possibly empty
rectangle $Q_e(\omega)$, an upper set in $b_i$ and a lower set in $s_j$,
such that
\begin{equation}\label{eq:rect-event}
 \{e\in M^*(b_i,s_j,\omega)\}=Q_e(\omega)\cap\{b_i>s_j\}.
\end{equation}
At any full type profile, all edges whose rectangles contain that
profile form a matching, including rectangle edges with nonpositive
gains. For an edge $e=(i,j)$, write $j(e):=j$ for its seller endpoint. Consequently
\begin{equation}\label{eq:charge}
 \sum_e S_{j(e)}\ind_{Q_e}\le\sum_jS_j.
\end{equation}
\end{lemma}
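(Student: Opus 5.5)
The plan is to build $Q_e(\omega)$ from the one-dimensional threshold structure of the tie-broken first best, show it is a product set, and then obtain the packing property by an exchange argument.

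\emph{Constructing the rectangle.} Fix $e=(i,j)$ and the other types $\omega$, and set
\[
\mathcal B_e(\omega)=\{b_i:\exists s_j,\ e\in M^*(b_i,s_j,\omega)\},\qquad
\mathcal S_e(\omega)=\{s_j:\exists b_i,\ e\in M^*(b_i,s_j,\omega)\},
\]
and define $Q_e(\omega)=\mathcal B_e(\omega)\times\mathcal S_e(\omega)$. First I would use the first-best threshold property \cite[Lemma~3.1]{LQW26} together with its buyer-side analogue. Raising $b_i$ raises the gain of every matching that serves $i$ by the same amount, whichever item $i$ receives; lowering $s_j$ has the same effect for matchings that use $j$. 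Hence, with the other coordinate fixed, the set of reports at which $e$ is selected is an upper ray in $b_i$ and a lower ray in $s_j$. This makes $\mathcal B_e$ an upper set and $\mathcal S_e$ a lower set.

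\emph{The identity \eqref{eq:rect-event}.} The inclusion of the event in $Q_e\cap\{b_i>s_j\}$ is immediate, because $M^*$ uses only positive-gain edges. For the reverse inclusion, take $(b,s')$ with $b>s'$, where $b$ is witnessed by $(b,s)$ and $s'$ by $(b',s')$. I would write the optimal value as
\[
\max\Bigl\{\,b_i-s_j+W_{-ij}(\omega),\ \Phi(b_i,s_j,\omega)\Bigr\},
\]
where $\Phi$ is the best value over matchings avoiding $e$. Because all matchings are feasible, $\Phi$ splits into the four regimes: $i$ and $j$ both unmatched, only $i$ matched elsewhere, only $j$ matched elsewhere, or both matched elsewhere. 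Within each regime, $\Phi$ depends on $b_i$ and $s_j$ through separate additive terms $b_i\,\ind\{i\text{ matched}\}$ and $-s_j\,\ind\{j\text{ matched}\}$. Using this, the margin of $e$ over each regime is monotone in each coordinate separately. The two witnesses then bound the regimes that involve the buyer and those that involve the seller. The regime in which both are matched elsewhere is handled by adding the two witness inequalities and applying the submodularity (gross-substitutes) property of the assignment value $W$. Finally, the tie-breaking rule (maximize gains, then minimize cardinality, then the fixed order) is constant across these comparisons, so strict and weak inequalities can be transferred between the witnesses and $(b,s')$.

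\emph{Packing and the charge.} Suppose two rectangle edges $e=(i,j)$ and $e'=(i,j')$ share a buyer and both rectangles contain the profile. Take witnesses for $e$ and for $e'$. The witness for $e$ forces $i$ to prefer $j$, that is, the marginal contribution of $j$ to $i$ is largest at $j$'s reports in $\mathcal S_e$; the witness for $e'$ forces the reverse preference for $j'$. Combining these with the lower-set property in each seller coordinate and the exchange inequality for $W$ should give a contradiction, or else a tie that the fixed total order resolves consistently. Two edges sharing a seller are handled symmetrically. Therefore the rectangle edges at any profile form a matching, each seller lies on at most one of them, and summing gives \eqref{eq:charge} since $S_j\ge0$.

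\emph{Main obstacle.} I expect the hardest step to be the packing claim for rectangle edges with nonpositive realized gain, where no actual first-best membership is available at the given profile. There I would first try to argue purely through the witness profiles: choose witnesses $(b_i,s)$ for $e$ and $(b_i',s')$ for $e'$, and run a double exchange between $M^*$ at these two profiles. The aim is to show that the two witnesses together would yield a matching with strictly larger gain, or one that is lexicographically preferred under the tie-breaking order, contradicting optimality at one of the witnesses.
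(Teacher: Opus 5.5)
\textbf{Gap: the packing step is not proved.} Your construction of $Q_e$ and your argument for \eqref{eq:rect-event} are essentially the paper's five-category comparison. The categories are matchings containing $e$, avoiding both endpoints, touching only $i$, touching only $j$, or touching both endpoints on other edges. Two inaccuracies there:
\begin{itemize}
\item The margin of $e$ over the last category does not depend on $(b_i,s_j)$, so any single witness settles it and no submodularity is needed.
\item The hypothesis that all matchings are feasible is not what produces the regime split, which holds for any downward-closed family. What it gives is that the best matching avoiding both endpoints has the same value as the rest of the best matching containing $e$ (add or delete $e$). That is what makes the comparison with that category exactly $b_i>s_j$.
\end{itemize}
The real gap is packing, on which \eqref{eq:charge} entirely rests. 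Your sketch ends with ``should give a contradiction'' but gives no argument. The double exchange you propose compares $M^*$ at witnesses $(b_i,s)$ and $(b_i',s')$ that differ from each other in several coordinates, so neither the threshold property nor any stated exchange inequality applies. As you note yourself, profiles with $b_i\le s_j$ are left open.

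\textbf{The missing idea.} No exchange or gross-substitutes argument is needed; you only need the right witnesses or the right intrinsic description of $Q_e$.

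\emph{Fix within your product definition.} Suppose $e=(i,j)$ and $f=(i,j')$ both have rectangles containing $\theta$. Use the seller-coordinate memberships. $s_j\in\mathcal S_e$ gives $b'$ with $e\in M^*$ at $\theta$ with $b_i$ replaced by $b'$. $s_{j'}\in\mathcal S_f$ gives $b''$ with $f\in M^*$ at $\theta$ with $b_i$ replaced by $b''$. These two profiles differ only in $b_i$. On the ray of $b_i$ where $i$ is served, $M^*$ equals the fixed tie-broken best matching containing $i$, so $j=j'$, a contradiction. A shared seller is handled symmetrically, using buyer-coordinate memberships and the seller's fixed partner.

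\emph{The paper's route.} On $Q_e$, the tie-broken best matching $N_e$ among those touching an endpoint of $e$ contains $e$. This condition is meaningful at every profile, whatever the sign of $b_i-s_j$. For incident $e,f$, each of $N_e$ and $N_f$ touches the shared endpoint, so each competes in the other's optimization. The strict ranking then forces $N_e=N_f$, a matching containing two incident edges, which is impossible.
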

\begin{proof}[Proof of \cref{lem:rectangles}]
Classify matchings into those containing $e$, touching neither endpoint, touching only $i$, touching only $j$, and touching both endpoints on different edges. Their best weights have forms
\[
 b-s+A_e,\quad A_{00},\quad b+A_{10},\quad -s+A_{01},\quad b-s+A_{11},
\]
respectively. Here $A_e,A_{00},A_{10},A_{01},A_{11}$ are the type-independent residual optimal matching weights for these five categories. An empty category has coefficient $-\infty$. Ordinary matching gives $A_e=A_{00}$ by adding or removing $e$ from a matching avoiding its endpoints. Comparisons with the middle two categories impose a lower threshold on $b$ and an upper threshold on $s$; the last comparison is type-independent. If this last comparison loses, take an empty rectangle. Fixed tie rules specify open or closed endpoints. On the resulting rectangle, the best matching among those touching at least one endpoint contains $e$. Its comparison with the neither-endpoint category is precisely $b>s$: at equality, removing $e$ preserves gains and decreases cardinality. This proves \eqref{eq:rect-event}.

For packing, suppose two distinct incident edges $e,f$ have rectangle membership. Let $N_e,N_f$ be the \revise{canonically best}{tie-broken best} matchings among those touching an endpoint of the respective edge. Each contains its defining edge. Because of the shared endpoint, each matching is a competitor in the other's optimization. The common strict ranking forces $N_e=N_f$, impossible for a matching containing both incident edges. Thus rectangle edges are disjoint, proving \eqref{eq:charge} for nonnegative seller values.
\end{proof}

The lemma lets us apply the bilateral welfare inequality on independent
conditional priors. The remaining check is that these local optimizers
assemble into a monotone allocation with the same payment coefficients.

\begin{proof}[Proof of \cref{thm:affine}]
We prove the MHR matching implication. Start with bounded MHR buyers
and finite sellers, and fix $\revise{\lambda}{\alpha}>0$.
For each edge and residual profile, condition its endpoints on
$Q_e(\omega)$, ignoring zero-probability rectangles. Buyer upper-tail
conditioning preserves MHR: it shifts the exponential coordinate and
leaves the hazard unchanged on the retained support. Seller
lower-prefix conditioning is unrestricted.

Choose the strict-score monotone bilateral optimizer and extend it by
zero outside the rectangle. The original and conditional normalized
coefficients agree at retained types. For buyers this follows from
hazard invariance, including the value coefficient of an upper atom;
for sellers the ratio $\revise{Q_{j-1}/z_j}{F_S(j-1)/f_S(j)}$ in \eqref{eq:seller-raw} is unchanged.
The buyer score is at most its value, and the ironed seller score is at
least its cost, by the seller endpoint bound in
\cite[Appendix~A.2]{LQW26}. Hence every selected trade has positive
gains. By \eqref{eq:rect-event} the assembled allocation is supported
on the \revise{canonical}{tie-broken} first-best matching, and is feasible.

The fixed-partner property of \cite[Lemma~3.1]{LQW26} also gives
pointwise monotonicity. Within an agent's first-best trading region,
its partner, residual profile, conditioning rectangle, and local rule
are fixed, and the local service probability is monotone. Outside that
region the agent is unserved. This applies to both buyers and sellers.
The strict-score rule, finite seller ironing, and the finite set of
\revise{canonical matching}{fixed tie-breaking} comparisons give measurable choices in the
residual reports.

Apply the bilateral inequality to each conditional pair, multiply by
its rectangle probability, and average over residual types. The
preserved payment coefficients yield
\begin{align*}
 G(x)+\revise{\lambda}{\alpha}\Lambda(x)
 &\ge\beta\sum_e\E[(B_i-S_j)_+\ind_{Q_e}]
       -\kappa\sum_e\E[S_j\ind_{Q_e}]\\
 &\ge\beta G-\kappa A.
\end{align*}
The first sum equals $G$ by \eqref{eq:rect-event}; the second is bounded
by $A$ by \eqref{eq:charge}. Taking the supremum proves the affine
bound. Rounding sellers as in \cref{sec:extension} extends it to all
bounded seller priors without discretizing the MHR buyers. A single
edge gives the reverse implication.
\end{proof}

Here $A_e=A_{00}$ uses the ability to add $e$ to any matching avoiding
its endpoints. Additional downward-closed constraints can prevent
that operation. Thus the MHR argument establishes the ordinary-matching
claim; it does not establish the stronger feasibility extension proved
above for unrestricted priors.

\subsection{Sharp Welfare Ratios}
The transfers preserve the affine coefficients at every multiplier.
Budget duality therefore gives the matching lower bounds, while the
bilateral \revise{attaining}{extremal} instances already supply their upper bounds.

\begin{proof}[Proof of \cref{thm:main}]
\Cref{u:thm:reduction,u:thm:scalar} give the unrestricted bilateral
constant. \Cref{prop:root,thm:certification}, MHR localization, and
budget duality give the MHR bilateral constant. Apply the preceding
transfers with $\beta=\rho$ and $\kappa=1-\rho$ at every $\revise{\lambda}{\alpha}>0$;
at $\revise{\lambda}{\alpha}=0$, $V_0=G$. Taking the infimum over multipliers as in
\cref{lem:duality} gives the unrestricted guarantee for downward-closed
matching markets and the MHR guarantee for ordinary matching markets.
In the finite unrestricted case, \cite[Lemma~2.2]{LQW26} \revise{attains}{achieves} the
dual value using only positive-gain edges, so the restricted separation
argument in \cref{lem:duality} applies as well. The MHR construction
already has this property.

Single-edge markets give sharpness in both classes.
\Cref{sec:extension} extends the guarantees to finite expected welfare,
removes the buyer's upper atom without changing the infimum, and gives
pointwise SBB. At the unrestricted minimizing triple the boundary
lottery \revise{attains}{achieves} the bound. At the MHR minimizing triple, the hard
seller's dual upper bound equals the universal lower bound, and the
compactness argument in \cref{thm:certification} ensures attainment.
\end{proof}

Both matching constants are thus determined by the bilateral welfare
problems. The lifting framework supplies the composition; the
endowment charge and the MHR-preserving conditioning specify its
welfare application and its current scope.

\section{Conclusion}\label{sec:conclusion}
The sharp welfare loss is governed by an affine budget comparison that
retains the sellers' initial endowment. For arbitrary priors it reduces
to power-law layers; for MHR buyers it reduces to a typewise allocation
and a matching seller satisfying one boundary equation. Both reductions
preserve enough structure to determine lower and upper bounds together.
The existing bilateral-to-matching framework, augmented by a full-prior
endowment charge, transfers the unrestricted constant under
downward-closed feasibility. MHR-preserving rectangle conditioning
transfers the MHR constant to ordinary matching markets.

Several questions remain. It is natural to seek analytic evaluations of
the two variational minima, beyond their exact characterizations and
finite interval proofs. Extending the sharp MHR welfare guarantee to
additional downward-closed feasibility constraints remains open here;
it requires local rules beyond those supplied by the unrestricted
buyer-prior theorem or a replacement for rectangle packing. Ex post IR, payments only at trade, and
correlated types also change steps used here. Finally, uniqueness in
this paper concerns the normalized \revise{canonical}{reduced-form} parameter triples; it
does not classify all extremal distributions or all optimal mechanisms.

\section*{Declaration of Generative AI Use}
During the preparation of this manuscript, the authors used generative AI tools to assist with language editing, exposition, and manuscript preparation. All mathematical arguments, proofs, results, and conclusions were reviewed and independently verified by the authors, who take full responsibility for the content of the manuscript.

\printbibliography[heading=bibintoc]

@article{MS83,
  author       = {Myerson, Roger B. and Satterthwaite, Mark A.},
  title        = {Efficient Mechanisms for Bilateral Trading},
  journaltitle = {Journal of Economic Theory},
  year         = {1983},
  volume       = {29},
  number       = {2},
  pages        = {265--281}
}

@inproceedings{KPV22,
  author       = {Kang, Zi Yang and Pernice, Francisco and Vondr{\'a}k, Jan},
  title        = {Fixed-Price Approximations in Bilateral Trade},
  booktitle    = {Proceedings of the 2022 ACM-SIAM Symposium on Discrete Algorithms},
  year         = {2022},
  pages        = {2964--2985},
  publisher    = {SIAM}
}

@inproceedings{CW23,
  author       = {Cai, Yang and Wu, Jinzhao},
  title        = {On the Optimal Fixed-Price Mechanism in Bilateral Trade},
  booktitle    = {Proceedings of the 55th Annual ACM Symposium on Theory of Computing},
  year         = {2023},
  pages        = {737--750},
  publisher    = {ACM},
  eprint       = {2301.05167},
  eprinttype   = {arxiv},
  eprintclass  = {cs.GT}
}

@inproceedings{LRW23,
  author       = {Liu, Zhengyang and Ren, Zeyu and Wang, Zihe},
  title        = {Improved Approximation Ratios of Fixed-Price Mechanisms in Bilateral Trades},
  booktitle    = {Proceedings of the 55th Annual ACM Symposium on Theory of Computing},
  year         = {2023},
  pages        = {751--760},
  publisher    = {ACM},
  doi          = {10.1145/3564246.3585160},
  eprint       = {2303.15711},
  eprinttype   = {arxiv},
  eprintclass  = {cs.GT}
}

@article{GK26,
  author       = {Giambartolomei, Giordano and de Keijzer, Bart},
  title        = {On the Approximation Ratio of Optimal Fixed-Price Mechanisms for Single and Multi-Unit Bilateral Trade},
  journaltitle = {Proceedings of the AAAI Conference on Artificial Intelligence},
  year         = {2026},
  volume       = {40},
  number       = {20},
  pages        = {16946--16953},
  doi          = {10.1609/aaai.v40i20.38741}
}

@online{JGC26,
  author       = {Jiang, Tao and Gao, Minbo and Cai, Shaowei},
  title        = {The Exact Approximation Ratio of the Optimal Fixed-Price Mechanism in Bilateral Trade},
  year         = {2026},
  eprint       = {2609.27878},
  eprinttype   = {arxiv},
  eprintclass  = {cs.DS}
}

@online{DS26,
  author       = {Dobzinski, Shahar and Shaulker, Ariel},
  title        = {Welfare Maximization in Bilateral Trade: Improved Approximation Guarantees Beyond the Fixed Price Barrier},
  year         = {2026},
  eprint       = {2606.04890},
  eprinttype   = {arxiv},
  eprintclass  = {cs.GT}
}

@online{LQRW26,
  author       = {Liu, Zhengyang and Qin, Ying and Ren, Zeyu and Wang, Zihe},
  title        = {Second-Best Bilateral Trade Is {$1/2$} Efficient},
  year         = {2026},
  eprint       = {2606.03849},
  eprinttype   = {arxiv},
  eprintclass  = {cs.GT},
  version      = {2}
}

@online{BLWZ26,
  author       = {Bei, Xiaohui and Li, Bo and Wu, Wenhao and Zhou, Shengwei},
  title        = {Second-Best Gains from Trade in Matching Markets},
  year         = {2026},
  eprint       = {2609.18724},
  eprinttype   = {arxiv},
  eprintclass  = {cs.GT}
}

@online{LQW26,
  author       = {Liu, Zhengyang and Qin, Ying and Wang, Zihe},
  title        = {From Bilateral Trade to Matching Markets: Sharp Gains from Trade},
  year         = {2026},
  eprint       = {2609.30702},
  version      = {1},
  eprinttype   = {arxiv},
  eprintclass  = {cs.GT}
}

@online{mpmath,
  author       = {{The mpmath developers}},
  shorthand    = {mpmath},
  title        = {mpmath 1.3.0 Documentation: Arbitrary-Precision Interval Arithmetic},
  url          = {https://mpmath.org/doc/current/contexts.html},
  urldate      = {2026-09-28}
}

@article{HR87,
  author       = {Hagerty, Kathleen M. and Rogerson, William P.},
  title        = {Robust Trading Mechanisms},
  journaltitle = {Journal of Economic Theory},
  year         = {1987},
  volume       = {42},
  number       = {1},
  pages        = {94--107},
  doi          = {10.1016/0022-0531(87)90104-9}
}

@article{BD21,
  author       = {Blumrosen, Liad and Dobzinski, Shahar},
  title        = {(Almost) Efficient Mechanisms for Bilateral Trading},
  journaltitle = {Games and Economic Behavior},
  year         = {2021},
  volume       = {130},
  pages        = {369--383},
  doi          = {10.1016/j.geb.2021.08.011},
  eprint       = {1604.04876},
  eprinttype   = {arxiv},
  eprintclass  = {cs.GT}
}

@article{McA08,
  author       = {McAfee, R. Preston},
  title        = {The Gains from Trade under Fixed Price Mechanisms},
  journaltitle = {Applied Economics Research Bulletin},
  year         = {2008},
  volume       = {1},
  number       = {1},
  pages        = {1--10},
  url          = {https://mc4f.ee/Papers/PDF/GFTConstantPrice.pdf}
}

@inproceedings{BM16,
  author       = {Blumrosen, Liad and Mizrahi, Yehonatan},
  title        = {Approximating Gains-from-Trade in Bilateral Trading},
  booktitle    = {Web and Internet Economics: 12th International Conference},
  series       = {Lecture Notes in Computer Science},
  volume       = {10123},
  year         = {2016},
  pages        = {400--413},
  publisher    = {Springer},
  doi          = {10.1007/978-3-662-54110-4_28}
}

@inproceedings{CBKLT16,
  author       = {Colini-Baldeschi, Riccardo and de Keijzer, Bart and Leonardi, Stefano and Turchetta, Stefano},
  title        = {Approximately Efficient Double Auctions with Strong Budget Balance},
  booktitle    = {Proceedings of the Twenty-Seventh Annual ACM-SIAM Symposium on Discrete Algorithms},
  year         = {2016},
  pages        = {1424--1443},
  publisher    = {SIAM},
  doi          = {10.1137/1.9781611974331.ch98}
}

@article{CBGKRT20,
  author       = {Colini-Baldeschi, Riccardo and Goldberg, Paul W. and de Keijzer, Bart and Leonardi, Stefano and Roughgarden, Tim and Turchetta, Stefano},
  title        = {Approximately Efficient Two-Sided Combinatorial Auctions},
  journaltitle = {ACM Transactions on Economics and Computation},
  year         = {2020},
  volume       = {8},
  number       = {1},
  eid          = {4},
  pagetotal    = {29},
  doi          = {10.1145/3381523},
  eprint       = {1611.05342},
  eprinttype   = {arxiv},
  eprintclass  = {cs.GT}
}

@inproceedings{DFL21,
  author       = {D{\"u}tting, Paul and Fusco, Federico and Lazos, Philip and Leonardi, Stefano and Reiffenh{\"a}user, Rebecca},
  title        = {Efficient Two-Sided Markets with Limited Information},
  booktitle    = {Proceedings of the 53rd Annual ACM SIGACT Symposium on Theory of Computing},
  year         = {2021},
  pages        = {1452--1465},
  publisher    = {ACM},
  doi          = {10.1145/3406325.3451076},
  eprint       = {2003.07503},
  eprinttype   = {arxiv},
  eprintclass  = {cs.GT}
}

@inproceedings{BCWZ17,
  author       = {Brustle, Johannes and Cai, Yang and Wu, Fa and Zhao, Mingfei},
  title        = {Approximating Gains from Trade in Two-Sided Markets via Simple Mechanisms},
  booktitle    = {Proceedings of the 18th ACM Conference on Economics and Computation},
  year         = {2017},
  pages        = {589--590},
  publisher    = {ACM},
  eprint       = {1706.04637},
  eprinttype   = {arxiv},
  eprintclass  = {cs.GT}
}

@inproceedings{DMSW22,
  author       = {Deng, Yuan and Mao, Jieming and Sivan, Balasubramanian and Wang, Kangning},
  title        = {Approximately Efficient Bilateral Trade},
  booktitle    = {Proceedings of the 54th Annual ACM SIGACT Symposium on Theory of Computing},
  year         = {2022},
  pages        = {718--721},
  publisher    = {ACM},
  eprint       = {2111.03611},
  eprinttype   = {arxiv},
  eprintclass  = {cs.GT}
}

@inproceedings{Fei22,
  author       = {Fei, Yumou},
  title        = {Improved Approximation to First-Best Gains-from-Trade},
  booktitle    = {Web and Internet Economics: 18th International Conference},
  year         = {2022},
  pages        = {204--218},
  publisher    = {Springer},
  eprint       = {2205.00140},
  eprinttype   = {arxiv},
  eprintclass  = {cs.GT}
}

@inproceedings{CGMZ21,
  author       = {Cai, Yang and Goldner, Kira and Ma, Steven and Zhao, Mingfei},
  title        = {On Multi-Dimensional Gains from Trade Maximization},
  booktitle    = {Proceedings of the 2021 ACM-SIAM Symposium on Discrete Algorithms},
  year         = {2021},
  pages        = {1079--1098},
  publisher    = {SIAM},
  eprint       = {2007.13934},
  eprinttype   = {arxiv},
  eprintclass  = {cs.GT}
}

@inproceedings{DMSWW25,
  author       = {Deng, Yuan and Mao, Jieming and Sivan, Balasubramanian and Wang, Kangning and Wu, Jinzhao},
  title        = {Approximately Efficient Bilateral Trade with Samples},
  booktitle    = {Proceedings of the 26th ACM Conference on Economics and Computation},
  year         = {2025},
  pages        = {206--223},
  publisher    = {ACM},
  doi          = {10.1145/3736252.3742519},
  eprint       = {2502.13122},
  eprinttype   = {arxiv},
  eprintclass  = {cs.GT}
}

@inproceedings{DS24,
  author       = {Dobzinski, Shahar and Shaulker, Ariel},
  title        = {Bilateral Trade with Correlated Values},
  booktitle    = {Proceedings of the 56th Annual ACM Symposium on Theory of Computing},
  year         = {2024},
  pages        = {237--246},
  publisher    = {ACM},
  doi          = {10.1145/3618260.3649659},
  eprint       = {2308.09964},
  eprinttype   = {arxiv},
  eprintclass  = {cs.GT}
}

@inproceedings{DEGST25,
  author       = {Dobzinski, Shahar and Eden, Alon and Goldner, Kira and Shaulker, Ariel and Tsilivis, Thodoris},
  title        = {Bilateral Trade with Interdependent Values: Information vs. Approximation},
  booktitle    = {Proceedings of the 26th ACM Conference on Economics and Computation},
  year         = {2025},
  pages        = {641--665},
  publisher    = {ACM},
  doi          = {10.1145/3736252.3742607},
  eprint       = {2506.23896},
  eprinttype   = {arxiv},
  eprintclass  = {cs.GT}
}

@inproceedings{BRTW26,
  author       = {Babaioff, Moshe and Rubinstein, Aviad and Tan, Xizhi and Wang, Kangning},
  title        = {Approximating Gains-from-Trade in Matching Markets},
  booktitle    = {Proceedings of the 58th Annual ACM Symposium on Theory of Computing},
  year         = {2026},
  pages        = {710--721},
  publisher    = {ACM},
  doi          = {10.1145/3798129.3800786},
  eprint       = {2604.00129},
  eprinttype   = {arxiv},
  eprintclass  = {cs.GT}
}
\clearpage
\appendix
\crefalias{section}{appendix}

\section{General Priors and Payment Conventions}\label{sec:extension}
The reductions and transfer proofs first use finite or bounded priors.
We now justify the extensions invoked in the main theorem, tracking
the endowment alongside gains and retaining the MHR restriction.
For unrestricted priors the arguments allow downward-closed matching
feasibility; the MHR guarantees use ordinary matching markets.

\subsection{Bounded Priors}\label{u:lem:rounding}
Round each buyer down and each seller up to a grid of mesh $\delta$. An
allocation defined on the finite rounded supports is extended using the
greatest supported buyer grid point below the report and the least supported
seller grid point above the report. If that point is absent, that agent is
unserved; endpoint service is extended constantly on the other side.
For the true types these maps agree almost surely with rounding.

Use buyer and seller minimal-rent envelopes on the resulting step functions.
On an interval with constant buyer service $\revise{X}{x_B}$, the term $b\revise{X}{x_B}$ and the integral
of $\revise{X}{x_B}$ vary by the same amount, so the payment is constant and equals the
finite payment at its lower supported endpoint. For a seller interval the
terms $s\revise{Y}{x_S}$ and $\int_s\revise{Y}{x_S}$ cancel at its upper supported endpoint. Across an
empty grid cell the next supported endpoint and the same identity give the
finite adjacent utility increment. Thus interim payments at true types,
their expectations, BIC, interim IR, and budget feasibility are preserved.
Actual gains weakly increase relative to rounded gains.

If $m$ is the number of sellers and $d$ the maximum matching size, then
let $A_\delta$ and $G_\delta$ denote the expected seller endowment and first-best gains from trade in the rounded instance. Then
$A\le A_\delta\le A+m\delta$ and $G_\delta\ge G-2d\delta$.
The extended allocation's \revise{attainable}{feasible} payment surplus is at least its
preserved surplus. An affine bound for rounded priors therefore gives
\[
 V_{\revise{\lambda}{\alpha}}+\kappa A\ge\beta G-(2\beta d+\kappa m)\delta.
\]
The same argument for a budget-feasible ratio gives
$G_{\SB}+A\ge\rho(G+A)
-[2\rho d+(1-\rho)m]\delta$.
Let $\delta\downarrow0$. This proves the unrestricted extensions without
postulating strong duality for arbitrary Borel priors.

For MHR buyers, keep every buyer unchanged and round only seller
costs upward. The same constant-service envelope calculation preserves
seller payments; buyer interim allocations and payments are unchanged.
Expected budget feasibility is therefore preserved, while actual gains
weakly increase. This applies the preceding argument without leaving
the MHR class.

For $m$ sellers, $A\le A_\delta\le A+m\delta$, and efficient gains satisfy $G_\delta\ge G-m\delta$. Thus a fixed-multiplier affine guarantee on rounded priors yields
\[
 V_{\revise{\lambda}{\alpha}}+\kappa A\ge\beta G-(\beta+\kappa)m\delta.
\]
The budget-feasible welfare guarantees have vanishing errors as well.
Letting $\delta\downarrow0$ proves the MHR extensions without invoking
duality for arbitrary Borel sellers.

\subsection{Unbounded Supports}
Clip every active type at $K$. Clipping preserves the unrestricted class, and a clipped MHR buyer remains MHR. Nonnegative edge gains satisfy
\[
 (\min(b,K)-\min(s,K))_+\uparrow(b-s)_+,
\]
so the clipped first-best gains $G_K\uparrow G$ and seller baseline $A_K\uparrow A$. There are finitely many matchings, which justifies exchanging this monotone limit with their maximum and expectation.

Use the positive-gain budget-feasible guarantees obtained from the restricted compact allocation class in \cref{lem:duality}. A seller of clipped cost $K$ is never served. Extend the mechanism to original reports by clipping: buyer service is constant above $K$, and seller service is zero there. Envelope payments remain unchanged, incentives and expected budget are preserved, and gains weakly increase. In the original market the welfare is at least
\[
 A+\rho(A_K+G_K)-A_K-o(1).
\]
Let the approximation error vanish, then $K\to\infty$. This proves the welfare lower bounds whenever $0<W_{\FB}<\infty$. A nonnegative MHR buyer has finite mean because its concave $g$ is bounded above by an affine function for large arguments. Isolated buyers can be removed. Every remaining buyer has finite expectation when $W_{\FB}<\infty$, because the first-best welfare dominates that buyer's value by considering one incident edge. No attainment claim is made for arbitrary unbounded priors.

\subsection{Atomless Buyers and Balanced Transfers}
For a capped buyer with cap length $T>0$, replace $g(t)=\ell+\min(t,T)$ by
\[
 g_\varepsilon(t)=\begin{cases}
 \ell+t,&t\le T,\\
 \ell+T+\varepsilon(1-e^{-(t-T)}),&t>T,
 \end{cases}\qquad0<\varepsilon\le1.
\]
This function is increasing and concave, with derivative dropping from $1$ to $\varepsilon$ and then decreasing. It gives an atomless MHR buyer. Values converge uniformly, and normalized scores $g_\varepsilon-ag_\varepsilon'$ converge in $L^1$ to the capped score. For every fixed hard seller with a monotone normalized score, efficient gains and the exact score expression for $V_{\revise{\lambda}{\alpha}}$ converge. Thus the strict upper bounds and the tight infima persist in the atomless class. 

Finally, payment balance follows from the independent-types,
signed-transfer conversion of
\cite[Appendix~A.1]{LQW26}, which applies to every mechanism above. Its formula
uses only integrable expectations, so the finite-type argument extends
unchanged to the independent Borel priors considered here.
After rebating the nonnegative expected surplus, it reconstructs
profilewise balanced transfers with the same interim payments.
Allocations, BIC, and interim IR are preserved, so all welfare guarantees
and sharpness statements hold under pointwise SBB as well. This
conversion does not assert ex post IR or payments only at trade.

These extensions complete the passage from bounded instances to the
prior and payment conventions of the main theorems. The next two
appendices verify the unrestricted and MHR constants, respectively.

\section{\texorpdfstring{Proof of Theorem~\ref*{u:thm:scalar}}{Proof of the unrestricted welfare theorem}}\label{app:unrestricted}
This appendix proves \cref{u:thm:scalar}. We give the certificate
structure, justify its global and local bounds, and collect the
derivatives needed by the verifier. All finite inputs and replay
commands are described in \cref{sec:reproduction}.

\subsection{Certificate Structure}\label{u:sec:scalar}
The certificate first confines every candidate minimum to a compact
box, then uses local convexity to prove sharpness and uniqueness.
We state the verification procedure before supplying its analytic bounds.

\subsubsection{Verification Algorithm}
Write $z=(a,p,r)$. Since $D(\rho,z)=(A+I)(R_{\mathrm U}(z)-\rho)$, it suffices to certify signs of $D$. We use the hierarchy
\[
 0.8882516<\rho_{\mathrm{lo}}<\rho_{\mathrm{hi}}<\overline\rho=0.8882517,
 \qquad K\subset\operatorname{int}K_0.
\]
Here $\rho_{\mathrm{lo}},\rho_{\mathrm{hi}}$ are the exact decimal endpoints in \cref{u:subsec:point}; $K_0$ is an outer box containing every point with $R_{\mathrm U}\le\overline\rho$; and $K$ is the cube of radius $0.012$ around a rational point $c$. The local Hessian bound will hold for every $\rho\in[0.8882516,\overline\rho]$, so it applies simultaneously to both endpoint tests and the boundary test at $\overline\rho$.

The finite certificate consists of these rational parameters, a subdivision tree for $K_0$, labels for its leaves, and a box cover of $K$. A leaf is labeled either as contained in $K$, as having positive $D(\overline\rho,\cdot)$, or as having a strictly signed $p$- or $r$-partial derivative. \Cref{u:alg:verification} checks these assertions with interval arithmetic. A failed interval test rejects the proposed certificate. Thus termination of verification depends only on the finite input, not on convergence of numerical optimization.

\begin{algorithm}[H]
\caption{Verifying the scalar minimum}\label{u:alg:verification}
\textbf{Input:} Rational parameters $\rho_{\mathrm{lo}},\rho_{\mathrm{hi}},\overline\rho,c,K,K_0$, a labeled subdivision of $K_0$, and a local box cover of $K$.
\begin{enumerate}[label=\arabic*.,leftmargin=2em]
\item Check the parameter hierarchy, $K\subset\operatorname{int}K_0$, and exact coverage of the global subdivision and the local cover.
\item Verify the exterior inequalities proving $D(\overline\rho,z)>0$ outside $\operatorname{int}K_0$. For each global leaf, verify its claimed containment, positive value bound, or nonzero partial derivative.
\item On every local box, verify $\nabla^2D(\rho,z)\succeq m\revise{I}{I_3}$ for $\rho\in[0.8882516,\overline\rho]$, with $m=0.01$, where $I_3$ denotes the $3\times3$ identity matrix.
\item Verify $D(\rho_{\mathrm{lo}},c)-\|\nabla D(\rho_{\mathrm{lo}},c)\|_2^2/(2m)>0$, $D(\rho_{\mathrm{hi}},c)<0$, and the strong-convexity bound proving $D(\overline\rho,z)>0$ on $\partial K$.
\item Accept the claims $\rho_{\mathrm{lo}}<\min R_{\mathrm U}<\rho_{\mathrm{hi}}$ and uniqueness of the minimizer in $K$ if every check passes; otherwise reject.
\end{enumerate}
The test formulas and finite data are specified in \cref{u:sec:verification}.
\end{algorithm}

We obtain the global subdivision by bisecting a longest side of each unresolved box until one of its labels can be proposed. The subsequent verifier checks the resulting finite object independently. All numerical errors are enclosed by outward-rounded operations and explicit series remainders; no approximation slack is added to the final welfare inequality.

The value and derivative tests exclude nonlocal minima only after the
two boundary checks succeed. \Cref{u:lem:cover-logic} formalizes this
step; the proof of \cref{u:thm:scalar} in \cref{u:subsec:point} combines
it with the local convexity and point bounds. We give that argument
once, after verifying its inputs.

\paragraph{Complexity.}
Let $L$ be the number of leaves in the global subdivision, $H$ the number of local Hessian boxes, $N$ the number of terms used to enclose each beta integral, $Q$ the number of quadrature intervals per local integral, and $M$ the series length for the final point evaluation. Verification takes $O(LN+HQ+M)$ interval arithmetic operations. The tree and coverage checks take $O(L+H)$ additional operations, and the certificate can be checked using $O(L+H+N+M)$ space. These are bounds in the size of the finite certificate; the number of adaptive subdivisions depends on the function and the required margins.

\subsubsection{Verified Bounds}
\Cref{u:tab:certificate} summarizes the finite computation. The global integral bounds truncate a positive series after degree $N=40$ with an explicit geometric remainder. The local Hessian bounds use $Q=256$ intervals; convexity of the integrands makes midpoint sums lower bounds and trapezoid sums upper bounds. The point evaluation uses $M=700$ terms. All formulas and remainder estimates are proved in \cref{u:sec:verification,u:app:derivatives}. The accompanying programs implement the same tests using mpmath interval arithmetic and an independent MPFR implementation.

\begin{table}[t]
\centering
\begin{tabular}{lrl}
\toprule
Verification & Number & Certified conclusion\\
\midrule
Exterior inequalities & $6$ & $D(\overline\rho,\cdot)>0$ outside $\operatorname{int}K_0$\\
Positive-value boxes & $74{,}414$ & $D(\overline\rho,\cdot)>0$\\
Signed-partial boxes & $92{,}091$ & $D_p\ne0$ or $D_r\ne0$\\
Boxes contained in $K$ & $1{,}000$ & covered by local analysis\\
Local Hessian boxes & $512$ & $\nabla^2D\succeq0.01I$\\
\bottomrule
\end{tabular}
\caption{The finite certificate. The global subdivision has $167{,}505$ leaves. The least signed margin in its nonlocal tests exceeds $1.62\cdot10^{-8}$. The final point and boundary inequalities are given in \cref{u:eq:lower-point,u:eq:upper-point,u:eq:boundary-point}.}
\label{u:tab:certificate}
\end{table}

The table records the accepted finite certificate. We next prove
the inequalities used by its exterior, integral, and local tests.

\subsection{Enclosure Bounds}\label{u:sec:verification}
Throughout this subsection, $\overline\rho=0.8882517$ and $C=1-\overline\rho$. We give the inequalities and integration bounds used in the finite certificate. Interval arithmetic encloses each elementary operation and every series remainder.

\subsubsection{Boundary Exclusion}\label{u:subsec:compact}
Let $\revise{\alpha_-}{a_-},\revise{\pi_-}{p_-},\revise{\eta_-}{r_-},\revise{\alpha_+}{a_+},\revise{\pi_+}{p_+}$ be the dyadic rational endpoints stored in the certificate, where $a_-,a_+$ bound $a$, $p_-,p_+$ bound $p$, and $r_-$ is the lower bound for $r$; with approximate values
\[
 0.1,\quad0.02,\quad0.00005,\quad0.999999,\quad0.9.
\]
Each is the nearest binary64 number to the displayed decimal. This convention specifies exact rational endpoints. Define
\begin{equation}\label{u:eq:rootbox}
 K_0=[\revise{\alpha_-}{a_-},\revise{\alpha_+}{a_+}]\times[\revise{\pi_-}{p_-},\revise{\pi_+}{p_+}]\times[\revise{\eta_-}{r_-},1].
\end{equation}

\begin{lemma}[Exterior inequalities]\label{u:lem:exterior}
At $\rho=\overline\rho$, $D>0$ outside the interior of $K_0$, for all $(a,p,r)\in(0,1)^3$; it is also positive on the face $r=1$ of $K_0$.
\end{lemma}
\begin{proof}[Proof of \cref{u:lem:exterior}]
The seller mean decreases with $a$ and $p$, as is clear from \cref{u:eq:layer-quantiles}. Sending $a\uparrow1$ gives
\begin{equation}\label{u:eq:elementary-bounds}
\begin{split}
 A_a(p)&\ge1-p+p\log p,\\
 I_a(p,r)&\le1-A_a(p),\\
 I_a(p,r)&\le\E B=\frac{r-a r^{1/a}}{1-a}
                 \le r(1-\log r),\\
 I_a(p,r)&\le pr\,\mathrm B(1-a,1-a),
\end{split}
\end{equation}
where $\mathrm B(k,k)=\int_0^1[t(1-t)]^{k-1}\dd t$. The buyer-mean inequality follows from its quantile increasing with $a$, and the other bounds follow directly from \cref{u:eq:I}.

For small $a$, write $K(a)=(1-a)\mathrm B(1-a,1-a)$. The beta-gamma identity gives
\[
 K(a)=\frac{2\Gamma(2-a)^2}{\Gamma(3-2a)}.
\]
The function $K$ increases with $a$: with $k=1-a$, its logarithmic derivative in $k$ is $2\psi(1+k)-2\psi(1+2k)<0$. Here strict increase of $\psi=(\log\Gamma)'$ follows from strict log-convexity of the gamma integral, which itself is H\"older's inequality. Also $K(a)\ge K(0)=1$. Therefore \cref{u:eq:elementary-bounds} implies
\[
 R_{\mathrm U}(a,p,r)\ge\min\{1,K(a)^{-1}\}=K(a)^{-1}.
\]
Thus $a\le\revise{\alpha_-}{a_-}$ is excluded by the first of the interval tests below.

For $p\le\revise{\pi_-}{p_-}$, we have
\[
 D\ge A_a(p)-\overline\rho
 \ge C-\revise{\pi_-}{p_-}(1-\log\revise{\pi_-}{p_-})>0.
\]
For $p\ge\revise{\pi_+}{p_+}$, use $I\le\E B\le r/(1-a)$ to obtain
\[
 D\ge C A_a(p)+(p-\overline\rho)r/(1-a)>0.
\]
Once $p\le\revise{\pi_+}{p_+}$, the region $r\le\revise{\eta_-}{r_-}$ is excluded by
\[
 D\ge C(1-\revise{\pi_+}{p_+}+\revise{\pi_+}{p_+}\log\revise{\pi_+}{p_+})
       -\overline\rho\,\revise{\eta_-}{r_-}(1-\log\revise{\eta_-}{r_-})>0.
\]
With $p\ge\revise{\pi_-}{p_-}$ and $r\ge\revise{\eta_-}{r_-}$, the region $a\ge\revise{\alpha_+}{a_+}$ is excluded by
\[
 D\ge\frac{\revise{\pi_-}{p_-}\revise{\eta_-}{r_-}}{1-\revise{\alpha_+}{a_+}}-\overline\rho>0.
\]
Finally, at $r=1$,
\[
 D=A_a(p)+\frac{p}{1-a}-\overline\rho\ge C>0.
\]
All bounds include the relevant equality faces. Interval evaluation gives the following positive lower bounds:
\begin{center}
\begin{tabular}{lr}
\toprule
Test & Certified lower bound\\
\midrule
$K(\revise{\alpha_-}{a_-})^{-1}-\overline\rho$ & $0.0179$\\
$C-\revise{\pi_-}{p_-}(1-\log\revise{\pi_-}{p_-})$ & $0.0135$\\
$\revise{\pi_+}{p_+}-\overline\rho$ & $0.0117$\\
$C(1-\revise{\pi_+}{p_+}+\revise{\pi_+}{p_+}\log\revise{\pi_+}{p_+})-\overline\rho\revise{\eta_-}{r_-}(1-\log\revise{\eta_-}{r_-})$ & $0.000094$\\
$\revise{\pi_-}{p_-}\revise{\eta_-}{r_-}/(1-\revise{\alpha_+}{a_+})-\overline\rho$ & $0.1117$\\
$C$ & $0.1117$\\
\bottomrule
\end{tabular}
\end{center}
The beta value in the first test is evaluated by the elementary series in \cref{u:subsec:series}, not by evaluating $\Gamma$.
\end{proof}

\subsubsection{Finite Global Coverage}\label{u:subsec:cover}
For a parameter box $\revise{B}{\mathcal B_{\rm par}}=[a_-,a_+]\times[p_-,p_+]\times[r_-,r_+]$, the following is a valid lower bound for $D/(pr)$:
\begin{equation}\label{u:eq:box-value}
 C\frac{A_{a_+}(p_+)}{p_+r_+}
 +\frac1{1-a_-}
 -\overline\rho\frac{I_{a_+}(p_-,r_-)}{p_-r_-}.
\end{equation}
Indeed, $A_a(p)$ decreases in $a,p$, while
\[
 \frac{I_a(p,r)}{pr}
 =\int_0^1\min\{p^{-1},(1-t)^{-a}\}
              \min\{r^{-1},t^{-a}\}\dd t
\]
increases in $a$ and decreases in $p,r$.

For derivative tests define
\begin{align}
 U(a,r,z)&=\int_0^z(1-t)^{-a}\min\{1,r t^{-a}\}\dd t,\label{u:eq:U}\\
 V(a,p,z)&=\int_z^1t^{-a}\min\{1,p(1-t)^{-a}\}\dd t.\label{u:eq:Vaux}
\end{align}
Differentiation under the integral, with matching boundary values at the moving cutoffs, gives
\begin{equation}\label{u:eq:first-partials}
 I_p=U(a,r,1-p^{1/a}),\qquad I_r=V(a,p,r^{1/a}),\qquad
 -A_p=\frac{1-p^{(1-a)/a}}{1-a}.
\end{equation}
Both integrands increase with $a$ and the other displayed mass parameter. The function $U$ increases with its last argument, whereas $V$ decreases with it. Set
\[
 \ell_- =\frac{1-p_+^{(1-a_+)/a_+}}{1-a_-},\qquad
 \ell_+ =\frac{1-p_-^{(1-a_-)/a_-}}{1-a_+}.
\]
Then $\ell_-\le-A_p\le\ell_+$. Uniform derivative bounds on $\revise{B}{\mathcal B_{\rm par}}$ are
\begin{align}
 D_p&\ge-C\ell_++\frac{r_-}{1-a_-}
       -\overline\rho\,U(a_+,r_+,1-p_-^{1/a_-}),\label{u:eq:dp-lower}\\
 D_p&\le-C\ell_-+\frac{r_+}{1-a_+}
       -\overline\rho\,U(a_-,r_-,1-p_+^{1/a_+}),\label{u:eq:dp-upper}\\
 D_r&\ge\frac{p_-}{1-a_-}
       -\overline\rho\,V(a_+,p_+,r_-^{1/a_-}),\label{u:eq:dr-lower}\\
 D_r&\le\frac{p_+}{1-a_+}
       -\overline\rho\,V(a_-,p_-,r_+^{1/a_+}).\label{u:eq:dr-upper}
\end{align}
These formulas are valid even where the two saturation intervals overlap.

Let $c=(c_a,c_p,c_r)$ be the following exact rational point:
\begin{equation}\label{u:eq:rational-center}
\begin{split}
 c_a&=0.323144212181945529204914052954389108756940743,\\
 c_p&=0.654479031166666393676935731392503971225432437,\\
 c_r&=0.325165829719811989207067293084997587516390811.
\end{split}
\end{equation}
Define the rational local cube
\begin{equation}\label{u:eq:localbox}
 K=\prod_{j\in\{a,p,r\}}[c_j-0.012,c_j+0.012].
\end{equation}
We bisect a longest side of any unresolved box, breaking ties by coordinate order. Each leaf is contained in $K$, satisfies the positive lower bound in \cref{u:eq:box-value}, or satisfies one of the signed derivative bounds in \cref{u:eq:dp-lower,u:eq:dp-upper,u:eq:dr-lower,u:eq:dr-upper}. An exact check of the subdivision tree verifies coverage and local containment. The leaf counts and numerical margins are summarized in \cref{u:tab:certificate}.

\begin{lemma}[Excluding nonlocal minima]\label{u:lem:cover-logic}
Suppose $D(\overline\rho,\cdot)>0$ on $\partial K$ and on the boundary of $K_0$. The verified cover then implies
\[
 D(\overline\rho,a,p,r)>0\quad\text{on }K_0\setminus\operatorname{int}K.
\]
\end{lemma}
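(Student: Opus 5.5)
The argument is a minimum-point contradiction. Suppose, to the contrary, that some point $z\in K_0\setminus\operatorname{int}K$ has $D(\overline\rho,z)\le0$. The set $K_0\setminus\operatorname{int}K$ is compact, and $D(\overline\rho,\cdot)$ is continuous on it. Continuity holds because $A_a$ is explicit and $I_a$ is an integral of a bounded, jointly continuous family, with integrable singularities $t^{-a}$ and $(1-t)^{-a}$ that are dominated uniformly for $a\le a_+<1$. Hence $D(\overline\rho,\cdot)$ attains a minimum $z^*=(a^*,p^*,r^*)$ on this set, with $D(\overline\rho,z^*)\le0$.

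By hypothesis, $D>0$ on $\partial K$ and on $\partial K_0$, so $z^*$ lies in $\operatorname{int}K_0\setminus K$. This is an open set, so $z^*$ is an interior local minimizer of $D(\overline\rho,\cdot)$ on $(0,1)^3$.

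The leaf $\mathcal B_{\rm par}$ of the verified subdivision that contains $z^*$ cannot be a $K$-containment leaf, since $z^*\notin K$. It also cannot be a positive-value leaf: bound \eqref{u:eq:box-value} would give $D(\overline\rho,z^*)>0$, because $pr>0$ on that leaf. So the leaf is a signed-partial leaf, where one of \eqref{u:eq:dp-lower}--\eqref{u:eq:dr-upper} shows that $D_p$ or $D_r$ is strictly signed on the whole box.

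These partials exist and are given by \eqref{u:eq:first-partials}. Indeed, $I$ is differentiable in $p$ and $r$, since the saturation cutoffs move continuously and the integrand matches across them. Therefore $D_p(\overline\rho,z^*)\ne0$ or $D_r(\overline\rho,z^*)\ne0$. This contradicts the first-order condition at an interior minimum: moving $p$ or $r$ slightly in the descending direction stays inside $\operatorname{int}K_0\setminus K$ and strictly lowers $D$.

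The step needing care is the leaf bookkeeping when $z^*$ sits on a face shared by several leaves. Any leaf containing $z^*$ works, because all three leaf bounds are closed-box bounds that include their faces, as stated in the paper. A second point is that the signed derivative bound, valid on a closed box, gives a nonzero partial at $z^*$ itself, which is all the contradiction uses.

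Finally, the parameter ranges need a check. The face $r=1$ lies in $\partial K_0$, so it is covered by the hypothesis. Every other point of $K_0$ has $a,p,r\in(0,1)$, which keeps $z^*$ in the domain where $D$ is defined and smooth in $p$ and $r$.
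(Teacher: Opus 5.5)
Your proof is correct and follows the paper's argument. Compactness gives a nonpositive minimizer of $D(\overline\rho,\cdot)$ on $K_0\setminus\operatorname{int}K$, boundary positivity pushes it into the open set $\operatorname{int}K_0\setminus K$, and each of the three leaf types (contained in $K$, positive value, signed $p$- or $r$-partial) is then ruled out; the only cosmetic difference is that you use just the existence of the $p$- and $r$-partials at the minimizer (one-variable Fermat), whereas the paper cites continuous differentiability across the saturation-overlap surface.
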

\begin{proof}[Proof of \cref{u:lem:cover-logic}]
Otherwise the continuous function has a nonpositive minimum on the compact set $K_0\setminus\operatorname{int}K$. Boundary positivity places this minimum in the open set $\operatorname{int}K_0\setminus K$, where its gradient must vanish. The function is continuously differentiable there, including across the saturation-overlap surface, as follows directly from \cref{u:eq:I} and matching cutoff values. A leaf containing this minimum cannot be a local leaf. It cannot be a positive-value leaf, and it cannot be a strictly signed-partial leaf. This contradicts the cover.
\end{proof}

\subsubsection{Integral Enclosures}\label{u:subsec:series}
For $k=1-a$ and $0\le z<1$, the incomplete beta primitive is
\begin{equation}\label{u:eq:beta-series}
 \mathrm B_z(k,k)=z^k\sum_{n=0}^\infty c_n z^n,
 \qquad c_n=\frac{(a)_n}{n!(n+k)}.
\end{equation}
This follows by expanding $(1-t)^{-a}$ in its positive binomial series and integrating termwise. The coefficients satisfy
\begin{equation}\label{u:eq:beta-recurrence}
 c_0=1/k,\qquad
 c_n=c_{n-1}\frac{(a+n-1)(k+n-1)}{n(k+n)},\qquad
 0<c_n\le c_{n-1}.
\end{equation}
After term $N$, the remaining part is at most
\begin{equation}\label{u:eq:beta-tail}
 z^k c_N\frac{z^{N+1}}{1-z}.
\end{equation}
For $z>1/2$, use symmetry
\[
 \mathrm B_z(k,k)=\mathrm B(k,k)-\mathrm B_{1-z}(k,k),\qquad
 \mathrm B(k,k)=2\mathrm B_{1/2}(k,k).
\]
Thus all complete and incomplete beta values use only the same elementary positive series. The global verification truncates after $N=40$, with every remainder included as a nonnegative interval.

Let $x_0=p^{1/a}$, $y_0=r^{1/a}$, and
\begin{equation}\label{u:eq:T0}
 T_0(a,z)=\int_0^z(1-t)^{-a}\dd t
         =\frac{1-(1-z)^{1-a}}{1-a}.
\end{equation}
In this subsection write $I:=I_a(p,r)$. If $x_0+y_0\le1$, splitting the integral at its two cutoffs gives
\begin{equation}\label{u:eq:I-split}
 I=pT_0(a,y_0)+rT_0(a,x_0)
       +pr\bigl[\mathrm B_{1-x_0}(k,k)-\mathrm B_{y_0}(k,k)\bigr].
\end{equation}
If $x_0+y_0\ge1$, every realized buyer value is at least every seller value, so
\begin{equation}\label{u:eq:I-overlap}
 I=\E B-\E S=\frac{r-a y_0}{k}-A_a(p).
\end{equation}
The formulas agree at equality. If an interval calculation cannot distinguish the cases, the hull of the two formula enclosures is safe. The integrals $U,V$ are evaluated by the same primitives, splitting at their saturation point and replacing an integral over a reversed interval by zero.

\subsubsection{Local Strong Convexity}\label{u:subsec:convexity}
For $z\in[0,1)$ define
\begin{equation}\label{u:eq:Tm}
 T_m(a,z)=\int_0^z(1-t)^{-a}[-\log(1-t)]^m\dd t.
\end{equation}
With $w=-(1-a)\log(1-z)$,
\begin{equation}\label{u:eq:Tm-explicit}
 T_m(a,z)=\frac{m!}{(1-a)^{m+1}}
     \left(1-e^{-w}\sum_{j=0}^m\frac{w^j}{j!}\right).
\end{equation}
On the entire local cube, $x_0+y_0<1$. Put
\begin{equation}\label{u:eq:Jm}
 J_m=\int_{y_0}^{1-x_0}
       [t(1-t)]^{-a}[-\log(t(1-t))]^m\dd t,\quad m=0,1,2.
\end{equation}
Analytic first and second derivatives are listed and derived in \cref{u:app:derivatives}. They express the Hessian of $D$ in terms of $T_m,J_m$ and elementary operations.

The integrand in $J_m$ is convex. To see this, write $L(t)=-\log(t(1-t))$. Then $L\ge\log4>1$, and
$L''-(L')^2=2/[t(1-t)]>0$.
Thus $L$ and $\log L$ are convex. The logarithm of the integrand is $aL+m\log L$, which is convex for $a>0$ and $m\ge0$; hence the integrand itself is convex. On every integration interval, the composite midpoint rule is therefore a lower bound and the composite trapezoid rule an upper bound. The local verification uses $256$ equal subintervals for each such bound. Monotonicity in $a$ and expansion of the integration interval enclose all parameter values in a box, rather than merely its center.

We divide each coordinate interval of $K$ into eight parts and enclose the resulting $512$ boxes by rational boxes. Exact comparisons of adjacent endpoints verify that these boxes cover $K$.

On each box, and simultaneously for
$\rho\in[0.8882516,0.8882517]$, the interval Hessian enclosure $H$ satisfies $H-0.01\revise{I}{I_3}\succ0$. This is checked by the scalar LDL pivots
\begin{align*}
 d_1&=H_{11}-0.01,\\
 d_2&=H_{22}-0.01-H_{12}^2/d_1,\\
 d_3&=H_{33}-0.01-H_{13}^2/d_1
       -(H_{23}-H_{12}H_{13}/d_1)^2/d_2.
\end{align*}
All lower interval endpoints are positive; lower bounds over the $512$ boxes exceed, respectively, $1.558$, $0.420$, and $0.048$. It follows that
\begin{equation}\label{u:eq:strong-convexity}
 \nabla^2 D(\rho,z)\succeq m\revise{I}{I_3},
 \qquad m=0.01,\quad z\in K,\quad
 \rho\in[0.8882516,0.8882517].
\end{equation}
Although the entries of an interval matrix need not vary independently for a true Hessian, the interval LDL calculations enclose every true pivot; positivity of their lower endpoints is sufficient.

\subsubsection{Point Bounds and Uniqueness}\label{u:subsec:point}
For the final enclosure, define the exact decimal rationals
\begin{align*}
 \rho_{\mathrm{lo}}&=0.88825169032566246988432630129334211037,\\
 \rho_{\mathrm{hi}}&=0.88825169032566246988432630129334211038.
\end{align*}
At the rational center $c$ in \cref{u:eq:rational-center}, we evaluate $D$ and its gradient by interval arithmetic. To enclose the $a$-derivative of \cref{u:eq:beta-series} at a fixed endpoint $z$, set
\[
 H_n=\sum_{j=0}^{n-1}\frac1{a+j},\qquad
 L_n=H_n+\frac1{n+k}-\log z.
\]
The derivative series is $z^k\sum c_n z^nL_n$. Since
$L_{N+j}\le L_N+j/(a+N)$ and $c_{N+j}\le c_N$, its tail after term $N$ is at most
\begin{equation}\label{u:eq:derivative-tail}
 z^k c_N z^N\left(
 L_N\frac z{1-z}+\frac z{(a+N)(1-z)^2}\right).
\end{equation}
Every term is nonnegative. We use $N=700$ and these explicit remainder bounds to evaluate the gradient.

The resulting validated inequalities, rounded here to weaker bounds, are
\begin{align}
 D(\rho_{\mathrm{lo}},c)-\frac{\|\nabla D(\rho_{\mathrm{lo}},c)\|_2^2}{2m}
 &>1.5345\cdot10^{-39},\label{u:eq:lower-point}\\
 D(\rho_{\mathrm{hi}},c)&<-3.8249\cdot10^{-39},\label{u:eq:upper-point}\\
 D(\overline\rho,c)-0.037\|\nabla D(\overline\rho,c)\|_\infty
       +\frac m2(0.0119)^2
 &>7.0247\cdot10^{-7}.\label{u:eq:boundary-point}
\end{align}
\begin{proof}[Proof of \cref{u:thm:scalar}]
For an $m$-strongly convex function on a convex set,
\begin{equation}\label{u:eq:sc-bound}
 D(\rho,z)\ge D(\rho,c)+\nabla D(\rho,c)\cdot(z-c)
                  +\frac m2\|z-c\|_2^2.
\end{equation}
Completing the square and using \cref{u:eq:lower-point} proves $D(\rho_{\mathrm{lo}},z)>0$ throughout $K$. Every $z\in\partial K$ has $\|z-c\|_2\ge0.012>0.0119$ and $\|z-c\|_1\le0.036<0.037$. Hence \cref{u:eq:boundary-point} proves $D(\overline\rho,z)>0$ on $\partial K$. \cref{u:lem:exterior,u:lem:cover-logic} now prove strict positivity of $D(\overline\rho,\cdot)$ everywhere outside $\operatorname{int}K$.

Since $D(\rho,z)=(A+I)(R_{\mathrm U}(z)-\rho)$ and $A+I>0$, all points outside $\operatorname{int}K$ have $R_{\mathrm U}>\overline\rho$. \cref{u:eq:upper-point} provides a point inside $K$ with $R_{\mathrm U}<\rho_{\mathrm{hi}}<\overline\rho$. Therefore $R_{\mathrm U}$ has a global minimizer in $\operatorname{int}K$. The lower point bound, together with exterior positivity and monotonic decrease of $D$ in $\rho$, gives
\[
 \rho_{\mathrm{lo}}<\min R_{\mathrm U}<\rho_{\mathrm{hi}}.
\]
At its value $\rhostar$, the function $D(\rhostar,\cdot)$ is nonnegative and has zero gradient at the minimizing point. Strong convexity \cref{u:eq:strong-convexity} makes that point unique in $K$, and exterior positivity excludes any other global minimizer.

For localization, strong convexity also gives
\[
 \|z_*-c\|_2\le\frac{\|\nabla D(\rhostar,c)\|_2}{m}.
\]
The gradient is affine in $\rho$, so its norm is bounded by coordinatewise endpoint bounds at $\rho_{\mathrm{lo}},\rho_{\mathrm{hi}}$. The interval calculation gives
\begin{equation}\label{u:eq:location}
 \|z_*-c\|_2<1.346\cdot10^{-36}<10^{-30}.
\end{equation}
This places the minimizer inside \cref{u:eq:smallbox}.

Finally, a second solution of \cref{u:eq:stationary} in that box is impossible. Its zero gradient would, by strict convexity, make it the unique minimum of its $D(\rho,\cdot)$ on $K$. If its $\rho$ exceeded $\rhostar$, then $D(\rho,z_*)<0$, contradicting that minimum being zero. If its $\rho$ were smaller, its zero would give a ratio below the global minimum. At equal $\rho$, strict convexity gives the same point. \end{proof}

The proof uses explicit derivatives of the layer integrals. We collect
them below to complete the specification of the local checks.

\subsection{Derivative Formulas}\label{u:app:derivatives}
For completeness, we collect the derivatives used in the preceding
unrestricted certificate. They follow by differentiating the layer
integrals, including their moving support endpoints.

All derivatives in this subsection are with respect to $(a,p,r)$, and $x_0=p^{1/a}$, $y_0=r^{1/a}$, $k=1-a$. Write
\[
 x_a=-\frac{x_0\log x_0}{a},\qquad
 y_a=-\frac{y_0\log y_0}{a},\qquad
 h_x=(1-x_0)^{-a},\quad h_y=(1-y_0)^{-a}.
\]
The formulas hold in the open region $x_0+y_0<1$, which contains all local verification boxes. With $T_m$ and $J_m$ as in \cref{u:eq:Tm,u:eq:Jm}, the first derivatives are
\begin{align*}
 A_a&=-pT_1(a,1-x_0),&
 A_p&=-\frac{1-x_0^k}{k},& A_r&=0,\\
 I_a&=pT_1(a,y_0)+rT_1(a,x_0)+prJ_1,\\
 I_p&=T_0(a,y_0)+rJ_0,&
 I_r&=T_0(a,x_0)+pJ_0.
\end{align*}
The second derivatives of the seller mean are
\begin{align*}
 A_{aa}&=-pT_2(a,1-x_0)+\frac{x_0(\log x_0)^2}{a},\\
 A_{ap}&=-T_1(a,1-x_0)+\frac{x_a}{p},&
 A_{pp}&=\frac{x_0}{ap^2},
\end{align*}
with all derivatives involving $r$ equal to zero. The derivatives of first-best gains are
\begin{align*}
 I_{aa}&=pT_2(a,y_0)+rT_2(a,x_0)+prJ_2
 -\frac{p y_0(\log y_0)^2}{a}h_y
 -\frac{r x_0(\log x_0)^2}{a}h_x,\\
 I_{ap}&=T_1(a,y_0)+rJ_1-\frac{r x_a}{p}h_x,\\
 I_{ar}&=T_1(a,x_0)+pJ_1-\frac{p y_a}{r}h_y,\\
 I_{pp}&=-\frac{r x_0}{ap^2}h_x,\\
 I_{rr}&=-\frac{p y_0}{ar^2}h_y,\\
 I_{pr}&=J_0.
\end{align*}
To verify the formulas, start from
\[
 \revise{A}{A_a(p)}=1-x_0-pT_0(a,1-x_0),\qquad
 \revise{I}{I_a(p,r)}=pT_0(a,y_0)+rT_0(a,x_0)+prJ_0.
\]
Differentiating with respect to $p$ or $r$, the moving-cutoff terms cancel because $p x_0^{-a}=r y_0^{-a}=1$. Differentiation in $a$ has the same cancellation at first order. Differentiating again leaves the displayed moving-endpoint terms. For example,
\[
 \partial_a I_p=T_1(a,y_0)+rJ_1
     -r[x_0(1-x_0)]^{-a}x_a
 =T_1(a,y_0)+rJ_1-\frac{r x_a}{p}h_x.
\]
The other formulas follow identically, using
$\partial_p x_0=x_0/(ap)$ and $\partial_r y_0=y_0/(ar)$.

For $C_\rho=1-\rho$, the Hessian of $D$ is therefore
\begin{align*}
 D_{aa}&=C_\rho A_{aa}+2pr/k^3-\rho I_{aa},\\
 D_{ap}&=C_\rho A_{ap}+r/k^2-\rho I_{ap},\\
 D_{ar}&=p/k^2-\rho I_{ar},\\
 D_{pp}&=\frac{x_0}{ap^2}(C_\rho+\rho r h_x),\\
 D_{rr}&=\frac{\rho p y_0}{ar^2}h_y,\\
 D_{pr}&=1/k-\rho J_0.
\end{align*}
These are the formulas evaluated by \texttt{verify\_local.py}.

This completes the unrestricted certificate. The MHR problem uses
the same global-to-local logic, with residual and flow enclosures
in place of explicit power-law integrals.

\section{\texorpdfstring{Proof of Theorem~\ref*{thm:certification}}{Proof of the MHR welfare theorem}}\label{app:mhr}
This appendix completes the evaluation of \eqref{eq:w-main}. Multiplier
bounds and residual identities reduce the global problem to a compact
domain. We then prove that the interval certificate gives the sharp
ratio, and justify each of its acceptance tests.

\subsection{Multiplier Bounds}\label{sec:gft}
The global welfare argument must exclude $\revise{\lambda}{\alpha}\downarrow0$ and
$\revise{\lambda}{\alpha}\to\infty$. The following fixed-multiplier gains bound, also
given by \cite[Theorem~4.1]{LQW26}, is sufficient. We record that
result in our parameterization and derive only the elementary estimate
used by the welfare certificate.

For $k=1-a$, define
\[
 M_k(z)=k\int_0^1t^{k-1}e^{zt}\dd t.
\]
Let $\zeta(a)$ be the unique solution in $(a,2a)$ of
\begin{equation}\label{eq:zeta}
 \frac1{M_k(\zeta)}=\frac{\zeta+1-2a}{k}e^{-\zeta}.
\end{equation}

\begin{proposition}[An endpoint lower bound]\label{prop:gft}
For every $0<a<1$, all bounded MHR-buyer instances satisfy $V_{\revise{\lambda}{\alpha}}\ge c(a)G$, where
\[
 c(a)=1/M_k(\zeta(a)).
\]
In particular
\begin{equation}\label{eq:endpoint-bound}
 c(a)\ge\left[1+\frac{k}{1+k}(e^{2a}-1)\right]^{-1}.
\end{equation}
\end{proposition}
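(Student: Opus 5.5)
The first assertion, $V_{\alpha}\ge c(a)G$ with $c(a)=1/M_k(\zeta(a))$, is the fixed-multiplier gains bound of \cite[Theorem~4.1]{LQW26} rewritten in our parameterization. I would only confirm the translation and not reprove it. Their multiplier $\alpha$ corresponds to our $a=\alpha/(1+\alpha)$ and $k=1-a$. Their gains Lagrangian is our $V_{\alpha}=\sup_x\{G(x)+\alpha\Lambda(x)\}$.

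If a self-contained route were wanted, I would proceed as follows.
\begin{itemize}
\item Apply \cref{thm:affine}(ii) with $\kappa=0$ and $\beta=c(a)$. This reduces the claim to positive scalings of the capped exponential buyers \eqref{eq:canonical} paired with arbitrary bounded sellers.
\item Prove the pointwise certificate \eqref{eq:pointwise} with $\kappa=0$ for a suitable nonincreasing service function $p(s)$. The Euler equation \eqref{eq:euler} gives $p$ on the continuous buyer range.
\item Fix the worst endpoint by an equation of the form \eqref{eq:zeta}.
\end{itemize}
I expect that endpoint optimization to be the real obstacle. This is exactly why I would defer to the cited theorem.

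The part to prove here is the elementary estimate \eqref{eq:endpoint-bound}. It is equivalent to
\[
 M_k(\zeta(a))\le 1+\frac{k}{1+k}\bigl(e^{2a}-1\bigr).
\]
First, $M_k(z)=k\int_0^1 t^{k-1}e^{zt}\dd t$ is strictly increasing in $z$, because its integrand is positive and increasing in $z$ for $t\in(0,1]$. Since $\zeta(a)<2a$ by definition, this gives $M_k(\zeta(a))\le M_k(2a)$.

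Second, convexity of $t\mapsto e^{zt}$ on $[0,1]$ gives the chord bound $e^{zt}\le 1+t(e^{z}-1)$. Integrating it against the positive weight $kt^{k-1}$ yields
\[
 M_k(z)\le k\int_0^1t^{k-1}\dd t+(e^z-1)\,k\int_0^1 t^{k}\dd t
 =1+\frac{k}{1+k}(e^z-1).
\]
Setting $z=2a$ and inverting the positive quantities gives \eqref{eq:endpoint-bound}.

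The only point needing care is that $\zeta(a)$ is well defined in $(a,2a)$. This is part of the definition preceding the proposition, and it is also established in \cite{LQW26}. A check of the sign change of the two sides of \eqref{eq:zeta} at $z=a$ and $z=2a$ would confirm it if desired. In any case, the estimate uses only the upper bound $\zeta(a)<2a$ and monotonicity of $M_k$, so no finer information about $\zeta$ is needed.
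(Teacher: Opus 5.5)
Your proposal is correct and matches the paper's proof. Both defer $V_\alpha\ge c(a)G$ and the existence of $\zeta(a)\in(a,2a)$ to Theorem~4.1 of the cited work, and both prove \eqref{eq:endpoint-bound} by integrating the chord bound $e^{zt}\le 1+t(e^z-1)$ against $kt^{k-1}$ and using $\zeta(a)<2a$. The only difference is the order of steps: you use monotonicity of $M_k$ before applying the chord bound at $z=2a$, whereas the paper applies it at $z=\zeta$ and then uses $e^{\zeta}\le e^{2a}$.
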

\begin{proof}[Proof of \cref{prop:gft}]
The universal bound and the existence and uniqueness of $\zeta$ are
\cite[Theorem~4.1 and Section~4.1]{LQW26}, with \revise{their multiplier
$\alpha=\lambda$}{the same multiplier $\alpha$} and parameter $t=k=1-a$. For the additional estimate,
convexity gives $e^{zt}\le1+t(e^z-1)$ for $t\in[0,1]$. Hence
\[
 M_k(\zeta)\le1+\frac{k}{k+1}(e^\zeta-1)
 \le1+\frac{k}{k+1}(e^{2a}-1),
\]
which proves \eqref{eq:endpoint-bound}.
\end{proof}

This estimate excludes the extreme multipliers in
\cref{lem:finite-domain}. We next derive the welfare-specific identities
that control the remaining parameters.

\subsection{Residual Identities and Endpoint Variations}\label{sec:variations}
The MHR residual gives an exact seller optimum at its zero. To certify
the global minimum, we also need its meaning away from a zero and
its variation with the buyer endpoints.

The matching seller \eqref{eq:hard-seller} is meaningful whenever $z<L$, even when $\mathcal R\ne0$. Its optimized dual value, mean and efficient gains are denoted $V,A,G$. All derivatives in this subsection hold the displayed buyer parameters fixed unless a direction is explicitly specified.

\begin{proposition}[Exact residual identity]\label{prop:offroot}
For every such trial ratio,
\begin{equation}\label{eq:offroot}
 (1-\rho)A+V-\rho G=\frac{d^a}{k}\mathcal R,
 \qquad \mathcal R_\rho=-\frac{k(A+G)}{d^a}<0.
\end{equation}
Consequently
\begin{equation}\label{eq:newton}
 \rho-\frac{\mathcal R}{\mathcal R_\rho}=\frac{A+V}{A+G}.
\end{equation}
Thus a Newton update is an explicit seller's dual welfare ratio, not merely a numerical root-finding operation.
\end{proposition}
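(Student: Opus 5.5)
The identity should come from evaluating all three quantities for the hard seller \eqref{eq:hard-seller} at a trial $\rho$ for which $z<L$, using the typewise expression \eqref{eq:typewise}. The argument in \cref{prop:root} used only the construction of $p$ and $F$, not the zero condition, to show that the service rule \eqref{eq:service} maximizes the Lagrangian for this seller. The head scores are $y(s)$, the tail scores equal $U$, the atom score is zero, and randomization occurs only at zero-score ties. Hence $V=k^{-1}\E_S[C_B(p(S))-Sp(S)-a\int_S^\infty p]$ holds for every such $\rho$, provided the highest seller type, at $L$, has zero service and so zero rent. Off a zero, however, $p$ is discontinuous at $z$. On the head $p(z^-)=e^{\ell-a-h}=q$. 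On the tail $p(z^+)=k(U-z)^{-k}\int_z^L b(U-t)^{-a}\dd t$. By \eqref{eq:R},
\[
 p(z^-)-p(z^+)=(U-z)^{-k}\mathcal R .
\]
First I would confirm that $p$ stays nonincreasing and that the Lagrangian-maximization argument survives this jump. If the jump has the wrong sign, I would recheck which side serves the top atom.

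The main step integrates the pointwise deficit $\Phi(s):=C_B(p(s))-sp(s)-a\int_s^\infty p-k[\rho H(s)-\kappa s]$ against $\dd F$. Then $k[(1-\rho)A+V-\rho G]=\int\Phi\,\dd F$, with the atom at zero included. On $[0,z)$ and on $(z,L]$, \eqref{eq:euler} holds. So $\Phi'(s)=[C_B'(p)-s]p'+ap(s)-k\rho H'(s)-k\kappa$, and with $H'=-\overline F_B$ this becomes $k(p-b)+ap+k b-k\kappa\cdot 0$. After bookkeeping, each region should reduce to a term that integrates against $F$ with the boundary term vanishing. The choice of $F$ makes $(\Phi F)'$-type terms telescope: on the tail, $F=(d/(U-s))^a$ gives $aF/F'=U-s$, and on the head $aF/F'=w$. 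I expect every continuous contribution to cancel by integration by parts against $F$. Only the jump of $\Phi$ at $z$, weighted by $F(z)=(d/W)^a$, and the endpoint value $\Phi(L)=0$ should survive. The jump in $\Phi$ comes from $C_B(p)-sp$ across a jump of $p$ within the linear regime $C_B(\eta)=U\eta$ for $\eta\le q$. That jump equals $(U-z)[p(z^-)-p(z^+)]=(U-z)^{a}\mathcal R$, and multiplying by $F(z)=d^a/(U-z)^a$ yields $d^a\mathcal R$. Dividing by $k$ gives \eqref{eq:offroot}. Getting this telescoping and its signs exactly right, including the atom at zero and the boundary at $L$, is the main obstacle. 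I would first redo the computation on the tail alone, where everything is explicit.

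For $\mathcal R_\rho$, I would differentiate the identity in $\rho$ using an envelope argument. The seller $F$ depends on $\rho$, but $F$ is the worst-case seller, or equivalently the left side is stationary in the seller data at this allocation. A cleaner route uses the direct form $(1-\rho)A+V-\rho G$ with the seller frozen: the derivative in $\rho$ is $-(A+G)$, and the dependence through $F$ and $p$ vanishes by the optimality and equality conditions above. If the envelope step resists, I would differentiate \eqref{eq:R} directly, tracking $z$, $L$ and $b$, and compare. Finally, \eqref{eq:newton} follows algebraically: $\rho-\mathcal R/\mathcal R_\rho=\rho+[(1-\rho)A+V-\rho G]/(A+G)=(A+V)/(A+G)$.
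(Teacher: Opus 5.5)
There is a genuine gap in the formula for $\mathcal R_\rho$ away from a zero, which is the only place \eqref{eq:newton} has content. An envelope argument needs the hard seller to minimize the deficit and the typewise equality to hold on $[0,L]$. Both hold only when $\mathcal R=0$: off a zero the deficit on the head is $W^a\mathcal R\neq0$, with $W=U-z$. Indeed, combining the two identities you want to prove,
\[
 \frac{\mathrm d}{\mathrm d\rho}\Bigl[(1-\rho)A+V-\rho G\Bigr]
 =\frac{\mathrm d}{\mathrm d\rho}\Bigl[\frac{d^a}{k}\mathcal R\Bigr]
 =-(A+G)-\frac ak\,d^{a-1}L_\rho\,\mathcal R,
 \qquad L_\rho=\frac{L+H(L)}{b(L)}>0.
\]
So the dependence through $F$ and $p$ does not vanish unless $\mathcal R=0$. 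Your ``cleaner route'' therefore proves the formula only at roots, where the Newton step is trivial.

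Your fallback of differentiating \eqref{eq:R} directly is the right route, but it leaves out its one nontrivial step: the sensitivity $z_\rho=s_\rho(h)$ of the head flow \eqref{eq:head}. The missing idea is the integrating factor $P=F(b-p)$ with $p=e^{\ell-a-y}$. For $f=(y-s)/D$, using $D=ke^{y+a-\ell}(b-p)$ and $F'=aF/w$, one checks $P_y+Pf_s=0$ and $Pf_\rho=Ff(1-\overline F_B)$. Integrating the variational equation from $s_\rho(0)=0$ then gives $F(z)(b(z)-q)z_\rho=\int_0^zF(1-\overline F_B)\dd s$. Combine this with $b(L)L_\rho=L+H(L)$, obtained from $\kappa L=\rho H(L)$. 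Differentiating \eqref{eq:R} and multiplying by $d^a/k$ yields $\int_0^LF(1-\overline F_B)\dd s-L-H(L)$. This equals $-(A+G)$, because $A=L-\int_0^LF$ and $G=H(L)+\int_0^LF\overline F_B$.

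For the first identity, your claim that \cref{prop:root} did not use the zero condition is inaccurate: it used $p(z)=q$ to keep tail service inside the top atom. For $\rho$ above the root, $\mathcal R<0$ and $p(z^+)=q-W^{-k}\mathcal R>q$. Then the rule jumps up at $z$ and is not monotone. Near $z$ it serves non-atom buyers against sellers of score $U$, so it falls strictly below the Lagrangian optimum. Moreover, $p(z^+)$ leaves the linear regime of $C_B$, so \eqref{eq:euler} fails on part of the tail and your jump computation breaks. This is exactly the case that supplies the explicit upper witnesses.

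The fix is to not rely on optimality of $p$ at all. This seller's scores are $0$ at the atom, $y(s)$ on the head and $U$ on the tail. Use $V=k^{-1}\E(u-v)_+$; the tail contributes nothing to it, whatever $p$ is there. Equivalently, replace $C_B(p)$ by $Up$ on the tail. Then $\Phi'=[C_B'(p)-s]p'-kp+kb=0$ on each phase; your displayed $\Phi'$ drops the $-p$ coming from $-sp$ and gives the $\kappa$ term the wrong sign. Next, $\Phi(L)=0$ forces $\Phi\equiv0$ on the tail, and the jump at $z$ gives $\Phi\equiv W^a\mathcal R$ on the head, including the atom. Hence $\int\Phi\dd F=W^a\mathcal R\,(d/W)^a=d^a\mathcal R$ for either sign of $\mathcal R$.

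So repaired, your piecewise-constant deficit argument is a clean alternative to the paper's direct evaluation. The paper computes $V=\int_0^zFb\dd s+qF(z)W/k$ from $(Fpw)'=-Fp'-kFb$ and subtracts $\kappa A-\rho G=-\int_0^LFb\dd s$. Your version makes visible why only the mismatch at the join survives.
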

\begin{proof}[Proof of \cref{prop:offroot}]
On the head put $p(s)=e^{\ell-a-y(s)}$. Equation \eqref{eq:head} gives
\[
 wp'=k(p-b),\qquad F'=aF/w,
 \qquad (Fpw)'=-Fp'-kFb.
\]
The buyer score stop-loss at seller score $y$ is $p-kq$ on the head, and zero at the tail score $U$. Integrating the head, including the atom at zero, therefore gives
\[
 V=\frac{a}{k}qF(z)-\frac1k\int_0^zFp'\dd s
   =\int_0^zFb\dd s+\frac{qF(z)W}{k}.
\]
Also $A=L-\int_0^LF$ and $G=H(L)+\int_0^LFQ$, so
$\kappa A-\rho G=-\int_0^LFb$. Subtract the tail and use
$F(s)=d^a(U-s)^{-a}$ there. This proves the first identity.

For the derivative use $p=e^{\ell-a-y}$ and the integrating factor
$P=F(b-p)$ along the score flow. If $f=(y-s)/D$, direct differentiation gives
\[
 P_y+Pf_s=0,\qquad Pf_\rho=Ff(1-Q).
\]
The initial cost is fixed, hence
\begin{equation}\label{eq:adjoint-rho}
 F(z)(b(z)-q)z_\rho=\int_0^zF(s)(1-\revise{Q(s)}{\overline F_B(s)})\dd s.
\end{equation}
Implicit differentiation of the endpoint gives $b(L)L_\rho=L+H(L)$.
Differentiating \eqref{eq:R}, multiplying by $d^a/k$, and using \eqref{eq:adjoint-rho} yields
\[
 \frac{d^a}{k}\mathcal R_\rho=
 \int_0^LF(s)(1-\revise{Q(s)}{\overline F_B(s)})\dd s-L-H(L)=-(A+G).
\]
The first identity then rearranges to \eqref{eq:newton}.
\end{proof}

There are two useful normalizations, which must not be confused away from a zero:
\begin{equation}\label{eq:potentials}
 \mathcal F=\mathcal R/q,\qquad \Psi=\mathcal R/[qW^k].
\end{equation}
Both have the same zeros and signs as $\mathcal R$. We exclude stationary points of $\mathcal F$ globally and certify strict convexity of $\Psi$ locally. At a zero their stationary conditions agree, but an arbitrary signed derivative of one is not an off-root derivative test for the other.

\begin{proposition}[Two algebraic directional tests]\label{prop:endpoint-tests}
Let $p_0=\revise{F_\rho(0)}{F_{S,\rho}(0)}$, the hard seller's zero-cost atom. Translating both buyer endpoints, that is, differentiating in $\ell$ at fixed $T=U-\ell$, gives
\begin{equation}\label{eq:floor-balance}
 \mathcal F_\ell\big|_T=\frac{k}{qd^a}\{\kappa-p_0(1-e^{\ell-a})\}.
\end{equation}
When $\ell<z<L<U$, the cap derivative is
\begin{align}
 \mathcal F_U&=W^{-a}\mathcal H,\label{eq:cap-test}\\
 \mathcal H&=W+1-2a-k\rho e^W
 +\frac{\kappa}{q}\left[\left(\frac Wd\right)^a\{d+k(L+1)\}-W-k\right].\notag
\end{align}
\end{proposition}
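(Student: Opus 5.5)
Both identities will come from the exact off-root identity of \cref{prop:offroot}, $(1-\rho)A+V-\rho G=(d^a/k)\mathcal R$, rather than from differentiating the flow \eqref{eq:head} directly. The plan is to differentiate the left side along each direction, using the fact that the hard seller \eqref{eq:hard-seller} is built from the typewise equality \eqref{eq:pointwise}. By an envelope argument, first-order changes caused by the induced motion of the seller (through $z$, $L$, and $y(\cdot)$) can then be dropped, leaving only the explicit buyer-dependence of $V$ and $G$ at the fixed seller.

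\emph{Floor test.} Translating $\ell$ at fixed $T=U-\ell$ keeps $q=e^{-T}$ fixed, so $\mathcal F_\ell|_T=\mathcal R_\ell|_T/q$. First I will compute $\partial_\ell$ of the pointwise deficit $C_B(p)-sp-a\int_s p-k[\rho H_B(s)-\kappa s]$ at a fixed service rule $p$ and fixed seller $F$.

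From \eqref{eq:CB}, a translation raises $C_B(\eta)$ by $\eta$. Likewise $H_B(s)=\E(B-s)_+$ rises by $\Prb(B>s)$, which equals $1$ for $s<\ell$ and $\overline F_B(s)$ otherwise. After integrating against $F$, and using that $p$ is optimal pointwise so its own variation contributes nothing, I expect
\[
 \partial_\ell\bigl[(1-\rho)A+V-\rho G\bigr]
 =\frac1k\,\E_S\bigl[p(S)-k\rho\Prb(B>S)\bigr]
\]
in the normalization of \eqref{eq:typewise}. The hard seller's atom $p_0$ at $S=0$ carries service $p(0)=e^{\ell-a}$, while the continuous part is governed by the tie/Euler relation \eqref{eq:euler}. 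Collapsing the integral then uses the same manipulations as in \cref{prop:offroot}: $(Fpw)'=-Fp'-kFb$ on the head, $F=d^a(U-s)^{-a}$ on the tail, and $b=\kappa+\rho\overline F_B$. The target is the expression $\kappa-p_0(1-e^{\ell-a})$; the extra factor $d^a/k$ is accounted for by differentiating $d^a\mathcal R/k$, since $d=U-L$ has the $\ell$-derivative forced by $\kappa L=\rho H(L)$.

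\emph{Cap test.} For $\mathcal F_U$ I will differentiate \eqref{eq:R} directly. Writing $\mathcal F=\mathcal R/q$ with $q=e^{\ell-U}$, we have $\partial_U(1/q)=1/q$, which yields the term $\mathcal R/q$. The remaining contributions are:
\begin{itemize}
\item the head endpoint $z=s(h)$, where $h=U-a$, so $z_U=s_y(h)$ plus the sensitivity of the flow to $U$ through $b$ (i.e.\ through $q$ and $\overline F_B$ on $s<U$);
\item the term $(U-z)^k$;
\item the kernel $(U-s)^{-a}$ in the integral;
\item $L$, through $H$'s dependence on $q$.
\end{itemize}
The flow sensitivity is handled as in \eqref{eq:adjoint-rho}, with the adjoint $P=F(b-p)$ turning it into an explicit integral against $F$. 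Integrating $(U-s)^{-a-1}$ and $b$ in closed form on $[z,L]$ (where $\overline F_B=e^{\ell-s}$, since $\ell<z$) produces the factors $(W/d)^a$, $d+k(L+1)$ and $e^W$. Factoring out $W^{-a}$ should then give $\mathcal H$.

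\emph{Main obstacle.} The hardest step is the flow sensitivity in $U$. The head ODE depends on $U$ only through $h$ and $q$ in $b$, and the adjoint integral must collapse to elementary terms; I expect the identity $\kappa L=\rho H(L)$ and the hypothesis $\ell<z$ to be exactly what makes it collapse. If the adjoint route becomes messy, the fallback is to use \eqref{eq:offroot} with the explicit $V,A,G$ of the seller and differentiate those, applying the envelope argument at fixed $F$.
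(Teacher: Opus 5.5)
\paragraph{Gap in the floor test.} Identity \eqref{eq:floor-balance} has to hold off the zero of $\mathcal R$. It is an off-root derivative test, used to exclude nonpositive stationary points of $\mathcal F$ on certificate boxes where $\mathcal R\neq0$. Your envelope step is valid only at a zero.

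Off a zero, the tail branch of \eqref{eq:service} gives $p(z^+)=q-\mathcal R/W^k\neq q=p(z^-)$, so the rule jumps at $z$. Its typewise deficit is then not constant on the hard seller's support: for $\mathcal R>0$ it equals $W^a\mathcal R$ on the head and $0$ on the tail. This is consistent with $(1-\rho)A+V-\rho G=(d^a/k)\mathcal R$, since $F(z)=(d/W)^a$. So the hard seller is no longer a best response to the rule, and its induced motion contributes first-order terms proportional to $\mathcal R$ that cannot be dropped.

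Concretely, carry out your collapse with \eqref{eq:service} at frozen $(F,p)$, using $(p/k-b)F'=\tfrac ak(pF)'$ on both head and tail. The jump of $pF$ at $z$ leaves an extra $\frac{a d^a}{kW}\mathcal R$. Differentiating $d^a\mathcal R/k$ leaves $\frac{(d^a)_\ell}{k}\mathcal R$, with $d_\ell=\kappa/b(L)$. Your route therefore yields
\[
 \frac{d^a}{k}\mathcal R_\ell\big|_T=\kappa-p_0(1-e^{\ell-a})
 +\frac{a d^a}{k}\Bigl(\frac1W-\frac{\kappa}{b(L)\,d}\Bigr)\mathcal R .
\]
This agrees with \eqref{eq:floor-balance} only on $\{\mathcal R=0\}$, a codimension-one set that does not determine $\mathcal R_\ell$ elsewhere. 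Even at a zero, the justification should freeze the rule rather than the seller. The function $V$ at a fixed hard seller has a kink under translation, because the top atom's score $U$ ties with the tail sellers' score $U$.

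\paragraph{How to repair it.} Differentiate \eqref{eq:R} directly in the translation direction too. At fixed $q$ one has $b_\ell+b'=0$, so integrating the $(U-s)^{-a-1}$ term by parts removes the tail integral. Together with $b(L)(1-L_\ell)=\kappa$, this gives $\frac{d^a}{k}\mathcal R_\ell|_T=P(z)(z_\ell-1)+\kappa$ with $P=F(b-p)$.

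In this direction the adjoint is genuinely needed, because $b$ moves with $\ell$ along the head. One has $P(z)z_\ell=\int_0^zF(b+b')\,ds+F(z)qw(z)/k$. Integrating $[F(b-p)-Fpw/k]'=F(b+b')$ from its initial value $p_0(1-e^{\ell-a})$ then gives $P(z)(z_\ell-1)=-p_0(1-e^{\ell-a})$, which is \eqref{eq:floor-balance}.

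\paragraph{The cap test.} Your plan here matches the paper: differentiate \eqref{eq:R} directly and use $\mathcal F_U=(\mathcal R_U+\mathcal R)/q$. Your ``main obstacle'' does not arise. For $s<U$, $\overline F_B(s)=\min\{1,e^{\ell-s}\}$, so $b$ depends on neither $U$ nor $q$, and the head depends on $U$ only through its terminal time $h=U-a$. Thus $z_U=(W-a)/D(h,z)$. Since $D(h,z)=k(b(z)-q)/q$, this cancels against $\partial_z\mathcal R=k(b(z)-q)W^{-a}$, and no adjoint is needed.
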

\begin{proof}[Proof of \cref{prop:endpoint-tests}]
In the translation direction $q$ is fixed. Put $c(s)=b(s)+b'(s)$, equal to $1$ below $\ell$ and $\kappa$ above it. The same integrating factor as before gives, at fixed score,
\[
 P(s)s_\ell\big|_y=\int_0^sFc\dd t.
\]
Since the terminal score changes at unit speed,
\[
 P(z)z_\ell=\int_0^zFc\dd s+F(z)qw(z)/k.
\]
The identity $[F(b-p)-Fpw/k]'=Fc$, with initial value $p_0(1-e^{\ell-a})$, implies
$P(z)(z_\ell-1)=-p_0(1-e^{\ell-a})$.
Further, $b(L)L_\ell=\rho \revise{Q(L)}{\overline F_B(L)}$, so $b(L)(1-L_\ell)=\kappa$.
Differentiating $\mathcal R$ and integrating the tail derivative by parts gives
\[
 \frac{d^a}{k}\mathcal R_\ell\big|_T=P(z)(z_\ell-1)+\kappa,
\]
which proves \eqref{eq:floor-balance}.

For the cap, $z_U=(W-a)/D(h,z)$ and $b(L)L_U=\rho q$. Substitution into the differentiated residual cancels the head sensitivity:
\[
 \mathcal R_U=q(1-2a)W^{-a}-k\rho qd^{-a}
      +ka\int_z^Lb(s)(U-s)^{-a-1}\dd s.
\]
Since $q_U=-q$, $\mathcal F_U=(\mathcal R_U+\mathcal R)/q$. On the stated phase $b'=-b+\kappa$, hence
\[
 a\int_z^Lb(U-s)^{-a-1}\dd s-\int_z^Lb(U-s)^{-a}\dd s
 =[b(U-s)^{-a}]_z^L-\frac{\kappa}{k}(W^k-d^k).
\]
Finally $b(L)-\rho q=\kappa(L+1)$ and $b(z)/q=\kappa/q+\rho e^W$ reduce this expression to \eqref{eq:cap-test}.
\end{proof}

The deficit identity and the two directional tests supply the
analytic inputs for the global exclusions in the next subsection.

\subsection{Global Certification}\label{sec:global}
We now prove that the finite computation stated below closes the entire parameter domain, rather than only a neighborhood of a numerical stationary point.

\begin{lemma}[Finite-domain reduction]\label{lem:finite-domain}
Let $\bar\rho=911390/10^6$. Write $\fl(x)$ for the nearest IEEE binary64 value to $x$. Every proposed universal lower bound below $\bar\rho$ is valid if it is valid on
\begin{equation}\label{eq:C}
 \mathcal C=\{a_-\le a\le a_+,\ 0\le\ell\le a\le U\le29\},
 \qquad a_-=\fl(.0875),\quad a_+=\fl(.985).
\end{equation}
 boundary values of $\revise{r}{R_{\mathrm M}}$ are limits from the open \revise{canonical}{reduced parameter} domain. The profile has a continuous extension to this compact set.
\end{lemma}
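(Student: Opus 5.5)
The reduction excludes the complement of $\mathcal C$ in the reduced parameter domain $\{0<a<1,\ 0\le\ell\le a\le U\}$ by showing that there $R_{\mathrm M}(a,\ell,U)\ge\bar\rho$. We then check that $R_{\mathrm M}$ extends continuously to the compact set $\mathcal C$. Since $R_{\mathrm M}$ is the seller infimum of $(A+V_\alpha)/(A+G)$ by \cref{prop:root}, a lower bound on $R_{\mathrm M}$ can be obtained by discarding the endowment. Because $\rho<1$ and $V_\alpha\le G$, this gives $(A+V)/(A+G)\ge V/G$. Any fixed-multiplier gains bound $V_\alpha\ge c(a)G$ therefore yields $R_{\mathrm M}\ge c(a)$.

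\emph{Small and large multipliers.} For $a<a_-$, \cref{prop:gft} gives
\[
 R_{\mathrm M}\ge c(a)\ge\Bigl[1+\tfrac{k}{1+k}(e^{2a}-1)\Bigr]^{-1},
\]
which is decreasing in $a$. Evaluated at $a=.0875$ it gives about $1/(1+.478\cdot .191)\approx .916>\bar\rho$. This would be certified by one interval evaluation. For $a>a_+$ the bound \eqref{eq:endpoint-bound} is useless, since $c(a)\to e^{-2}$. Here I would keep the endowment term. As $k\to0$, the layer dual value $pr/k$ blows up; in the MHR setting, the analogous fact is that $V_\alpha$ involves the factor $1/k$ in \eqref{eq:score-V}. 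My first attempt would use the exact formula $c(a)=1/M_k(\zeta(a))$ with $\zeta\in(a,2a)$: as $k\downarrow0$ we have $M_k(z)\to e^{z}\cdot$(mass concentrating at $t=0$)$\to1$, so in fact $c(a)\to1$. The crude estimate loses this. So I would bound $M_k(\zeta)\le M_k(2a)\le 1+k(e^{2a}-1)/(1+k)$ directly and note that this tends to $1$ as $k\to0$. At $a=.985$, $k=.015$, this gives $\approx1+.0148\cdot6.2\approx1.09$, which is \emph{not} enough. Hence I expect the true argument to use the exact $M_k$ monotone bound near $\zeta\approx a$ together with the welfare identity \eqref{eq:offroot}, rather than \eqref{eq:endpoint-bound}, and this step is the main obstacle. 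A cleaner route may be to apply \eqref{eq:newton} at the trial seller: if $c(a)$ alone is insufficient, add $A>0$ using the moment balance \eqref{eq:moment-balance} near the top of the $a$ range.

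\emph{Large caps.} For $U>29$, set $q=e^{\ell-U}\le e^{-29+a}$, so the top atom is negligible. The buyer is then within $O(U e^{-U})$, in value and score, of the uncapped shifted exponential $\ell+E$. I would show that $\mathcal R$ (through $z$, $L$, and the tail integral) converges, so that $R_{\mathrm M}$ converges uniformly in $(a,\ell)\in[a_-,a_+]^2$ as $U\to\infty$. Since the $U=29$ slice is part of $\mathcal C$, any value below $\bar\rho$ at $U>29$ would have to lie within an explicit error $\varepsilon(29)\ll10^{-6}$ of the value at $U=29$. It is cleaner, however, to show monotonicity: by \cref{prop:endpoint-tests}, $\mathcal F_U=W^{-a}\mathcal H$, and for large $W$ the term $-k\rho e^{W}$ dominates, so $\mathcal H<0$. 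At a zero of $\mathcal R$ this means the residual decreases in $U$, so $R_{\mathrm M}$ increases in $U$ beyond $29$, and the region $U>29$ is dominated by its boundary face. I would verify $\mathcal H<0$ for $W\ge29-a-z$ by bounding the bracketed term by $(\kappa/q)\cdot O(W^a L)$; since $q=e^{-T}$, this must be compared with $k\rho e^{W}$ using $W\approx U-z$, which requires $z$ to be bounded, true because $z\le L$ and $L$ is bounded by $\kappa L=\rho H(L)\le\rho(1+\ell)$.

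\emph{Continuity.} Finally, the boundary faces $\ell=0$, $\ell=a$, and $U=a$ ($h=0$, so $z=0$) are included. The head ODE \eqref{eq:head} has a denominator bounded below uniformly on $\mathcal C$, $L$ depends continuously on the parameters, and the tail integral in \eqref{eq:R} is continuous with integrable endpoint behaviour. Hence $\mathcal R$ is jointly continuous in $(a,\ell,U,\rho)$ on $\mathcal C\times[0,1]$. Since $\mathcal R_\rho<0$ by \eqref{eq:offroot} (extended by continuity), the unique zero $R_{\mathrm M}$ is continuous by the implicit-function or sign-change argument, and equals the limit from the open domain.
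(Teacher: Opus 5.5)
Your large-cap step fails as written, for two independent reasons. First, the implication is reversed. By \eqref{eq:offroot}, $\mathcal R_\rho<0$, and $\mathcal R$ is positive below its zero. Hence $R_{\mathrm M}$ moves in $U$ in the same direction as $\mathcal R$ does (equivalently, as $\mathcal F_U$ at a zero), so a residual decreasing in $U$ would make $R_{\mathrm M}$ \emph{decrease}. Second, $-k\rho e^W$ does not dominate $\mathcal H$. The bracket carries the factor $\kappa/q=\kappa e^{U-\ell}$, which is of the same exponential order as $e^W=e^{U-z}$, and your bound $(\kappa/q)\,O(W^aL)$ is larger still. At a zero, $qW^k=k\int_z^Lb(U-s)^{-a}\dd s$ makes $L-z$ exponentially small, and the order-$e^U$ terms cancel. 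Inserting the zero condition into the formula for $\mathcal R_U$ in the proof of \cref{prop:endpoint-tests}, and using $(U-s)^{-1}\ge W^{-1}$ on $[z,L]$, gives $\mathcal R_U\ge kq\,d^{-a}(F(z)-\rho)$. So the true sign is governed by $F(z)-\rho$, where $F(z)=(d/W)^a$ is the hard seller's head mass. Controlling this needs a sharp bound on $L$, and yours, $L\le\rho(1+\ell)/\kappa\approx20$, is useless at $U=29$.

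This is the missing idea. The paper observes that any root $R_{\mathrm M}\le\bar\rho$ forces $L<2.5$, because $(1-\bar\rho)2.5>\bar\rho e^{a_+-2.5}$. Hence $F(z)\ge1-L/U>1-2.5/29>\bar\rho$. The one-sided cap derivative of the fixed-seller witness ratio is then $\frac{q}{A+G}(F(z)-R_{\mathrm M})>0$. Lowering the cap therefore strictly lowers the ratio, and any minimum below $\bar\rho$ on $[29,U_0]$ lies at $U=29$. Your uniform-convergence alternative yields only $\rho-\varepsilon(29)$, not exact validity.

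Your large-multiplier ``obstacle'' is an arithmetic slip. The bound \eqref{eq:endpoint-bound} tends to $1$, not $e^{-2}$, as $a\to1$. Your own estimate $M_k\le1.09$ at $a=.985$ already gives $c(a)\ge1/1.09\approx.917>\bar\rho$. The paper simply uses $k/(1+k)\le1-a_+$ and $e^{2a}\le e^2$ to get $c(a)>.91254$ for all $a\ge a_+$. Also, $V_\alpha\le G$ is false in general: a deterministic buyer at $1$ with a seller at $0$ gives $V_\alpha=1+\alpha$. What you actually need is only $c(a)\le1$, so that $A+V_\alpha\ge A+cG\ge c(A+G)$.

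Your continuity argument rests on another false claim: the head denominator is not bounded below uniformly on $\mathcal C$. Below the floor its lower bound is $k(e^{a-\ell}-1)$, which vanishes on the face $\ell=a$. There $D(y,s)=k(e^y-1)$ near the origin, so \eqref{eq:head} is singular at its initial point. Joint continuity of $\mathcal R$ up to that face is therefore not established.

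The paper deliberately avoids the head equation there, as follows.
\begin{itemize}
\item It couples all capped-exponential buyers through one exponential variable, so their values and scores move continuously in $L^1$; the only moving jump, at the cap, has vanishing mass.
\item It rescales a seller score as $s+(a'/a)(v-s)$ when $a$ changes.
\item It uses that the optimized Lagrangian is Lipschitz in the $L^1$ norm of its coefficients over a fixed monotone allocation class.
\item Using $A+G\ge1-e^{-a_-}$, it takes the optimal seller at one triple as a trial seller at another, which gives a uniform modulus for $R_{\mathrm M}$.
\end{itemize}
The paper also checks that the limiting buyers produced by localization inherit the fixed-multiplier inequalities for each fixed seller. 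These are the deterministic components and the infinite caps. Your proposal does not address this.
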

\begin{proof}[Proof of \cref{lem:finite-domain}]
By \cref{prop:gft}, welfare ratios at a fixed multiplier are at least $c(a)$, because $c(a)\le1$. Formula \eqref{eq:endpoint-bound} gives
\[
 c(a)\ge\frac2{1+e^{2a_-}}>.91272>\bar\rho\quad(a\le a_-),
\]
and
\[
 c(a)\ge[1+(1-a_+)(e^2-1)]^{-1}>.91254>\bar\rho
 \quad(a\ge a_+).
\]
The coarse displayed decimals are conservative summaries. Only the comparisons with $\bar\rho$ are needed; the verifier checks these directly with outward arithmetic.

At a root $\revise{r}{R_{\mathrm M}(a,\ell,U)}\le\bar\rho$, the matching seller has $L<2.5$. Indeed $\ell\le a_+<2.5$, and
\[
 (1-\bar\rho)2.5-\bar\rho e^{a_+-2.5}>.021>0
\]
puts the unique endpoint below $2.5$. Its head mass satisfies
\[
 F(z)=\left(\frac{U-L}{U-z}\right)^a\ge1-L/U.
\]
If $U\ge29$, then $F(z)>1-2.5/29>\bar\rho$. Keep this seller and the multiplier fixed and lower only the buyer cap. On the head, the left derivative of $V_{\revise{\lambda}{\alpha}}$ is $q$ per unit seller mass; the tail has score exactly the old $U$ and contributes zero on either side of this downward change. Thus
\[
 \partial_U^-V_{\revise{\lambda}{\alpha}}=qF(z),\quad \partial_UG=q,\quad
 \partial_U^-\frac{A+V_{\revise{\lambda}{\alpha}}}{A+G}
 =\frac q{A+G}(F(z)-\revise{r}{R_{\mathrm M}(a,\ell,U)})>0.
\]
Lowering the cap strictly improves this witness; optimizing the seller can only improve further. On any interval $[29,U_0]$ containing a value below $\bar\rho$, a minimum must therefore occur at $29$.

For completeness, continuity at the finite parameter boundaries need not be deduced from a singular head equation. On a compact parameter region with $a$ bounded away from zero and one and $U\ge a_->0$, every interior minimizing seller from \eqref{eq:hard-seller} has $S\le U$ and normalized score $v\le U$. Changing $a$ to $a'$ changes its score to
$s+(a'/a)(v-s)$, \revise{uniformly continuously}{continuously, uniformly over the compact parameter region}. The \revise{canonical buyer}{capped-exponential buyer} scores, coupled by the same exponential variable, are uniformly continuous in $L^1$ as $(a,\ell,U)$ varies: the only moving jump is at the cap and has vanishing probability mass under a small parameter change. Buyer values have the same property. The supremum of the linear allocation functionals is Lipschitz in the $L^1$ norm of their coefficients, even if the changed multiplier requires ironing anew. The feasible monotone quantile allocation class can be kept fixed: averaging allocations within a buyer atom preserves interims of other agents and feasibility, and does not change the objective because the buyer coefficient is constant on that atom. Hence the score bounds just obtained, together with
\[
 A+G\ge\E B\ge1-e^{-a_-}>0
\]
give a uniform modulus for each matching witness's ratio. Using the optimizer at either of two interior triples as a trial seller at the other proves the same modulus for $\revise{r}{R_{\mathrm M}}$. It extends continuously to the finite closure. Bounds on these limiting buyers follow against any fixed finite seller by continuity of its Lagrangian, and then by seller approximation.

This justifies the compact-interval minimum used above and handles $\ell=a$, $U=a$ and $\ell=0$ without assuming an unproved differentiable boundary optimizer. Infinite-cap components are limits of finite caps for each fixed seller: the \revise{canonical buyer}{capped-exponential buyer} values and scores converge in $L^1$. Universal fixed-multiplier inequalities therefore pass to them. This completes the domain reduction.
\end{proof}

The relevant interior region and ratio interval are
\begin{align}
 \mathcal B&=[\fl(.19),\fl(.39)]\times[\fl(.002),\fl(.16)]
                 \times[\fl(.85),\fl(1.65)],\label{eq:B}\\
 I_0&=[.9113893680,.9113893683].\label{eq:I0}
\end{align}
Ratio endpoints in this manuscript and the verifier are exact rationals, not rounded decimal inputs. Define $c$ and $\mathcal N$ by applying $\fl$ to every coordinate below:
\begin{align}
 c&=(.28523064649549684,.04945697286361709,1.1746687514380845),\label{eq:center}\\
 \mathcal N_a&=[.28521064649549682,.28525064649549686],\notag\\
 \mathcal N_\ell&=[.04943697286361709,.049476972863617089],\label{eq:N}\\
 \mathcal N_U&=[1.1745887514380844,1.1747487514380845].\notag
\end{align}

\begin{proposition}[Finite certificate assertions]\label{prop:machine}
The elementary interval algorithms specified in \cref{sec:verification} and supplied with their complete inputs in the accompanying certificate repository prove the following assertions.
\begin{enumerate}[label=(\alph*)]
\item At target $\bar\rho$, every open-domain point of $\mathcal C\setminus\mathcal B$ has positive residual; boundary limits have ratio at least $\bar\rho$.
\item Uniformly for $\rho\in I_0$, every nonpositive stationary point of $\mathcal F$ in $\mathcal B$ lies in $\mathcal N$.
\item For all $\theta\in\mathcal N$ and $\rho\in[.9113893679,.9113893684]$, with $D_0=\diag(1,1,10)$,
\begin{equation}\label{eq:strong-convex}
 D_0\nabla_\theta^2\Psi\,D_0\succeq m\revise{I}{I_3},\qquad m>.59908114.
\end{equation}
\item The local and center signs are
\begin{align*}
 \min_{\mathcal N}\Psi(\theta;.9113893680)&>3.8564462\cdot10^{-10},\\
 \Psi(c;.9113893683)&<-5.3617976\cdot10^{-10},\\
 \min_{\mathcal N}\Psi(\theta;.911389368124)&>2.9087841\cdot10^{-12},\\
 \Psi(c;.911389368127)&<-2.1984636\cdot10^{-12}.
\end{align*}
\end{enumerate}
\end{proposition}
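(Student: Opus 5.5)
The plan is to prove each assertion by an interval branch-and-bound whose acceptance tests rest only on analytic facts already established: the residual identities of \cref{prop:offroot}, the directional tests of \cref{prop:endpoint-tests}, and the sign characterization of \cref{prop:root}.

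The first step is to build rigorous enclosures of the residual $\mathcal R(\theta;\rho)$ on parameter boxes. The head flow \eqref{eq:head} will be integrated by a validated ODE scheme (Taylor models or a Lipschitz/Picard enclosure). This is legitimate because the denominator $D$ is bounded below by $k(e^{a-\ell}-1)$ and the vector field is locally Lipschitz. The flow gives an interval for $z=s(h)$, together with its sensitivities in $a,\ell,U$ and $\rho$. Next, $L$ will be enclosed by interval bisection on the monotone equation $\kappa L=\rho H(L)$. The tail integral in \eqref{eq:R} is elementary on each phase of $b$, since $b$ is piecewise constant plus a multiple of $e^{\ell-s}$. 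It can therefore be bounded by explicit incomplete-gamma-type primitives, or by monotone quadrature with remainder.

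Assertion (a) follows by covering $\mathcal C\setminus\mathcal B$ with boxes on which the interval enclosure of $\mathcal R(\cdot;\bar\rho)$ is positive. By \cref{prop:root}, positivity at $\bar\rho$ means $R_{\mathrm M}>\bar\rho$. Boxes touching the boundary faces $\ell=0$, $\ell=a$ and $U=a$ will instead use the continuous extension from \cref{lem:finite-domain}. On those boxes the ratio of a fixed trial seller is bounded directly, and the floor test \eqref{eq:floor-balance} supplies a sign showing that $\ell=0$ is never minimal.

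For assertion (b), the interior box $\mathcal B$ will be subdivided. Each leaf will be discarded by one of three tests, applied uniformly for $\rho\in I_0$. The first test is $\mathcal F>0$. The second is a strictly signed translation derivative $\mathcal F_\ell|_T$, read from the algebraic formula \eqref{eq:floor-balance}, which needs only $p_0$. The third is a strictly signed cap derivative \eqref{eq:cap-test}, used where $\ell<z<L<U$ holds on the leaf. The remaining leaves must lie in $\mathcal N$. A leaf on which a directional derivative of $\mathcal F$ is strictly signed contains no stationary point, so the surviving stationary points with $\mathcal F\le0$ are confined to $\mathcal N$.

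For (c), I would differentiate the flow twice with respect to $\theta$ through variational equations and combine the results with the explicit tail derivatives. This produces an interval Hessian of $\Psi=\mathcal R/(qW^k)$ on subboxes of $\mathcal N$. After scaling by $D_0$, positive definiteness would then be checked by interval LDL pivots, as in the unrestricted appendix.

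Assertion (d) requires only point and box evaluations at high precision. The minimum lower bounds over $\mathcal N$ can be obtained from the center value and gradient together with the strong convexity in (c): complete the square, as in \eqref{u:eq:sc-bound}, in the $D_0$-scaled norm.

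I expect the main obstacle to be the second-order validated enclosure of the head ODE near $y=0$ and across the buyer floor $s=\ell$. There the right side of \eqref{eq:head} is only piecewise smooth in $s$, so the second variational equations pick up jump terms. I would first handle this by splitting the flow at the crossing time of $s=\ell$ and applying event-time sensitivity formulas. If that fails to give tight enough Hessian bounds for $m>.599$, I would fall back to finite differences of certified first derivatives on a finer subdivision of $\mathcal N$.
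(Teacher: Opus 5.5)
Your treatment of (b)--(d) follows the paper's structure: subdivision of $\mathcal B$ with value and directional-derivative exclusions, second-order parameter jets with the floor crossing treated separately, and completing the square at the center in the $D_0$-scaled norm. One caveat there: your finite-difference fallback for (c) would not certify a Hessian lower bound. The genuine gap is in (a).

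\textbf{The singular face $\ell=a$.} You enclose $\mathcal R$ by validated integration of the score-coordinate head \eqref{eq:head}, and you justify this by $D\ge k(e^{a-\ell}-1)$. But $\mathcal C\setminus\mathcal B$ contains the entire face $\ell=a$ and its neighborhood. There $b(0)=1$, so $D(0,0)=0$, and $s_y=(y-s)/D$ is $0/0$ at the initial point. Your bound therefore gives no Lipschitz, Taylor or variational control that is uniform as $\ell\uparrow a$. The paper confines its direct evaluator to $\mathcal B$ for exactly this reason: its predictor needs $2a-\ell+\log(1-a)>0$, which fails near $\ell=a$. The faces $\ell=0$ and $U=a$ are harmless.

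\textbf{The boundary fallback.} Your treatment of boundary boxes goes in the wrong direction.
\begin{itemize}
\item By \cref{prop:root}, $R_{\mathrm M}$ is an infimum over sellers, so a fixed trial seller bounds $R_{\mathrm M}$ only from \emph{above}. Assertion (a) needs a lower bound valid against every seller, and only positivity of the residual supplies that.
\item The continuity in \cref{lem:finite-domain} is qualitative, with no explicit modulus, so it cannot certify a margin on a box.
\item Showing via \eqref{eq:floor-balance} that $\ell=0$ is not minimal is a different claim from (a), and the sign of that expression is not fixed a priori.
\end{itemize}

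\textbf{The missing idea.} The paper changes to cost coordinates. With $w=y-s$, $u=(1+w)e^{-w}$ and $\chi(u)=e^{-w}$, the head becomes \eqref{eq:polygon-ode}. This equation starts regularly at $u(0)=1$ for every $\ell\le a$, including $\ell=a$, where $u'(0)=0$. Its vector field has one-sided Lipschitz constant at most one because $\chi$ is increasing, so comparison applies. The argument then runs as follows.
\begin{enumerate}
\item Replace the convex $\chi$ by a secant-polygon majorant. This gives an explicitly solvable lower solution.
\item Monotonicity of the end curve turns that lower solution into a box-wide lower bound $z_-$ on the head join.
\item The residual increases with the join, since $\partial_z\mathcal R=k[b(z)-q](U-z)^{-a}>0$.
\item Monotonicity of $L$ in the buyer endpoints gives an upper bound $L_+$, and the forcing is monotone in the parameters.
\item A convex trapezoid estimate bounds the tail integral.
\end{enumerate}
Together these give the closed-box bound \eqref{eq:box-bound}. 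It needs no flow sensitivities and is valid on boxes containing the boundary faces. Its positivity gives $R_{\mathrm M}>\bar\rho$ at every open-domain point of the box, and the boundary-limit half of (a) then follows by continuity.
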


The finite assertions imply \cref{thm:certification}. We first
deduce that implication, then verify the assertions in
\cref{sec:verification}.

\begin{proof}[Proof of \cref{thm:certification}]
First take $\rho_0=.9113893680$. If $\mathcal F$ is negative somewhere in $\mathcal B$, its negative minimum is interior: (a), continuity and the sign characterization give nonnegativity on $\partial\mathcal B$. The function is continuously differentiable there. Indeed the score vector field is continuous across $s=\ell$, the crossing is transverse, and its one-sided variational solutions match with no jump; the stop-loss and endpoint equation are $C^1$. Moving an endpoint through the buyer floor also preserves the first derivatives of the integral in $\mathcal R$. Thus the negative minimum is stationary. Assertion (b) puts it in $\mathcal N$, contradicting (d). The finite-domain reduction proves a global ratio lower bound of $\rho_0$.

The second sign in (d) and the explicit seller of \cref{prop:offroot} give a global upper bound below $.9113893683$. By (a) and compactness there is a minimizer inside $\mathcal B$. At its ratio $\rho_{\mathrm M}$ the residual is nonnegative everywhere and zero at the minimizer, so its $\theta$ gradient vanishes. Assertion (b) therefore confines every minimizer to $\mathcal N$. The two refined signs in (d) now give \eqref{eq:sharp-bracket}.

At a zero the stationary conditions for $\mathcal R,\mathcal F,\Psi$ are equivalent. Strict convexity (c) gives at most one stationary point of $\Psi$ at each fixed $\rho$. There cannot be stationary zeros at two ratios $\rho_1<\rho_2$ in the specified interval: $\mathcal R_\rho<0$ throughout this local region, so a zero at $\rho_1$ becomes strictly negative at $\rho_2$, contradicting that a stationary zero of a strictly convex function is its minimum on $\mathcal N$. Existence has already been proved by global minimization. This proves \eqref{eq:stationary-system} and uniqueness. The matched seller's weak-duality upper bound equals the universal lower bound at this minimum, so this prior pair attains the second-best welfare ratio. Its normalized seller scores are bounded; therefore the same weak-* compactness argument used for budget duality applies to this continuous seller as well, and a budget-feasible optimal allocation exists.
\end{proof}

The global argument reduces the MHR claim to finitely many enclosing
tests. The next subsection proves the interval assertions used above.

\subsection{Interval Verification}\label{sec:verification}
This subsection supplies the mathematical acceptance rules for \cref{prop:machine}. All numerical sources and finite data accompany the paper in the certificate repository; \cref{sec:reproduction} gives the replay procedure. A program's search labels or floating-point proposals are never trusted signs.

\begin{proof}[Proof of \cref{prop:machine}]
We verify assertions (a)--(d) using the following outward enclosures.

\subsubsection{Outward Arithmetic}
Basic interval operations are the corresponding binary64 round-to-nearest operations enlarged by one representable number on both sides. Multiplication takes the four endpoint products; division rejects a denominator containing zero. IEEE subnormals and infinities use the same adjacency operation. Nonfinite final enclosures are rejected. Contraction and fast-math are disabled.

The elementary functions do not assume a system-library error bound. Put $\epsilon=2^{-53}$ and $l_2=\fl(.6931471805599453)$. A directed 200-bit MPFR calculation verifies $0<\log2-l_2<\epsilon/4$. To enclose $e^x$, choose an integer $n$, form $r=x-nl_2$ and check $|r|<.35$, then evaluate the degree-18 exponential Taylor polynomial by Horner's rule. Before scaling by $2^n$, enlarge its value by
\[
 \epsilon(100+2|n|).
\]
The range-reduction error is at most $\epsilon(1+|n|)$, the polynomial roundoff is less than $80\epsilon$ by the usual product bound $\gamma_j=j\epsilon/(1-j\epsilon)$, and the omitted tail is below $10^{-25}$. These fit strictly inside the allowance. Integer-exponent scaling is exact except for representational rounding, covered by outward adjacency.

For logarithms scale exactly to $x=2^ny$, $.707<y<1.415$, and set $t=(y-1)/(y+1)$, checking $|t|<.172$. Evaluate
\[
 2\sum_{j=0}^{12}\frac{t^{2j+1}}{2j+1}+nl_2
\]
with allowance $\epsilon(100+3|n|)$. The series tail is below $2\cdot10^{-22}$. Sterbenz's lemma makes $y-1$ exact; a $\gamma_{64}$ bound times an absolute sum below $.35$ covers the polynomial operations, perturbing $t$ costs less than $4\epsilon$, and the last product and addition cost less than $(2|n|+2)\epsilon$. Powers use these enclosing logarithm and exponential operations. The small-argument exprel series is enclosed by its next-tail exponential bound. Range checks and outward interval operations supply the remaining safeguards.

\subsubsection{Exterior Comparison}
In cost coordinates put $u=(1+w)e^{-w}$ and $\revise{\mathcal E(u)}{\chi(u)}=e^{-w}$. Equation \eqref{eq:head} becomes
\begin{equation}\label{eq:polygon-ode}
 u'=u-a\revise{\mathcal E(u)}{\chi(u)}-k e^{a-\ell+s}b(s),\qquad u(0)=1.
\end{equation}
One has $\revise{\mathcal E'(u)}{\chi'(u)}=1/w$ and $\revise{\mathcal E''(u)}{\chi''(u)}=e^w/w^3>0$. Secants through upper enclosures of this convex function give a polygon $\revise{\mathcal E_+}{\chi_+}\ge\revise{\mathcal E}{\chi}$. Replacing $\revise{\mathcal E}{\chi}$ by $\revise{\mathcal E_+}{\chi_+}$ gives a lower solution $\underline u\le u$. Comparison remains valid at $u=1$: the true vector field has one-sided Lipschitz constant at most one, since $\revise{\mathcal E}{\chi}$ is increasing.

The head ends at $u(s)=(1+U-a-s)e^{-(U-a-s)}$. The left curve decreases and the right curve increases; a lower head gives a lower bound $z_-$ on the intersection. To justify the first monotonicity, the cost equation is
\[
 ww'=k e^{a-\ell+s+w}b(s)-k-w.
\]
It starts with $w'>0$. At a first zero of $w'$, its derivative is positive because $b+b'\ge\kappa>0$; hence $w'$ cannot cross to negative values. Thus $(1+w)e^{-w}$ decreases. On a parameter box the forcing in \eqref{eq:polygon-ode} decreases with $a$ and $\ell$; use the smallest of these parameters in the forcing, the largest $a$ before $\revise{\mathcal E_+}{\chi_+}$, and the smallest cap distance. All coefficient roundings go in the lower-solution direction. Each polygon segment solves a linear ODE with constant and exponential forcing. Knot crossings are proposed numerically but accepted only after an enclosing one-sided flow inequality; downward jumps to the next knot preserve the subsolution.

The seller endpoint increases with both buyer endpoints, so solve its monotone equation conservatively at their upper values, obtaining $L_+$. The residual increases with its head join:
\[
 \partial_z\mathcal R=k[b(z)-q](U-z)^{-a}>0.
\]
Thus $z_-\ge L_+$ is immediately positive. Otherwise, if $L_+<U_l$, a box-wide lower bound is
\begin{align}\label{eq:box-bound}
 &e^{\ell_l-U_h}\min_{p\in\{1-a_l,1-a_h\}}(U_l-z_-)^p\notag\\[-2mm]
 &\quad-(1-a_l)\int_{z_-}^{L_+}
 [1-\rho+\rho\min(1,e^{\ell_h-s})]
 \max_{p\in\{a_l,a_h\}}(U_l-s)^{-p}\dd s.
\end{align}
Each branch of this integrand is log-convex and hence convex, separately below and above $\ell_h$; the maximum preserves convexity. A composite trapezoid rule split at $\ell_h$ is an upper integral bound. The program uses 32 subintervals per nonempty piece. A strictly positive outward-rounded version of \eqref{eq:box-bound} accepts the box; failure merely requires subdivision.

The exterior proof is a complete preorder binary tree on the initial box containing \eqref{eq:C}. At each nonterminal node the longest normalized side, with scales $(.2,.2,1)$ and a fixed tie rule, is bisected at its binary midpoint. A leaf is either verified by the preceding bound or wholly contained in $\mathcal B$. The replay checks the entire tree, validates every containment, and rejects missing or trailing bytes. At exact target $911390/10^6$ it verifies
\[
 47{,}273{,}387\text{ nodes},\qquad23{,}636{,}694\text{ leaves},\qquad0\text{ failures}.
\]
The four scalar bounds in \cref{lem:finite-domain} are checked by the same arithmetic before the tree is read. This proves assertion (a).

\subsubsection{Head Flow and Integral Derivatives}
The interior replay uses three complementary enclosures. Their acceptance rules are identical: a positive value, a nonzero derivative, or a valid stationary-point exclusion. An exception is never an exclusion.

The direct evaluator integrates \eqref{eq:head} in $y=(U-a)t$, $0\le t\le1$, together with its first variational equations. The physical invariant $0\le s\le y$ is imposed. A Picard tube $\mathcal T$ is verified from
\[
 s_0+[0,\Delta t]f([t,t+\Delta t],\mathcal T)\subseteq\mathcal T.
\]
It bounds a second-order Taylor step with a third-order remainder on smooth pieces. On steps crossing $s=\ell$, where the vector field is continuous but its derivatives change, it uses a first-order step with an enclosed second-order remainder, taking the hull of the two derivative branches. Variational equations have bounded measurable coefficients and no saltation jump because the vector field itself agrees at the crossing. For $v'=\revise{\alpha(t)}{c_1(t)}v+\revise{\beta(t)}{c_0(t)}$, where $c_1,c_0$ are generic interval-bounded coefficient functions, interval coefficient bounds give
\[
 v(t+\Delta t)\in e^{\Delta t\revise{[\alpha]}{[c_1]}}v(t)
   +\Delta t\,\operatorname{exprel}(\Delta t\revise{[\alpha]}{[c_1]})\revise{[\beta]}{[c_0]},
\]
by variation of constants. Exprel is increasing, so endpoint enclosures suffice. Intersecting with the parameter-center mean-value enclosure reduces wrapping.

The initial tube predictor uses $0\le s_y\le1$ on $\mathcal B$. Here is a \revise{uniform justification}{parameter-uniform proof} for that auxiliary bound. Since $w=y-s\ge0$ and $b\ge \revise{Q}{\overline F_B}$, both phases give
\[
 D-w\ge k(e^{w+a-\ell}-1)-w
 \ge 2a-\ell+\log(1-a)>0
\]
on $\mathcal B$; the last minimum is at $a=.19,\ell=.16$ and exceeds $.009$. The middle inequality is the unconstrained minimum over $w$. The subsequent Picard inclusion is checked independently, so the predictor is not itself an acceptance test.

The higher-order evaluator removes the floor crossing analytically. Before it, $b=1$ and the head is linear. With $r_0=a-\ell$ and $M(t)=(1-e^{-t})^{1/k}$, the score $y_0$ where cost reaches $\ell$ is the unique root of
\begin{equation}\label{eq:crossing}
 y_0-\ell-\int_0^{y_0}\frac{M(t+r_0)}{M(y_0+r_0)}\dd t=0.
\end{equation}
This follows by solving the linear ODE with integrating factor $M(y+r_0)$ and integrating by parts. Its derivative in $y_0$ is positive at the root since the cost derivative is positive. Verified bracketing and interval Newton enclose the root, and implicit differentiation encloses its first two parameter derivatives. A monotone-corner shortcut is used only when its sign condition $a+y_0-\ell<-\log a$ is verified; differentiating the integrand then proves monotonicity. Otherwise full-box bracketing is used.

The smooth post-floor flow is propagated with parameter jets through order two. A Taylor coefficient of a time series is obtained recursively from the ODE; a Picard tube for the entire jet system bounds the next coefficient and thus the remainder. Each step differentiates the flow with respect to an independent initial state and composes it with the preceding parameter jet by the chain rule. A fixed physical-score grid, with two short parameter-dependent endpoint bridges, avoids repeatedly transporting the cap parameter through the interior. A normalized-time frame is an alternative valid enclosure.

The endpoint $L$ is bracketed using $\kappa L-\rho H(L)$, whose derivative is $b(L)\ge\kappa$. It is then differentiated implicitly. Tail integrals and their parameter derivatives use composite Simpson enclosures. If $c_4$ encloses the fourth Taylor coefficient on an interval of width $h$, the integral minus its Simpson value lies in $-h^5c_4/120$. This follows from the fixed-sign Peano kernel and the ordinary Simpson remainder. Auxiliary midpoint integrals use the analogous second-derivative bound. All parameter boxes using the exponential tail derivatives first verify $\ell<z<L<U$ as appropriate; other boxes use the general exterior/value enclosures.

The scalar atom evaluator also integrates $J_y=a/D$ and uses the exact pre-floor integral to enclose
\[
 p_0=\left(\frac dW\right)^a
 \left(\frac{1-e^{\ell-a}}{1-e^{\ell-a-y_0}}\right)^{a/k}
 \exp\left(-\int_{y_0}^{U-a}\frac aD\dd y\right).
\]
It applies \eqref{eq:floor-balance} and \eqref{eq:cap-test}, with centered mean-value enclosures of the algebraic expression. The direct sensitivity evaluator can also compose the cap expression with the enclosed head sensitivities.

\subsubsection{Interior Coverage and Stationary Exclusions}
The interior proof contains 64,175 binary paths. Each path bisects the longest normalized side of \eqref{eq:B}, with scales $(.2,.158,.8)$, and the ratio interval is always \eqref{eq:I0}. The read-only audit verifies that the paths are distinct, prefix-free and satisfy the exact integer Kraft identity
\[
 \sum_{p}2^{d-|p|}=2^d,\qquad d=51.
\]
This proves complete coverage, including boundaries. There are 128,349 nodes in the corresponding full binary tree.

Each box is accepted only if an enclosing calculation proves $\mathcal F>0$, a gradient component is strictly signed, one of \eqref{eq:floor-balance}--\eqref{eq:cap-test} is strictly signed, or every stationary point is confined to $\mathcal N$. For the last alternative let $H$ enclose the Hessian and $g_c$ the center gradient. Every stationary point $c+\delta$ satisfies
\[
 \delta=-Cg_c+(I-CH)\delta
\]
for any fixed numerical matrix $C$. Iterative interval intersection either makes a coordinate empty, excluding every stationary point, or bounds their locations inside $\mathcal N$. No claim that $C$ is an exact inverse is needed. Containment of the entire box in $\mathcal N$ is also sufficient.

The method label in a saved leaf is only a hint. Independent replay may use a different valid enclosure, but must verify one of these same conclusions afresh. In particular it never accepts a negative value alone. Complete replay gives 64,175 checked leaves and zero failures. This proves (b); neither a partial tree nor a successful local calculation alone would suffice.

\subsubsection{Local Convexity and Sign Bounds}
For $\rho\in[.9113893679,.9113893684]$ the jet evaluator bounds the Hessian of $\Psi$ on \eqref{eq:N}. Coarser outward decimal summaries of the checked intervals are
\[
 \begin{pmatrix}
 [5.15202,5.29099]&[-.43729,-.31332]&[-.21640,-.17455]\\
 *&[1.56517,1.90290]&[-.05289,-.03354]\\
 *&*&[.04932,.07989]
 \end{pmatrix}.
\]
The full intervals, retained in the accompanying log, give the scaled Gershgorin lower bound $m>.59908114$ in \eqref{eq:strong-convex}. Positive interval $LDL^{\mathsf T}$ pivots provide an additional consistency check.

For any center $c\in\mathcal N$, integrating the Hessian along the segment to $\theta$ and minimizing the resulting quadratic gives
\begin{equation}\label{eq:convex-bound}
 \min_{\theta\in\mathcal N}\Psi(\theta;\rho)
 \ge\Psi(c;\rho)-\frac{\|D_0\nabla\Psi(c;\rho)\|^2}{2m}.
\end{equation}
The exact rational targets $.9113893680$ and $.911389368124$ give the two positive lower bounds in (d). Direct center evaluation at $.9113893683$ and $.911389368127$ gives the negative upper bounds there. All four checks are included in the local verifier; their numerical values in \cref{prop:machine} are rounded conservatively from its full enclosures. This proves (c)--(d).
\end{proof}

Both constants now have complete analytic acceptance rules. The final
appendix identifies the certificate inputs and the commands for replay.

\section{Certificate Data and Reproduction}\label{sec:reproduction}
The preceding appendices give the acceptance rules. Here we describe
their finite inputs and how to replay them.

The numerical part of the proof uses the accompanying certificate
repository, \texttt{sharp-welfare-certificates}, available at
\begin{center}\small\ttfamily
https://github.com/SolaQin/sharp-welfare-certificates
\end{center}
It contains all subdivision data, the verifier sources for both the
primary C++ and the optional Python interval implementations, and the
recorded acceptance runs. The logs record acceptance runs; the
mathematical proof objects are the finite input data together with the
independently enclosing tests. The manuscript and its bibliography
compile without downloading or executing this repository.

\subsection{Obtaining the Certificate Data}
Clone the repository or download it as an archive, then work from its
root directory. The finite inputs are committed directly, so no
preliminary extraction step is needed. The single entry point
\texttt{verify.py} compiles the verifiers, expands the compressed
exterior proof tree, and runs every check in both certificates.

The replay requires Python 3 with NumPy, a C++17 compiler, and MPFR/GMP.
The optional Python interval implementations additionally use
mpmath~\cite{mpmath}. Binary64 interval routines require IEEE
round-to-nearest, without fast-math or fused contraction.
Compiler include and library paths may be added for the local
installation.

\par\bigskip\noindent\begin{minipage}{\linewidth}
\subsection{Unrestricted-Prior Certificate}
From the repository root, change into \texttt{unrestricted} and run:
\begin{lstlisting}
python3 verify_geometry.py
python3 verify_export.py
c++ -O3 -std=c++17 -ffp-contract=off verify_cover_mpfr.cpp \
    -lmpfr -lgmp -o cover
c++ -O3 -std=c++17 -ffp-contract=off verify_local_point_mpfr.cpp \
    -lmpfr -lgmp -o local
./cover cover_boxes.bin 0
./local
\end{lstlisting}
\end{minipage}
The first program checks exact subdivision coverage. The second checks
that the binary MPFR input is a byte-for-byte export of that same
subdivision. The C++ programs check every nonlocal box and all exterior,
local Hessian, point, boundary, and localization inequalities.
Alternative Python interval implementations are also supplied.

\par\bigskip\noindent\begin{minipage}{\linewidth}
\subsection{MHR Certificate}
From the repository root, change into \texttt{mhr} and run:
\begin{lstlisting}
c++ -O3 -std=c++17 -ffp-contract=off -pthread \
    replay_residual_cover.cpp -lmpfr -lgmp -o replay
c++ -O3 -std=c++17 -ffp-contract=off -pthread \
    verify_candidate_neighborhood.cpp -lmpfr -lgmp -o local
c++ -O3 -std=c++17 -ffp-contract=off -pthread \
    verify_exterior_cover.cpp -lmpfr -lgmp -o exterior
python3 audit_interior.py interior.leaves
./replay interior.leaves 8
./local
gzip -dk exterior_cover.bin.gz
./exterior exterior_cover.bin .05 8
\end{lstlisting}
\end{minipage}
The final integer selects the number of worker threads and does not
change the acceptance tests. The exterior header uses little-endian
binary64; a different-endian implementation must decode it explicitly.
Subdivision paths and midpoint conventions are specified in the
sources. Every interval call must return a valid enclosure, and every
leaf must pass. The local program's
\texttt{PASS LOCAL CHECKS ONLY} refers only to the local component;
the global claim also needs both covers.

The proof is computer assisted. It does not rely on an optimizer's
termination, floating-point sign guesses, or a formal proof assistant.
All analytic reductions, acceptance inequalities, and remainder bounds
appear in the manuscript; the repository supplies their finite inputs
and implementations.

\subsection{Reproducing Every Certificate at Once}
The per-component commands above mirror the paper's proof structure.
To replay both certificates in a single run from the repository root,
use the bundled entry point, whose worker count affects only the MHR
checks:
\begin{lstlisting}
python3 verify.py --jobs 8
\end{lstlisting}
A successful run ends with
\texttt{FULL CERTIFICATE REPLAY PASSED: all.} and writes each program's
output to \texttt{logs/}. Passing \texttt{--check-only} instead checks
subdivision structure and binary export consistency without compiling
or evaluating any interval, and reports that the interval proofs have
not run.
\end{document}